\definecolor{refkey}{rgb}{0,0,1}
\definecolor{labelkey}{rgb}{1,0,0}
\numberwithin{equation}{chapter}
\theoremstyle{plain}
\newtheorem{theorem}{Theorem}
\newtheorem{proposition}[theorem]{Proposition}
\newtheorem{lemma}[theorem]{Lemma}
\newtheorem{corollary}[theorem]{Corollary}
\theoremstyle{definition}
\newtheorem{remark}[theorem]{Remark}
\newtheorem*{note}{Note}
\theoremstyle{plain}
\numberwithin{theorem}{chapter}
\theoremstyle{definition}
\newtoks{\thehRemark}
\newtheorem*{Remark}{\the\thehRemark}
\newenvironment{claim}[1][{\textup{(\theequation)}}]{\refstepcounter{equation}\vglue10pt
\begin{trivlist}
\item[{\hskip\labelsep#1}]}{\vglue10pt\end{trivlist}}
\newenvironment{phantomequation}[1][]{\refstepcounter{equation}}{}
\newcommand{\N}{\mathsf{N}}
\newcommand{\W}{\mathsf{W}}
\newcommand{\Dirac}{\mathsf{Dirac}}
\newcommand{\dist}{\mathsf{dist}}
\newcommand{\3}{{|\!|\!|}}
\newcommand\1{|\hskip-1.5pt|}
\newcommand{\sL}{\mathscr{L}}
\newcommand{\cC}{\mathcal{C}}
\newcommand{\cH}{\mathcal{H}}
\newcommand{\cL}{\mathcal{L}}
\newcommand{\cN}{\mathcal{N}}
\newcommand{\Spec}{\operatorname{Spec}}
\newcommand{\sign}{\operatorname{sign}}
\newcommand{\Tr}{\operatorname{Tr}}
\newcommand{\Ran}{\operatorname{Ran}}
\newcommand{\mes}{\operatorname{mes}}
\newcommand{\supp}{\operatorname{supp}}
\newcommand{\blangle}{{\boldsymbol{\langle}}}
\newcommand{\brangle}{{\boldsymbol{\rangle}}}
\newcommand{\TF}{\mathsf{TF}}
\newcommand{\D}{\mathsf{D}}
\newcommand{\sfH}{\mathsf{H}}
\newcommand{\x}{\mathsf{x}}
\newcommand{\y}{\mathsf{y}}
\newcommand{\bR}{\mathbb{R}}
\newcommand{\bC}{\mathbb{C}}
\newcommand{\bS}{\mathbb{S}}
\newcommand{\sC}{\mathscr{C}}
\newcommand{\sH}{\mathscr{H}}
\title{%
Strong Scott Conjecture\thanks{\emph{2010 Mathematics Subject Classification}: 35P20, 81V70.}\thanks{\emph{Key words and phrases}: electronic density,  Scott conjecture, relativistic Schr\"odinger operator.}}
\author{Victor Ivrii\thanks{This research was supported in part by National Science and Engineering  Research Council (Canada) Discovery Grant  RGPIN 13827}}
\begin{document}

\maketitle

\begin{abstract}
In heavy atoms  and molecules, on the distances $ a \ll Z^{-1/2}$ from one of the nuclei (with a charge $Z_m$), we prove that the ground state electronic density $\rho_\Psi (x)$ is approximated in $\sL^p$-norm  by the ground state electronic density for a single atom in the model with no interactions between electrons. We cover both non-relativistic and   relativistic cases.
\end{abstract}

\chapter{Introduction}
\label{sect-1}
This paper is a result of my rethinking of one rather old but still remarkable article \cite{IaLS}, which I discovered recently and in which  the asymptotic of the  ground state electronic density on the distances $O(Z^{-1})$ from the nuclei is derived. While there are rather precise results about ground state energy, excessive charges and ionization energy asymptotics\footnote{\label{foot-1} See f.e. \cite{monsterbook}, (in particular, Chapter~\ref{monsterbook-sect-25}, Volume V),  and \cite{ivrii:relativistic-1} in non-relativistic and relativistic cases respectively.}   there are relatively few rigorous results about related electronic density. 

The purpose of this paper is to provide a more refined asymptotics (with an error estimate in $\sL^p$-norms) and on the rather small distances   from the nuclei; now we also cover relativistic case. Larger distances are covered by \cite{ivrii:el-den-2} which combines microlocal and functional-analytical methods rather than uses purely functional-analytical  methods as this paper.

Multielectron Hamiltonian is given by
\begin{gather}
\mathsf{H}=\mathsf{H}_N:=   \sum_{1\le j\le N} H _{V,x_j}+\sum_{1\le j<k\le N}\frac{\mathsf{e}^2}{|x_j-x_k|}
\label{eqn-1.1}\\
\shortintertext{on}
\mathfrak{H}= \bigwedge_{1\le n\le N} \mathscr{H}, \qquad \mathscr{H}=\mathscr{L}^2 (\mathbb{R}^3, \mathbb{C}^q)\simeq 
\mathscr{L}^2 (\mathbb{R}^3\times \{1,\ldots,q\},\mathbb{C})
\label{eqn-1.2}\\
\shortintertext{with}
H_V =T - \mathsf{e} V(x),
\label{eqn-1.3}
\end{gather}
describing $N$ same type particles (electrons) in the external field with the scalar potential $-V$  and repulsing one another according to the Coulomb law; $\mathsf{e}$ is the charge of the electron, $T$ is an \emph{operator of the kinetic energy}.

In the non-relativistic framework this operator is defined as
\begin{align}
&T= \frac{1}{2\mu} (-i\hbar \nabla)^2.
\label{eqn-1.4}
\end{align}
In the relativistic framework this operator is defined as
\begin{align}
&T= \Bigl(c^2 (-i\hbar\nabla)^2 +\mu ^2 c^4\Bigr)^{\frac{1}{2}}-\mu^2c^4 .
\label{eqn-1.5}
\end{align}
Here
\begin{gather}
V(x)=\sum _{1\le m\le M}  \frac{Z_m \mathsf{e}}{|x-\mathsf{y}_m|}
\label{eqn-1.6}\\
\shortintertext{and}
d=\min _{1\le m<m'\le M}|\mathsf{y}_m-\mathsf{y}_{m'}|>0.
\label{eqn-1.7}
\end{gather}
where $Z_m\mathsf{e}>0$ and $\mathsf{y}_m$ are charges and locations of nuclei, $\mu$ is  the mass of the electron.

It is well-known that the non-relativistic operator is always semibounded from below. On the other hand, it
 is also well-known \cite{Herbst, Lieb-Yau} that it this is not necessarily true for the relativistic operator:
 
\begin{remark}\label{rem-1.1}
One particle relativistic  operator is semibounded from below if and only if
\begin{equation}
Z_m \beta \le \frac{2}{\pi}\qquad \forall m=1,\ldots, M;\qquad \beta\coloneqq \frac{\mathsf{e}^2}{\hbar c}.
\label{eqn-1.8}
\end{equation}
\end{remark}
We will assume (\ref{eqn-1.8}), sometimes replacing it by a strict inequality:
\begin{equation}
Z_m \beta \le \frac{2}{\pi}-\epsilon\qquad \forall m=1,\ldots, M;\qquad \beta\coloneqq \frac{\mathsf{e}^2}{\hbar c}.
\label{eqn-1.9}
\end{equation}
We also assume that $d\ge CZ^{-1}$. 

\begin{remark}\label{rem-1.2}
\begin{enumerate}[label=(\roman*), wide, labelindent=0pt]
\item\label{rem-1.2-i}
In the non-relativistic theory by scaling with respect to the spatial and energy variables we can make $\hbar=\mathsf{e}=\mu=1$ while $Z_m$ are preserved.

\item\label{rem-1.2-ii}
In the relativistic theory by scaling with respect to the spatial and energy variables we can make $\hbar=\mathsf{e}=1$, $\mu=\frac{1}{2}$  while $\beta$ and $Z_m$ are preserved.

\item\label{rem-1.2-iii}
In the corresponding single nuclei theory without electron-to-electron interaction (which is used in the proof of Strong Scott conjecture) only $\beta Z$ is preserved and we can assume that $Z=1$.\end{enumerate}
From now on we assume that such rescaling was already made and we are free to use letters $\hbar$, $\mu $ and $c$ for other notations.
\end{remark}

\begin{theorem}\label{thm-1.3}
Assume that
\begin{gather}
\min_{1\le m< m' \le M} |\y_{m}-\y_{m'}|\ge Z^{-1/3+\sigma}
\label{eqn-1.10}\\
\intertext{with $\sigma\ge 0$, and let $U\in \sL^\infty$ such that}
\supp(U)\subset B(\y_m, a),\quad a\le Z^{-1/2-\varkappa},\quad |U|\le 1.
\label{eqn-1.11}
\end{gather}
with arbitrarily small $\varkappa>0$.

Then
\begin{multline}
|\int U\bigl(\rho_\Psi-\rho_{m,\beta}\bigr)\,dx|\\
\le C F^{1/2} \bigl( (Za)^{3/2}\|\langle Z(x-\y_m)\rangle^{-3/2}U\|_{\sL^1}\bigr)^{1/2}+C G
\label{eqn-1.12}
\end{multline}
with
\begin{gather}
F=   \bigl(Z^{13/6 -\delta} a^{-1/2}+  Z^{4}a^{3}\bigr)Z^{\varkappa}
\label{eqn-1.13}\\
\shortintertext{and}
G=  \bigl(Z^{7/6-\delta}a^{3/2} + Z^2 a^3\bigr)Z^{\varkappa},
\label{eqn-1.14}\\
\shortintertext{where}
\rho_{m,\beta} (x)= q Z_m^3  \bar{\rho} _{Z_m\beta} (Z_m (x-\y_m)),
\label{eqn-1.15}
\end{gather}
$\delta =\delta(\sigma)>0$ for $\sigma>0$ and $\delta=0$ for $\sigma=0$, $\bar{\rho}_\beta (x)=\bar{e}_\beta (x,x,0)$, 
$\bar{e}_\beta (x,y,\tau)$ is the Schwartz kernel of the spectral projector $\uptheta (\tau - H_{V^0})$ for toy-model operator  
$H_{V^0}\coloneqq H_{\beta, V^0}=T_\beta -|x|^{-1}$ in $\sL^2(\bR^3,\bC)$ with
\begin{gather}
T_\beta= \left\{\begin{aligned}
&\Bigl( -\beta^{-2}\Delta  +\frac{1}{4}\beta^{-4} \Bigr)^{\frac{1}{2}}-\frac{1}{2}\beta^{-2}&&\beta >0,\\
&-\Delta   &&\beta=0.
\end{aligned}\right.
\label{eqn-1.16}
\end{gather}
Here and below $\langle x\rangle =(|x|^2+1)^{1/2}$.
\end{theorem}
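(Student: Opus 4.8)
The plan is to compare the many-body density $\rho_\Psi$ to the toy-model density $\rho_{m,\beta}$ through an intermediate one-body density, following the functional-analytic strategy of \cite{IaLS} but with quantitative control. First I would use the variational characterization of the ground state: testing $\mathsf{H}$ against a trial state built from $\Psi$ but with the potential near $\y_m$ perturbed by $U$ yields a lower bound on $\int U\rho_\Psi\,dx$, and testing the perturbed Hamiltonian's ground state against $\Psi$ yields the matching upper bound. The net effect is that $\int U\rho_\Psi\,dx$ is sandwiched between the $U$-derivatives of the ground state energy $\mathsf{E}(V)$ and $\mathsf{E}(V+tU)$ at $t=0$, so that $|\int U(\rho_\Psi-\rho_{m,\beta})\,dx|$ is bounded by the corresponding quantity for the one-body comparison problem plus a remainder measuring the failure of the mean-field/Thomas--Fermi approximation to commute with localization near $\y_m$.

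Next I would localize near $\y_m$ at scale $a$. On $B(\y_m,a)$ with $a\le Z^{-1/2-\varkappa}$ the dominant part of the one-body Hamiltonian is $H_{V,x}$ with $V$ replaced by its singular term $Z_m\e/|x-\y_m|$; the contributions of the other nuclei and of the screened self-consistent (Thomas--Fermi) potential are smooth on this ball and, after rescaling $x\mapsto Z_m(x-\y_m)$ as in Remark~\ref{rem-1.2}\ref{rem-1.2-iii}, contribute only lower-order corrections. This is where the scaling $\rho_{m,\beta}(x)=qZ_m^3\bar\rho_{Z_m\beta}(Z_m(x-\y_m))$ enters: the rescaled toy-model operator is exactly $H_{\beta,V^0}=T_\beta-|x|^{-1}$ with $T_\beta$ as in \eqref{eqn-1.16}, and $\bar\rho_{Z_m\beta}$ is its ground-state density. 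The error from freezing the external potential at the Coulomb singularity, together with the semiclassical remainder in approximating the spectral projector $\uptheta(\tau-H_V)$ by $\uptheta(\tau-H_{V^0})$ (with $\tau$ the Fermi level, which lies at distance $\sim Z^{4/3}$ from the relevant spectrum), produces the two terms assembled into $F$ and $G$: the $Z^{13/6-\delta}a^{-1/2}$ and $Z^{7/6-\delta}a^{3/2}$ terms come from the interior semiclassical expansion of the density (with the $\delta=\delta(\sigma)$ gain supplied by the separation hypothesis \eqref{eqn-1.10}, which guarantees the TF potential is regular at scale $Z^{-1/3+\sigma}$), while the $Z^4a^3$ and $Z^2a^3$ terms are the cruder bounds valid on the full ball $B(\y_m,a)$ where $\langle Z(x-\y_m)\rangle$ is $O(1)$. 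The Cauchy--Schwarz structure $F^{1/2}(\cdots)^{1/2}+G$ in \eqref{eqn-1.12} reflects an $\sL^2$ estimate on the density difference paired with an $\sL^2$ bound on the weighted test function $\langle Z(x-\y_m)\rangle^{-3/2}U$, plus a direct $\sL^1$–$\sL^\infty$ estimate for the remainder $G$.

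The relativistic case is handled uniformly: under \eqref{eqn-1.8} (or \eqref{eqn-1.9} when strictness is needed) the operator $H_V$ is semibounded, and on the scale $a\ll Z^{-1/2}$ the kinetic operator $T$ from \eqref{eqn-1.5}, after rescaling, becomes precisely $T_\beta$ with $\beta=Z_m\beta$ preserved; the pseudodifferential (nonlocal) nature of $T_\beta$ only affects the remainder estimates through standard commutator bounds with the localization, which are absorbed into the $Z^\varkappa$ factors. The main obstacle, I expect, is controlling the \emph{electron-electron interaction} term $\sum_{j<k}\e^2/|x_j-x_k|$ when we localize one coordinate near $\y_m$: one must show that replacing the true many-body problem by the one-body toy model does not lose more than the TF mean-field error at this scale, which requires a localized version of the exchange-correlation and electrostatic estimates (Lieb--Oxford type bounds) combined with the a priori bounds on $\rho_\Psi$ — localized to $B(\y_m,a)$ — available from the energy asymptotics cited in footnote~\ref{foot-1}. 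Getting the exponent bookkeeping right so that all error contributions land below the claimed $F$ and $G$ — in particular tracking how the small parameters $\varkappa$ and $\delta(\sigma)$ propagate through the rescaling and the semiclassical remainder — is the delicate part; the rest is an orchestration of known one-body semiclassical density asymptotics and standard TF theory.
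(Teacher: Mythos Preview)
Your overall architecture (variational upper/lower bounds in the style of \cite{IaLS}, localize near $\y_m$, rescale to the toy model) matches the paper's Section~\ref{sect-4}. But the core engine of the proof is missing from your sketch, and two of your explanations for the shape of \eqref{eqn-1.12}--\eqref{eqn-1.14} are wrong.

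First, the $F^{1/2}(\cdots)^{1/2}+G$ structure is \emph{not} a Cauchy--Schwarz pairing of an $\sL^2$ density estimate against an $\sL^2$ weighted test function. The paper introduces a free coupling parameter $\varsigma$: one compares $\Tr[H_{W+\nu}^-]$ with $\Tr[H_{W+\varsigma U+\nu}^-]$, so that the variational inequality (Proposition~\ref{prop-4.1}) gives a bound on $\varsigma\int U\rho_\Psi\,dx$ with an error $CZ^{5/3-\delta}$. After localizing and replacing $W$ by the Coulomb singularity one lands on an expression of the form
\[
C\Bigl(\varsigma\, Z^4 a\,\|\langle Z(x-\y_m)\rangle^{-3/2}U\|_{\sL^1}
+ Z^2 a^3
+\varsigma^{-1}\bigl(Z^{3/2}a^{7/2}+Z^{-1/3-\delta}\bigr)\Bigr)Z^\varkappa,
\]
and the square-root product in \eqref{eqn-1.12} is simply what you get by \emph{optimizing over $\varsigma$}. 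Without this free parameter you have no mechanism to produce the stated bound.

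Second, the real technical work is not the many-body reduction (that is Proposition~\ref{prop-4.1}, essentially quoted from the literature) and it is not semiclassical analysis of the density. It is the one-particle perturbation theory of Sections~\ref{sect-2}--\ref{sect-3}: for the rescaled hydrogen-like operator $H_{V^0}=T_\beta-|x|^{-1}$ one needs
\[
\bigl|\Tr[H_{V^0+\varsigma U}^--H_{V^0}^-]+\varsigma\Tr[U\,\uptheta(-H_{V^0})]\bigr|
\le C\varsigma^{2-\varkappa} r\,\|\langle x\rangle^{-3/2}U\|_{\sL^1}
\]
(Proposition~\ref{prop-3.6}, and its refinement Proposition~\ref{prop-3.8} when one freezes the smooth part of $W$ to next order). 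This is obtained not by semiclassics but by explicit cluster-by-cluster analysis of the hydrogen spectrum: contour-integral representations of the spectral projectors $\uppi_n$, pointwise bounds $|R_{n,l}(r)|\le C\langle r\rangle^{-3/4}n^{-3/2}$ on the radial eigenfunctions (Appendices~\ref{sect-A}--\ref{sect-B}), and resulting estimates on $\|U\uppi_n^0\|$, $\Tr[U\uppi_n^0]$, and $\|\uppi_n-\uppi_n^0\|$. The weight $\langle Z(x-\y_m)\rangle^{-3/2}$ in \eqref{eqn-1.12} comes directly from those eigenfunction bounds, not from an abstract $\sL^2$ argument. Your sketch has no substitute for this machinery, so as written it would not close.
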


\begin{corollary}\label{cor-1.4}
Let \underline{either} $a\asymp Z^{-1}$ and $X\subset B(\y_m,a)$ \underline{or} ${Z^ {-1}\le a \le Z^{-1/2-\varkappa}}$ and
$X\subset B(\y_m,a)\setminus B(\y_m,a/2)$. Then
\begin{enumerate}[label=(\roman*), wide, labelindent=0pt]
\item\label{cor-1.4-i}
The following estimate holds:
\begin{gather}
\|\rho_\Psi -\rho_{m,\beta}\|_{\sL^1 (X)} \le C F^{1/2} (\mes(X))^{1/2} + CG.
 \label{eqn-1.17}
\end{gather}

\item\label{cor-1.4-ii}
Assume that $|\rho_\Psi |\le \omega $ in $X$ with $\omega  \ge Z^{3/2}a^{-3/2}$\,\footnote{\label{foot-2} Such estimates could be found in \cite{ivrii:el-den}: $\omega  = Z^3$ for $a\le Z^{-8/9}$, $\omega =Z^{19/9} a^{-1}$ for $Z^{-8/9}\le a\le Z^{-7/9}$ and $\omega =Z^{197/90}a^{-9/10}$ for $Z^{-7/9}\le a\le Z^{-1/3}$.}. Then for $p=2,3,\ldots$ the following estimate holds:
\begin{multline}
  \| \rho_\Psi -\rho_{m,\beta} \|_{\sL^p (X)}
  \le \\[4pt]
 \shoveright{ C \omega ^{1- 2(1-2^{-p})/p} \Bigl(F ^{(1-2^{-p})/p} (\mes(X)) ^{2^{-p}/p}   +
   F ^{(1-2^{1-p})/p} G^{1-2^{1-p}}\Bigr)  }\\+ C\omega^{1-1/p}G^{1/p}.
\label{eqn-1.18}
\end{multline}
\end{enumerate}
\end{corollary}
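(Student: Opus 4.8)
The plan is to deduce Corollary~\ref{cor-1.4} from Theorem~\ref{thm-1.3} by a purely measure-theoretic duality argument, using the function $U$ in the theorem as a test function adapted to $X$.

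\emph{Part (i).}
First I would observe that $\|f\|_{\sL^1(X)}=\sup\{\int U f\,dx\colon \supp U\subset X,\ |U|\le 1\}$, and that for either choice of $X$ in the hypothesis we have $X\subset B(\y_m,a)$ with $a\le Z^{-1/2-\varkappa}$, so any such $U$ is admissible in \eqref{eqn-1.11}. Applying \eqref{eqn-1.12} with $f=\rho_\Psi-\rho_{m,\beta}$, the only point is to bound the factor $(Za)^{3/2}\|\langle Z(x-\y_m)\rangle^{-3/2}U\|_{\sL^1}$ by $C\mes(X)$. In the case $a\asymp Z^{-1}$ this is immediate since $\langle Z(x-\y_m)\rangle\asymp 1$ on $B(\y_m,a)$ and $(Za)^{3/2}\asymp 1$. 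In the annular case $a/2\le |x-\y_m|\le a$ we have $\langle Z(x-\y_m)\rangle\asymp Za$ (using $a\ge Z^{-1}$), so $(Za)^{3/2}\langle Z(x-\y_m)\rangle^{-3/2}\asymp 1$ on $\supp U\subset X$, and the product is again $\le C\|U\|_{\sL^1}\le C\mes(X)$. Taking the supremum over $U$ yields \eqref{eqn-1.17}.

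\emph{Part (ii).}
For the $\sL^p$ estimate I would interpolate the $\sL^1$ bound against the given pointwise bound $|\rho_\Psi-\rho_{m,\beta}|\le 2\omega$ (which follows from $|\rho_\Psi|\le\omega$ and $\rho_{m,\beta}=qZ_m^3\bar\rho_{Z_m\beta}(Z_m(x-\y_m))\le CZ^3\le C\omega$ on $X$ since $\bar\rho_\beta$ is bounded and $\omega\ge Z^{3/2}a^{-3/2}\ge Z^3$ when $a\le Z^{-1}$ — in the annular case one checks $Z^3\le\omega$ directly from the stated hypothesis $\omega\ge Z^{3/2}a^{-3/2}$ together with $a\le Z^{-1/2}$). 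The clean way is to iterate the elementary inequality $\|f\|_{\sL^{2p}}^{2p}\le\|f\|_{\sL^\infty}^{2p-2}\|f\|_{\sL^p}^{... }$; more precisely, writing $N_p:=\|\rho_\Psi-\rho_{m,\beta}\|_{\sL^p(X)}$ and using $N_{2k}^{2k}\le (2\omega)^k N_k^k$ one gets a recursion that, started from $N_1$, produces the exponents $1-2(1-2^{-p})/p$ on $\omega$ appearing in \eqref{eqn-1.18}. At each doubling step one must square the right-hand side of \eqref{eqn-1.17}, i.e. handle $(F^{1/2}(\mes X)^{1/2}+G)^{2}\le 2F\mes X+2G^2$, and then redistribute powers; the two terms $F^{(1-2^{-p})/p}(\mes X)^{2^{-p}/p}$ and $F^{(1-2^{1-p})/p}G^{1-2^{1-p}}$ in \eqref{eqn-1.18} are exactly the contributions of the $F\mes X$ and $G^2$ parts tracked through $\lceil\log_2 p\rceil$ doublings, while the final $C\omega^{1-1/p}G^{1/p}$ term absorbs the ``linear in $G$'' leftover from \eqref{eqn-1.17} at the first step (via $N_p\le (2\omega)^{1-1/p}N_1^{1/p}$ applied to the $CG$ summand alone). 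I would present this as a short lemma: if $N_1\le A+B$ with $A=CF^{1/2}(\mes X)^{1/2}$, $B=CG$, and $N_\infty\le C\omega$, then $N_p$ satisfies \eqref{eqn-1.18}; the proof is bookkeeping of exponents.

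The only genuine obstacle is the exponent arithmetic in Part (ii): one has to verify that the recursion $N_{2k}^{2k}\le(2\omega)^kN_k^k$ (together with $N_1\le A+B$) really yields the stated closed-form exponents for every integer $p\ge 2$, handling separately $p$ a power of $2$ and general $p$ by one extra interpolation $N_p\le N_1^{\lambda}N_{2^{\lceil\log_2 p\rceil}}^{1-\lambda}$. This is routine but must be done carefully so that the powers of $\omega$, $\mes(X)$, $F$ and $G$ match \eqref{eqn-1.18} on the nose; everything else — admissibility of the test functions, the weight estimate $\langle Z(x-\y_m)\rangle\asymp\max(1,Za)$ on the relevant region, and the crude bound on $\rho_{m,\beta}$ — is elementary.
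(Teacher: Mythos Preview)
Your Part~(i) is correct and is exactly the paper's argument: plug $U=\upchi_X\sign(\rho_\Psi-\rho_{m,\beta})$ into \eqref{eqn-1.12} and use that on either admissible $X$ the weight $(Za)^{3/2}\langle Z(x-\y_m)\rangle^{-3/2}$ is $\asymp 1$, so the bracketed factor collapses to $\mes(X)$.

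Part~(ii), however, has a real gap. Your doubling recursion $N_{2k}^{2k}\le(2\omega)^k N_k^k$ gives only $N_p\le C\omega^{1-1/p}N_1^{1/p}$, hence $\omega$-exponent $1-1/p$ and $F$-exponent $1/(2p)$. These are \emph{not} the exponents $1-2(1-2^{-p})/p$ and $(1-2^{-p})/p$ in \eqref{eqn-1.18}; already for $p=2$ you obtain $\omega^{1/2}F^{1/4}(\mes X)^{1/4}$ while the claim is $\omega^{1/4}F^{3/8}(\mes X)^{1/8}$, which is strictly sharper in the regime of interest. So your recursion does not ``produce the exponents'' you say it does, and the proposed bookkeeping lemma cannot close.

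The idea you are missing is that Theorem~\ref{thm-1.3} must be applied \emph{at every step of the recursion}, not only once. The paper inserts
\[
U(x)=\epsilon_0\,\omega^{1-p}\,\upchi_X(x)\,\sign\bigl(\rho_\Psi-\rho_{m,\beta}\bigr)\,|\rho_\Psi-\rho_{m,\beta}|^{p-1},
\]
which satisfies \eqref{eqn-1.11} because $|\rho_\Psi-\rho_{m,\beta}|\le C\omega$ on $X$ (via $\rho_{m,\beta}\le CZ^{3/2}a^{-3/2}\le C\omega$; you do not need global boundedness of $\bar\rho_\beta$). With $J_p\coloneqq\omega^{1-p}\int_X|\rho_\Psi-\rho_{m,\beta}|^p\,dx$ one has $\|U\|_{\sL^1}\asymp\omega^{-1}J_{p-1}$, and \eqref{eqn-1.12} yields
\[
J_p\le C(\omega^{-1}F)^{1/2}J_{p-1}^{1/2}+CG,\qquad J_0=\omega\,\mes(X).
\]
Iterating this $p$ times gives $J_p\le C\bigl((\omega^{-1}F)^{1-2^{-p}}J_0^{2^{-p}}+(\omega^{-1}F)^{1-2^{1-p}}G^{2^{1-p}}+G\bigr)$, and $\|\cdot\|_{\sL^p}=\omega^{1-1/p}J_p^{1/p}$ converts this into \eqref{eqn-1.18}. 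The crucial point is that each step passes $J_{p-1}$ back through the \emph{theorem}, gaining a square root together with a fresh factor $(\omega^{-1}F)^{1/2}$; this is strictly stronger than passing it through the trivial $\sL^\infty$ bound.
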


\begin{remark}\label{rem-1.5}
\begin{enumerate}[label=(\roman*), wide, labelindent=0pt]
\item\label{rem-1.5-i}
$F$ is defined by its first term, as $a\le Z^{-11/21-2/\delta/7}$ and by its second term otherwise.
$G$ is defined by its first term as $a\le Z^{-5/9-2\delta/3}$ snd by its second term otherwise.

\item\label{rem-1.5-ii}
Current approximation definitely is not rigt for $a\ge Z^{-1/3}$. On the other hand, resztriction $a\le Z^{-1/2-\varkappa}$ is due to requirement of Proposition~\ref{prop-3.8}. If we do not use this proposition, but all arguments of Section~\ref{sect-4} prior to its use, we get estimate (\ref{eqn-1.12}) with
\begin{gather}
F= \bigl(Z^{13/6-\delta}+a^{-1/2} +Z^{7/2}a^{3/2}\bigr)Z^{\varkappa}
\label{eqn-1.19}\\
\shortintertext{and}
G=(Z^{7/6-\delta}a^{3/2}+Z^{5/2}a^{7/2}\bigr)Z^\varkappa
\label{eqn-1.20}
\end{gather}
which for $a\ge Z^{-1/2-\varkappa}$ are defined by their second terms.

\item\label{rem-1.5-iii}
On the other hand, for $a\gg Z^{-1}$ makes sense approach of  \cite{ivrii:el-den-2}, which is also based on Proposition~\ref{prop-4.1}, but instead of functional-analytic arguments of Sections~\ref{sect-2} and~\ref{sect-3} and Appendices~\ref{sect-A} and~\ref{sect-B} uses microlocal arguments and $\rho_\Psi$ is approximated by Thomas-Fermi density $\rho^\TF$. Since we are going to remake that paper, using some new ideas of this one, we will compare results of two approaches there. 

\item\label{rem-1.5-iv}
On the other hand, while we believe that $\bar{\rho}_\beta$ is different from $\bar{\rho}_0$ (the same density in the non-relativistic case) we have no proof of this. We hope to do it later and to estimate the difference from below.

\item\label{rem-1.5-v}
In \cite{ivrii:el-den} the upper pointwise estimates of $\rho_\Psi$ has been derived in the non-relativistic case  but except for $Z^{-1}$-vicinities of nuclei they are worse than $C\rho^\TF$. 

\item\label{rem-1.5-vi}
In four previous versions of this paper (v1--v4) we considered  only non-relativistic case. Furthermore, first three versions (v1--v3) contain several grave errors which were corrected in v4; moreover, in v4 results were improved for $a$ not much larger than $Z^{-1}$. In the previous version (v5) some improvements were made and proofs were simplified.

\item\label{rem-1.5-vii} Current version (v6) introduced several new ideas and results were significantly improved. 
\end{enumerate}
\end{remark}

\begin{remark}\label{rem-1.6}
Obviously 
\begin{equation} 
\bar{\rho}_\beta(x)=\frac{1}{4\pi}\sum_{n\ge 1}\sum_{0\le l\le n-1} (2l+1)R^2_{n,l}(|x|;\beta)
\label{eqn-1.21}
\end{equation}
with $R_{n,l}(r;\beta)$ defined by (\ref{eqn-B.5}); in particular,
\begin{equation}
\bar{\rho}_0 (0)=\frac{1}{4\pi} \sum_{n\ge 1} \frac{1}{n^3}.
\label{eqn-1.22}
\end{equation}
\end{remark}

\emph{Plan of the paper}. In Sections~\ref{sect-2} and~\ref{sect-3} we consider a one-particle Hamiltonian (relativistic or not) with a  potential 
$V=V^0+\varsigma U$ where $U$ is supported in $B(0, r)$ and satisfies $|U(x)|\le  1$, $0<\varsigma r\le \epsilon$  and explore its eigenvalues and projectors, correspondingly. 

In Section~\ref{sect-4}   we prove Proposition~\ref{prop-4.1}, using arguments of~\cite{IaLS},  and then, using results of Section~\ref{sect-2} and~\ref{sect-3}, we derive from it Theorem~\ref{thm-1.3}. 

 In Appendix~\ref{sect-A} we estimate eigenfunctions of the non-relativistic hydrogen Hamiltonian and in Appendix~\ref{sect-B} we generalize them to the relativistic hydrogen Hamiltonian.
\chapter{Estimates of eigenvalues}
\label{sect-2}

Now we compare negative spectra  of operators $H^0\coloneqq H_{V^0}$ 
and $H\coloneqq H_{V}$ where $H_V=T -V$,  $V^0=|x|^{-1}$ and $V=V^0+\varsigma U$ with 
\begin{gather}
\supp (U) \subset B(0,r), \qquad r\ge 1, \qquad |U|\le1
\label{eqn-2.1}\\
\shortintertext{and}
0<\varsigma r\le \epsilon_0
\label{eqn-2.2} 
\end{gather}
with sufficiently small constant $\epsilon_0$. Then both of these negative spectra are discrete.  

According to Corollary~\ref{cor-B.7} the negative spectrum of  $H^0_V$ consists of the clusters  $\cC_n^0=\{\lambda^0_{n,k} (\beta)\}_{k=1,\ldots, \nu_k}$\,\footnote{\label{foot-3} More precisely, in the non-relativistic case ($\beta=0$) $\lambda^0_{n,k}= \bar{\lambda}_n=-\frac{1}{n^2}$ of multiplicity $n^2$, in the relativistic case for $0<\beta\le \beta_1$  there are also clusters 
 $\cC_n^0$ of the width $O(\beta^2  n^{-3})$, containing $n^2$ eigenvalues and separated by the gaps of the width $\asymp n^{-3}$ and for $\beta_1<\beta \le \frac{2}{\pi}$ separation into clusters is not unique but allows $O(1)$ number of elements moved to or from the neighbouring clusters, so each cluster contains $\nu_n=n^2+O(1)$ eigenvalues (see Remark~\ref{rem-B.8}\ref{rem-B.8-i}). All ``$O$'' are uniform with respect to $n, k$ and $\beta$.}. We denote by $u_{n,k}=u_{n,k}(r,\phi,\theta)$ corresponding eigenfunctios. 

\begin{claim}\label{eqn-2.3}

Let us number eigenvalues of $H_V$ correspondingly by $\lambda_{n,k}$ with  with $n=1,2,\ldots $ and $k=1,2,\ldots, \nu_n$, so that $\lambda_{n,k}\le \mu_{n',k'}$ for  all $n<n'$  and for all corresponding $k$ and $k'$ and $\lambda^0_{n,k} \le   \lambda^0_{n,k'}$ for all $k<k'$.
\end{claim}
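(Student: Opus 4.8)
The plan is to treat this as a construction followed by a short consistency check: produce the double-indexing explicitly, observe that the two asserted monotonicities are then automatic, and isolate the one genuine point, namely that $H_V$ carries enough negative eigenvalues for the block decomposition to make sense.

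First I would record that the negative spectrum of $H_V$, already known to be discrete, is moreover infinite with $0$ as its only possible accumulation point. Since $V=V^0+\varsigma U$ vanishes at infinity, the essential spectrum is $[0,\infty)$, and infinitude of the negative discrete spectrum follows from the long-range attractive tail: because $\varsigma U$ is supported in $B(0,r)$, one has $V=V^0=|x|^{-1}$ outside $B(0,r)$, so for every $N$ a min-max trial space of dimension $N$ can be assembled from bumps supported in far, widely separated annuli $\{A_j\le|x|\le 2A_j\}$, on each of which $\langle H_V u,u\rangle<0$ once $A_j$ is large (the kinetic part being negligible against the Coulomb part $\asymp-A_j^{-1}$). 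Equivalently, one may invoke the two-sided comparison $T-(1+\epsilon_0)V^0\le H_V\le T-(1-\epsilon_0)V^0$ --- valid because $0\le\varsigma|U|\le\varsigma\le\epsilon_0 r^{-1}\le\epsilon_0|x|^{-1}$ on $B(0,r)$ and $\varsigma U=0$ off it --- together with Corollary~\ref{cor-B.7}.

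Now list the negative eigenvalues of $H_V$ as $\lambda_1\le\lambda_2\le\cdots<0$ counted with multiplicity, fix once and for all a labelling $\cC^0_n=\{\lambda^0_{n,k}\}_{k=1,\ldots,\nu_n}$ with $\lambda^0_{n,k}\le\lambda^0_{n,k'}$ for $k<k'$, put $N_n=\nu_1+\cdots+\nu_n$ (so $N_n\uparrow\infty$, since $\sum_n\nu_n=\infty$), and set $\lambda_{n,k}:=\lambda_{N_{n-1}+k}$ for $1\le k\le\nu_n$. Then $\lambda_{n,k}\le\lambda_{n,k'}$ for $k<k'$ and $\lambda_{n,k}\le\lambda_{n',k'}$ for $n<n'$ are immediate from the monotone enumeration, which is exactly the claim. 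The reason for insisting on a monotone double-indexing with the unperturbed block sizes is that Ky Fan / Lidskii-type inequalities and the min-max principle then apply blockwise when $\lambda_{n,k}$ is compared with $\lambda^0_{n,k}$ in the subsequent estimates; this is not a separation-of-clusters assertion, and indeed the clusters need not stay separated once $r$ is large. The only obstacle worth flagging is therefore not in the present statement but downstream: the blockwise comparison will rest on the two-sided operator bound above for small and moderate $n$, and on the smallness of $\|u_{n,k}\|_{\sL^2(B(0,r))}$ coming from Appendices~\ref{sect-A}--\ref{sect-B} for large $n$, and the transition between the two regimes must be kept uniform in $n$, $k$ and, in the relativistic case, $\beta$.
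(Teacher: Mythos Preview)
The paper does not prove \textup{(\ref{eqn-2.3})} at all: it is stated as a numbering convention, not a proposition, and no argument is supplied. Your proposal is therefore not competing with an existing proof but rather filling in the tacit justification that the convention is well-posed.

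On that score your treatment is correct and complete. You rightly isolate the only substantive point---that $H_V$ has infinitely many negative eigenvalues so that the block enumeration $\{\lambda_{N_{n-1}+k}\}$ never runs out---and your two arguments for it (long-range Coulomb tail via trial functions in distant annuli, or the two-sided operator comparison $T-(1+\epsilon_0)V^0\le H_V\le T-(1-\epsilon_0)V^0$) are both sound. The explicit construction $\lambda_{n,k}:=\lambda_{N_{n-1}+k}$ is the obvious one and makes the monotonicity $\lambda_{n,k}\le\lambda_{n',k'}$ for $n<n'$ automatic. Your closing paragraph about what the indexing is \emph{for} (blockwise min-max comparison rather than genuine cluster separation) is accurate and anticipates Propositions~\ref{prop-2.1}--\ref{prop-2.2}, though strictly speaking it goes beyond what \textup{(\ref{eqn-2.3})} itself asserts.

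One minor remark: the statement as printed has ``$\mu_{n',k'}$'' where ``$\lambda_{n',k'}$'' is clearly intended (and ``with with'' is a typo); you have silently read it the right way.
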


\begin{proposition}\label{prop-2.1}
Let \textup{(\ref{eqn-2.1})} and \textup{(\ref{eqn-2.2})} be fulfilled and let us number eigenvalues of $H_V$ and $H_{V^0}$ according to \textup{(\ref{eqn-2.3})}. Furthermore, assume that
\begin{equation}
\varsigma r^{5/2}\le \epsilon n^3.
\label{eqn-2.4}\
\end{equation}
\begin{enumerate}[label=(\roman*), wide, labelindent=0pt]
\item\label{prop-2.1-i}
Then the following estimate holds
\begin{gather}
|\lambda_{n,k}-\lambda^0_{n,k}| \le 
C\left\{\begin{aligned}
&\varsigma r^{5/2} n^{-5} + n^{-3}&&\text{for\ \ } r\le C_0 n^2,\\
&\varsigma + n^{-3} &&\text{for\ \ } r\ge C_0 n^2,
\end{aligned}\right.
\label{eqn-2.5}
\end{gather}
\item\label{prop-2.1-ii}
In particular, $-\lambda_{n,k}\asymp n^{-2}$.
\end{enumerate}
\end{proposition}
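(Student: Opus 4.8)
The plan is to estimate $\lambda_{n,k}-\lambda^0_{n,k}$ via a min-max comparison, localizing the perturbation $\varsigma U$ to the ball $B(0,r)$ and controlling its effect on the $n$-th cluster by the mass of the relevant eigenfunctions inside that ball. First I would recall that $H_V=H_{V^0}-\varsigma U$ with $|\varsigma U|\le \varsigma$ pointwise and $\supp(\varsigma U)\subset B(0,r)$, so for any normalized $\phi$ in the test space,
\begin{equation}
|\langle \varsigma U\phi,\phi\rangle|\le \varsigma \int_{B(0,r)}|\phi(x)|^2\,dx .
\label{eqn-2.5a}
\end{equation}
By the variational (Courant--Fischer) characterization, applied to the negative discrete spectra enumerated as in \textup{(\ref{eqn-2.3})}, the displacement $|\lambda_{n,k}-\lambda^0_{n,k}|$ is bounded by the operator norm of $\varsigma U$ compressed to the spectral subspace spanned by the first $\sum_{m\le n}\nu_m$ eigenfunctions of $H_{V^0}$ (together with a matching statement with the roles reversed, using that \textup{(\ref{eqn-2.2})} keeps $H_V$ a bounded perturbation of $H_{V^0}$ in the relevant sense). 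Hence it suffices to show that for the hydrogenic (or relativistic hydrogenic) eigenfunctions $u_{m,k}$ with $m\le n$ one has $\int_{B(0,r)}|u_{m,k}(x)|^2\,dx$ controlled by the right-hand side of \textup{(\ref{eqn-2.5})} after multiplication by $\varsigma$, with the extra $n^{-3}$ accounting for the intrinsic cluster width in the relativistic case (which is $O(\beta^2 n^{-3})$ by the footnote, and absent, i.e. $0$, when $\beta=0$).

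The core estimate is therefore the bound on the mass of $u_{m,k}$ inside $B(0,r)$. I would invoke the eigenfunction estimates established in Appendices~\ref{sect-A} and~\ref{sect-B}: the ground-state-type decay gives, for $r\le C_0 m^2$ (inside the classically allowed region where the eigenfunction is ``spread out'' over scale $m^2$),
\begin{equation}
\int_{B(0,r)}|u_{m,k}(x)|^2\,dx \le C r^{5/2} m^{-5},
\label{eqn-2.5b}
\end{equation}
which is the expected scaling: the normalized hydrogenic density near the nucleus behaves like $m^{-3}$ times a universal profile, and integrating $|x|^{2l}$-type factors over $B(0,r)$ against that profile, then summing the $(2l+1)$ states in the cluster, produces the $r^{5/2}m^{-5}$ (the half-integer power $5/2$ coming from the $l=0,1$ channels which dominate near $0$). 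Summing \textup{(\ref{eqn-2.5b})} over $m\le n$ with $r\le C_0 n^2$ is dominated by the $m=n$ term up to a constant, giving $\varsigma r^{5/2}n^{-5}$; for $r\ge C_0 n^2$ one simply uses $\int_{B(0,r)}|u_{m,k}|^2\le \|u_{m,k}\|^2=1$, giving the crude bound $\varsigma$. Condition \textup{(\ref{eqn-2.4})}, $\varsigma r^{5/2}\le \epsilon n^3$, guarantees that the perturbation is small compared with the gap $\asymp n^{-3}$ separating the $n$-th cluster from its neighbours, so the enumeration in \textup{(\ref{eqn-2.3})} is consistent (clusters do not cross) and the min-max comparison is legitimate; it also makes the first branch of \textup{(\ref{eqn-2.5})} the meaningful one.

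Part~\ref{prop-2.1-ii} is then immediate: since $\lambda^0_{n,k}=-n^{-2}+O(\beta^2 n^{-3})\asymp -n^{-2}$ and, by part~\ref{prop-2.1-i}, $|\lambda_{n,k}-\lambda^0_{n,k}|\le C(\varsigma r^{5/2}n^{-5}+n^{-3})\le C(\epsilon n^{-2}\cdot n^{-3}\cdot\text{(small)}+n^{-3})=o(n^{-2})$ using \textup{(\ref{eqn-2.4})}, the perturbed eigenvalue inherits the two-sided bound $-\lambda_{n,k}\asymp n^{-2}$. The main obstacle I anticipate is the sharp form of \textup{(\ref{eqn-2.5b})}, i.e. extracting exactly the power $r^{5/2}m^{-5}$ (rather than a cruder $r^3 m^{-?}$) uniformly in the cluster index $k$ and, in the relativistic case, uniformly in $\beta\le \tfrac{2}{\pi}$ near the critical coupling where the $l=0$ eigenfunctions become mildly singular at the origin; this is exactly what the Appendix estimates are designed to supply, and the proof here should reduce cleanly to quoting them, with the min-max/cluster-stability bookkeeping being routine given \textup{(\ref{eqn-2.4})}.
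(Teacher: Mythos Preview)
Your approach differs from the paper's and, as written, has a genuine gap. The paper does \emph{not} use individual eigenfunction mass bounds here; it argues via the semiclassical Weyl law for the eigenvalue counting function $\N(\tau)$. One has $\N(\tau)=\cN^\W(\tau)+O(|\tau|^{-1})$ for both $H_V$ and $H_{V^0}$, and the difference of Weyl terms is
\[
|\cN^\W(\tau;H_V)-\cN^\W(\tau;H_{V^0})|
\;\le\; C\varsigma\!\int_{B(0,r)}(V^0+\tau)_+^{1/2}\,dx
\;\le\; C\varsigma\!\int_{B(0,r)}|x|^{-1/2}\,dx
\;\le\; C\varsigma r^{5/2}
\]
for $r\le C_0|\tau|^{-1}$ (and $\le C\varsigma n^{5}$ otherwise). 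The $r^{5/2}$ thus comes from a \emph{volume} integral of $|x|^{-1/2}$, and the $n^{-5}$ from inverting the density of states $d\cN^\W/d\tau\asymp n^{5}$ at level $\tau\asymp -n^{-2}$. Smoothness of $U$ is then removed by a variational bracketing.

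The specific step that fails in your argument is the claimed mass bound
\[
\int_{B(0,r)}|u_{m,k}(x)|^2\,dx\;\le\; C\,r^{5/2}m^{-5}.
\]
This is not what the Appendix estimates give, and it is false for the $l=0$ states. From Corollary~\ref{cor-A.11} (and its relativistic analogue Proposition~\ref{prop-B.5}) one has $|R_{m,l}(t)|\le Ct^{-3/4}m^{-3/2}$, hence
\[
\int_{B(0,r)}|u_{m,l,j}|^2\,dx\;\le\; Cm^{-3}\!\int_0^r t^{1/2}\,dt\;=\;C\,r^{3/2}m^{-3},
\]
and for $r\asymp 1$ this is $\asymp m^{-3}$, not $m^{-5}$; indeed $R_{m,0}(0)\asymp m^{-3/2}$, so the mass near the origin genuinely scales like $m^{-3}$. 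Consequently a min--max (or Feynman--Hellmann) argument based on eigenfunction localisation yields at best $|\lambda_{n,k}-\lambda^0_{n,k}|\le C\varsigma r^{3/2}n^{-3}$, which is weaker than \textup{(\ref{eqn-2.5})} by the factor $n^{2}/r$, and under the sole hypotheses \textup{(\ref{eqn-2.2})},\,\textup{(\ref{eqn-2.4})} is only $O(n^{-2})$, not $O(n^{-3})$. (This weaker bound is exactly what the paper uses \emph{later}, in the proof of Proposition~\ref{prop-2.2}, via \textup{(\ref{eqn-2.19})}; it is not sharp enough for Proposition~\ref{prop-2.1}.) The sharper exponent in \textup{(\ref{eqn-2.5})} really does require the counting-function route.
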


\begin{proof}
Indeed, for $U$, which is smooth in $r$-scale, this estimate follows from the standard semiclassical asymptotics of the eigenvalue counting function:
\begin{gather}
\N (\tau ) =\cN^\W (\tau ) + O(|\tau |^{-2})\qquad \text{for\ \ } \tau <0
\label{eqn-2.6}\\
\shortintertext{with}
\cN^\W (\tau )  =\frac{1}{6\pi^2}\int  (\tau -V)_+^{3/2} \,dx\asymp|\tau |^{-3}
\label{eqn-2.7}\\
\intertext{for both $H_V$ and $H_{V^0}$ and}
|\cN^\W (\tau ; H_V) -\cN^\W (\tau ;H_{V^0}) |\le C\varsigma r^{5/2}
\label{eqn-2.8}
\end{gather}
for $r\le C_0|\tau |^{-1}$; otherwise in the right-hand expression we need to replace $r$ by $n^2$.

Then using the standard variational arguments we can drop the smoothness condition.
\end{proof}

\begin{proposition}\label{prop-2.2}
Let condition \textup{(\ref{eqn-2.1})} be fulfilled and 
\begin{equation}
\varsigma r^{3/2}  \le \epsilon _0
\label{eqn-2.9}
\end{equation}
with sufficiently small constant $\epsilon_0$. Then for  $n\ge C_0^{-1}\sqrt{r}$

\begin{enumerate}[label=(\roman*), wide, labelindent=0pt]
\item\label{prop-2.2-i}
The  spectrum of $H_{V}$ consists of the clusters 
$\cC_n=\{\lambda_{n,k}\}_{k=1,\ldots,\nu_n}$ of the width $\le C_0(\beta^2+\varsigma) n^{-3}$ separated by gaps of the width $\asymp n^{-3}$. Each cluster $\cC_n$ contains exactly the same number of eigenvalues as $\cC^0_n$ and they have common borders: both are contained in $[\eta_n+\epsilon_1 n^{-3} , \eta_{n+1}-\epsilon_1 n^{-3}]$.

In particular, in the non-relativistic case and in the relativistic case with $\beta <\epsilon_1$ with sufficiently small constant $\epsilon_1$ (see Corollary~\ref{cor-B.7}\ref{cor-B.7-i}) 
\begin{gather}
\cC_n \subset [\bar{\lambda}_n - C_0(\beta^2+\varsigma) n^{-3}, \bar{\lambda}_n - C_0(\beta^2+\varsigma) n^{-3}].
\label{eqn-2.10}
\end{gather}

\item\label{prop-2.2-ii}
Furthermore, 
\begin{gather}
|\sum_k \bigl(\lambda_{n,k} -\lambda^0_{n,k}\bigr)| \le  C \varsigma r \| \langle x\rangle ^{-3/2} U    \|_{\sL^1} n^{-3}
\label{eqn-2.11}
\end{gather}
\end{enumerate}
where here and below $ \langle x\rangle=(|x|^2+1)^{1/2}$.
\end{proposition}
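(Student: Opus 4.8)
The plan is to treat $H_V=H^0-\varsigma U$ as a perturbation of $H^0$ which is small \emph{in the spectral window occupied by $\cC^0_n$} — the smallness coming not from $\varsigma$ alone, but from the fact that the eigenfunctions of $H^0$ in the $n$-th cluster carry very little $\sL^2$-mass on the ball $B(0,r)$ where $U$ is supported. The quantitative input I would take from Appendices~\ref{sect-A} and~\ref{sect-B} is
\[
\int_0^r R_{n,l}^2(\rho;\beta)\,\rho^2\,d\rho\le C\min\bigl(1,\;r^{3/2}n^{-3}\bigr)\qquad(l<n);
\]
since $\chi_{B(0,r)}$ commutes with the angular decomposition this gives $\|\chi_{B(0,r)}P^0_n\chi_{B(0,r)}\|\le C\min(1,r^{3/2}n^{-3})$, where $P^0_n$ is the spectral projector of $H^0$ onto the interval $I_n\supset\cC^0_n$ of Corollary~\ref{cor-B.7} (separated from the rest of $\sigma(H^0)$ by $\gtrsim n^{-3}$). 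Together with $(\ref{eqn-2.9})$ and $n\ge C_0^{-1}\sqrt r$ this yields $\varsigma\,\|P^0_nUP^0_n\|\le\varsigma\,\|\chi_{B(0,r)}P^0_n\chi_{B(0,r)}\|\le C\epsilon_0 n^{-3}$: the perturbation restricted to $\Ran P^0_n$ is far smaller than the surrounding gaps.

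For part~\ref{prop-2.2-i} I would run a Riesz-projector / Schur-complement argument along a contour $\Gamma_n$ lying in the gaps at distance $\asymp n^{-3}$ from $\sigma(H^0)$, decomposing $H_V$ into blocks relative to $P^0_n\oplus(1-P^0_n)$. The diagonal block $P^0_n(H^0-\bar\lambda_n)P^0_n-\varsigma P^0_nUP^0_n$ has spectrum in an interval of length $\le C(\beta^2+\varsigma r^{3/2})n^{-3}$ about $\bar\lambda_n$ (the $\beta^2$ being the intrinsic width of $\cC^0_n$), and the Schur correction $\varsigma^2P^0_nU(1-P^0_n)(H^0-z)^{-1}(1-P^0_n)UP^0_n$ must be estimated \emph{not} by bounding $(H^0-z)^{-1}(1-P^0_n)$ crudely by $n^3$, but by inserting the cutoffs $\chi_{B(0,r)}$ and summing over the clusters $\cC^0_m$, $m\ne n$, playing $\|\chi_{B(0,r)}P^0_m\chi_{B(0,r)}\|\le C\min(1,r^{3/2}m^{-3})$ against $\dist(\bar\lambda_n,\cC^0_m)\asymp|m^{-2}-n^{-2}|$; this sum is $O(\epsilon_0^2 n^{-3})$ (up to a harmless logarithm). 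Hence $\sigma(H_V)\cap\Gamma_n=\emptyset$, the Riesz projector $P_n=\tfrac1{2\pi i}\oint_{\Gamma_n}(z-H_V)^{-1}\,dz=\mathbf 1_{I_n}(H_V)$ has $\dim\Ran P_n=\dim\Ran P^0_n=\nu_n$, the cluster $\cC_n$ has width $\le C(\beta^2+\varsigma r^{3/2})n^{-3}\le Cn^{-3}$ and the stated common borders with $\cC^0_n$, and the gaps $\asymp n^{-3}$ persist.

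For part~\ref{prop-2.2-ii} I would use the exact trace identity obtained by interpolation. Put $H_t=H^0-t\varsigma U$ and $P_n(t)=\mathbf 1_{I_n}(H_t)$ for $t\in[0,1]$ (well defined by part~\ref{prop-2.2-i} applied with $t\varsigma\le\varsigma$). Since $\sum_k\lambda_{n,k}(t)=\Tr\bigl(H_tP_n(t)\bigr)$ and $\Tr\bigl(H_t\dot P_n(t)\bigr)=0$ ($\dot P_n(t)$ being off-diagonal in the decomposition defined by $P_n(t)$, with which $H_t$ commutes), the Feynman-Hellmann rule gives
\[
\sum_k\bigl(\lambda_{n,k}-\lambda^0_{n,k}\bigr)=-\varsigma\int_0^1\Tr\bigl(UP_n(t)\bigr)\,dt=-\varsigma\int_0^1\!\!\int_{B(0,r)} U(x)\,e_n(x,x;t)\,dx\,dt,
\]
$e_n(\cdot,\cdot;t)$ being the (continuous, rank-$\nu_n$) Schwartz kernel of $P_n(t)$. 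It then suffices to prove, uniformly in $t\in[0,1]$, the pointwise bound $|e_n(x,x;t)|\le Cn^{-3}\langle x\rangle^{-1/2}$ on $B(0,r)$: at $t=0$, $e^0_n(x,x)=\tfrac1{4\pi}\sum_{l<n}(2l+1)R_{n,l}^2(|x|)\le Cn^{-3}\langle x\rangle^{-1/2}$ from Appendices~\ref{sect-A},~\ref{sect-B} (only $l$ with $l^2\lesssim|x|$ contribute effectively); for $t>0$ one writes $P_n(t)-P^0_n=-\tfrac{t\varsigma}{2\pi i}\oint_{\Gamma_n}(z-H_t)^{-1}U(z-H^0)^{-1}\,dz$ and bounds the diagonal of its kernel on $B(0,r)$ using pointwise Green-function bounds for $H^0$ near the cluster energies, the perturbation being relatively small and supported in $B(0,r)$ — an estimate of the same nature as those of Section~\ref{sect-3}. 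Granting this and $\langle x\rangle^{-1/2}\le\sqrt2\,r\,\langle x\rangle^{-3/2}$ on $B(0,r)$, one gets $|\Tr(UP_n(t))|\le Cn^{-3}\!\int|U|\langle x\rangle^{-1/2}\,dx\le Cn^{-3}r\,\|\langle x\rangle^{-3/2}U\|_{\sL^1}$, which is $(\ref{eqn-2.11})$.

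The main obstacle will be precisely this uniform-in-$t$ control of the perturbed cluster kernel on $B(0,r)$ — equivalently of $\chi_{B(0,r)}(P_n(t)-P^0_n)\chi_{B(0,r)}$ — and, already in part~\ref{prop-2.2-i}, the resolvent estimates: naive norm bounds on $(H^0-z)^{-1}$ near a cluster are $\asymp n^3$ and useless here, so the whole argument must be organised cluster by cluster so that the denominators $\asymp|m^{-2}-n^{-2}|$ absorb the small masses $\|\chi_{B(0,r)}P^0_m\chi_{B(0,r)}\|$. In the relativistic case $\beta>0$ there are two further nuisances: the clusters $\cC^0_n$ are no longer single eigenvalues, which only adds the bookkeeping of Corollary~\ref{cor-B.7}/Remark~\ref{rem-B.8} (harmless, everything being phrased through the projectors $\mathbf 1_{I_n}$), and $e_n(x,x;t)$ has a mild singularity $\asymp|x|^{-2\gamma}$ at the origin with $2\gamma<1$ under $(\ref{eqn-1.9})$, so the pointwise bound must be replaced by $|e_n(x,x;t)|\le Cn^{-3}\bigl(|x|^{-2\gamma}+\langle x\rangle^{-1/2}\bigr)$ — which still yields $(\ref{eqn-2.11})$ since $\int_{B(0,\rho)}|x|^{-2\gamma}\,dx$ and $\rho\int_{B(0,\rho)}\langle x\rangle^{-3/2}\,dx$ are of the same order in $\rho$.
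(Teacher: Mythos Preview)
Your route is genuinely different from the paper's, and the comparison is instructive. The paper does not touch resolvents or contours here at all; it argues purely variationally. It splits $U=U_+-U_-$ and, using the crude Proposition~\ref{prop-2.1} only to control the ``far'' clusters, sandwiches $\lambda_{n,k}$ between the eigenvalues of the compressed operators $\uppi^0_n H_{V^0+2\varsigma U_-}\uppi^0_n$ and $\uppi^0_n H_{V^0-2\varsigma U_+}\uppi^0_n$ on $\Ran\uppi^0_n$. Both cluster structure~\ref{prop-2.2-i} and the trace bound~\ref{prop-2.2-ii} then fall out at once, because $\sum_k\lambda_{n,k}$ is squeezed between $\sum_k\lambda^0_{n,k}\pm 2\varsigma\Tr[\uppi^0_n U_\mp\uppi^0_n]$, and $\Tr[\uppi^0_n U_\pm\uppi^0_n]=\tfrac1{4\pi}\sum_l(2l+1)\int U_\pm|R_{n,l}|^2\,dx$ is computed directly from the pointwise bounds~(\ref{eqn-2.18}) and the identity~(\ref{eqn-2.24}). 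No perturbed object ever appears.

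What this buys over your plan is precisely the avoidance of your ``main obstacle'': you need the uniform-in-$t$ pointwise (or at least local-trace) control of the perturbed cluster kernel $e_n(x,x;t)$ on $B(0,r)$, and you have not proved it --- you defer to ``estimates of the same nature as those of Section~\ref{sect-3}'', but Section~\ref{sect-3} itself takes Proposition~\ref{prop-2.2} as input, so this is circular as written, and in any case Section~\ref{sect-3} produces operator-norm and trace bounds on $\uppi_n-\uppi^0_n$, not the diagonal kernel bound you want. Your Schur/contour argument for~\ref{prop-2.2-i} also needs care: the cluster-by-cluster sum you sketch carries a $\log n$ (as you note), and controlling $(QH_VQ-z)^{-1}$ on $\Ran Q$ near the gap is exactly the kind of statement you are trying to prove; the paper's invocation of Proposition~\ref{prop-2.1} handles this in one line. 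In short, your strategy is viable but is essentially the machinery of Section~\ref{sect-3} run early; the paper's min-max argument is shorter, needs only the unperturbed eigenfunctions, and closes with no residual obstacle.
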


\begin{proof}
We need to  estimate $\lambda_{n,k}$ from above and below. 

\begin{enumerate}[label=(\alph*), wide, labelindent=0pt]

\item\label{pf-2.2-a}
We start from the easier upper estimate for $\lambda _{n,k}$.  
Let $\N(\tau)$ be the number of eigenvalues below 
$\tau\in [\lambda_{n,k} ^0, \lambda_{n,k} ^0 +C_0n^{-3}]$; we know from Proposition~\ref{prop-2.1}\ref{prop-2.1-ii} that 
under assumption (\ref{eqn-2.9}) $\lambda_{n,k}\le \lambda^0_{n,k}+C_0n^{-3}$ for sure. Recall that
\begin{gather}
\N(\tau) = \max \dim \cL\,,
\label{eqn-2.12}\\
\intertext{where maximum is taken over subspaces $\cL\subset \sL^2(\bR^3)$ such that}
(H_V u,u) - \tau \|u\|^2<0 \qquad \forall u\in \cL, \ u\ne 0. 
\label{eqn-2.13}\\
\intertext{Therefore we can replace the latter condition by}
(H_{V^0} u,u) -\tau \|u\|^2 + \varsigma (U_- u,u)\|^2<0 \qquad \forall u\in \cL, \ u\ne 0\,,
\label{eqn-2.14}
\end{gather}
(where here and below   $U_\pm \coloneqq \max (\pm U,0)$) but then instead of equality in (\ref{eqn-2.12}) we get an inequality
\begin{equation}
\N(\tau)\ge  \max \dim \cL.
\tag*{$\textup{(\ref{eqn-2.12})}_<$}\label{eqn-2.12-<}
\end{equation}

Let  $\lambda=\lambda^0_{n,k}$; let us consider up $\cL= \cL_1\oplus \cL_2 $ with  $\cL_1=\oplus \sum_{n' <n_*} \sH_{n',l'}$, $\cL_2\subset \oplus \sum_{n_*\le n' \le n} \sH_{n',l'}$, where $\sH_{n',l'}$  is a linear span of  basis functions\footnote{\label{foot-4} In the spherical coordinates $(r,\varphi,\theta)$. Here and below $R_{n,l}(r)$ are defined by (\ref{eqn-A.2}), (\ref{eqn-B.5}) in the non-relativistic and relativistic case respectively, $Y_{l',m}(\varphi, \theta)$ are spherical harmonics and a natural range is $m=-l',\ldots, l'$.}  
\begin{equation}
\Upsilon_{n',l',m}\coloneqq \Upsilon_{n',l',m}(r,\varphi,\theta;\beta) =  R_{n',l'}(r;\beta)Y_{l',m}(x/|x|)
\label{eqn-2.15}
\end{equation}
with $m=-l',\ldots,l'$. Let  $n_*=n-C_1$. Then (\ref{eqn-2.14}) holds provided
\begin{multline}
\sum_{1\le i\le 2} \Bigl[(H_{V^0} u_i,u_i) -\tau \|u_i\|^2 + 2\varsigma (U_-u,u) \Bigr]<0 \\
\forall u_i\in \cL_i,\ u=u_1+u_2 \ne 0.
\label{eqn-2.16}
\end{multline}

In virtue of Proposition~\ref{prop-2.1}\ref{prop-2.1-ii} under assumption (\ref{eqn-2.9}) the left-hand expression of (\ref{eqn-2.16}) is negative with any $\tau\ge  \lambda^0_{n,k}$ for all $u=u_1\in\cL_1$, $u\ne 0$.  Then we need to consider only $u_2\in \cL_2$. 
Decomposing $\cL_2\ni v=\sum_{n',l'} R_{n',l'}(|x|)  v _{n',l'}$, where $v _{n',l'}=v _{n',l'}(x/|x|)$ are eigenfunctions of the Laplacian on $\bS^2$ corresponding to eigenvalue $-l'(l'+1)$, we estimate 
\begin{multline}
(U_- u,u) \le (\phi u,u) \\
\begin{aligned}
&\le  \sum_{n',n'';l'} \1 v_{n',l'}\1\, \1v_{n'',l'}\1  \int \phi (t) |R_{n',l'}(t)|\, |R_{n'',l'}(t)|\,t^2dt\\
&\le C \sum_{n',l'} \1v_{n',l'}\1^2  \int \phi (t) |R_{n',l'}(t)|^2\,t^2dt
\end{aligned}
\label{eqn-2.17}
\end{multline}
with $\phi(x)\coloneqq\phi_r(x)$, characteristic function of $B(0,r)$; we used that $\phi(x)$ is spherically symmetric, that  $w_{n',l'}$ and $w_{n'',l''}$ are 
orthogonal for $l'\ne l''$, and that $n'$ and $n''$ are between $n^*=n-C_1$ and $n$. Here $\1v\1=\1v\|_{\sL^2(\bS^2)}$.

Recall that due to Propositions~\ref{prop-A.8}\ref{prop-A.8-iii} and \ref{prop-B.5} in the non-relativistic and relativistic cases respectively,\begin{gather}
|R_{n',l}|\le C\left\{\begin{aligned}
 &\langle x\rangle^{-3/4}n'^{-3/2} &&\text{for\ \ } l\le C_0\langle x\rangle^{1/2},\\
 &l^{-s} n'^{-3/2} &&\text{for\ \ } l\ge C_0\langle x\rangle^{1/2}.
 \end{aligned}\right.
 \label{eqn-2.18}\\
\shortintertext{Then}
(U_- u,u) \le C r^{3/2}  n^{-3} \sum_{n',l'}\1 v_{n',l'}\1^2 =  C r^{3/2}  n^{-3}\|u\|^2 \qquad \text{for\ \ } u\in \cL_2\,.
\label{eqn-2.19}
\end{gather}

Therefore  we can take $n_*=n$: \begin{phantomequation}\label{eqn-2.20}\end{phantomequation}\begin{phantomequation}\label{eqn-2.21}\end{phantomequation}
\begin{gather}
\lambda_{n', k'}\le \lambda_{n,k}^0\qquad  \text{for\ \ } n'<n,\ k,\ k'.
\tag*{$\textup{(\ref{eqn-2.20})}_<$}\label{eqn-2.20-<}\\
\intertext{Further, we conclude that}
\lambda_{n,k} \le \mu_{n,k}
\tag*{$\textup{(\ref{eqn-2.21})}_<$}\label{eqn-2.21-<}
\end{gather}
 which are eigenvalues of $\uppi_n^0 H_{V^0+ 2\varsigma U_-}\uppi_n^0$ where $\uppi_n^0$ is a projector to
$\cL_2=\oplus \sum_l \sH_{n,l}$. Therefore\begin{phantomequation}\label{eqn-2.22}\end{phantomequation}\begin{phantomequation}\label{eqn-2.23}\end{phantomequation}
\begin{gather}
\sum_{k} \lambda_{n,k} \le \Tr \bigl[\uppi_n^0 H_{V^0- 2\varsigma U_-}\uppi_n^0\bigr]=
\Tr \bigl[\uppi_n^0 H_{V^0}\uppi_n^0\bigr] + 2\varsigma \Tr \bigl[\uppi_n^0 U_-\uppi_n^0\bigr]
\tag*{$\textup{(\ref{eqn-2.22})}_<$}\label{eqn-2.22-<}
\end{gather}
where the first term equals $\sum_k \lambda_{n,k}^0$ and 
\begin{multline}
\Tr \bigl[ \uppi_n^0U_-\uppi_n^0\bigr] =\sum_{l,m} \int U_-(x) |R_{n,l}(|x|)|^2 |Y_{l,m} (x/|x|)|^2\,dx \\
=\frac{1}{4\pi} \sum_{l,m}(2l+1)  \int U_-(x) |R_{n,l}(|x|)|^2 \,dx \le Cr \| \langle x\rangle U_-\|_{\sL^1} n^{-3}
\tag*{$\textup{(\ref{eqn-2.23})}_<$}\label{eqn-2.23-<}
\end{multline}
due to equality
\begin{gather}
\sum_m |Y_{l,m}|^2 = \frac{1}{4\pi}(2l+1)
\label{eqn-2.24}
\end{gather}
and (\ref{eqn-2.18}). We arrive to
\begin{gather}
\sum_{k} (\lambda_{n,k}- \lambda_{n,k} ^0) \le  C\varsigma r\|\langle x\rangle^{-3/2} U_-\|_{\sL^1}.
\tag*{$\textup{(\ref{eqn-2.11})}_<$}\label{eqn-2.11-<} 
\end{gather}

\item\label{pf-2.2-b}
To estimate $\lambda_{n,k}$ from below we need to estimate $\N(\tau)$ from above.
To do this we replace (\ref{eqn-2.16}) by 
\begin{multline}
\sum_{1\le i\le 2} \Bigl[(H_{V^0} u_i,u_i) -\tau \|u_i\|^2 - 2\varsigma (U_+u_i,  u_i)\Bigr]<0 \\
\forall u_i\in \cH_i,\ u=u_1+u_2\ne 0. 
\label{eqn-2.25}
\end{multline}
Then $\cH_1$ and $\sH_2$ become ``detached''. Let us take $\cH_1=\oplus \sum_{n' \le n^*, l'} \sH_{n',l'}$ with $n^*=n+  C_1$,  and $\cH_2 = \sL^2(\bR^3)\ominus \cH_1$.  Then in virtue of Proposition~\ref{prop-2.1} for $u=u_2\in \cH_2$ expression (\ref{eqn-2.25}) is non-negative and therefore (\ref{eqn-2.12}) holds with maximum taken over $\cL\subset \cH_1$.

Note that  
\begin{gather}
\N(\tau) = N- \tilde{\N}(\tau)\qquad \text{with\ \ } \tilde{\N}(\tau) =\max \dim \tilde{\cL}\,,
\label{eqn-2.26}
\end{gather}
where maximum is taken over subspaces on which quadratic form 
\begin{equation}
(H_{V^0}u,u)-\tau \|u\|^2+2\varsigma (U_+u,u) \ge 0
\label{eqn-2.27}
\end{equation}
and $N=\dim \cH_1$. Now we need to estimate $\tilde{N}(\tau)$ from below can pick-up test functions delivering this inequality. We consider
$\tilde{\cL} \subset \oplus \sum_{n\le n'\le n^*, l'} \sH_{n',l'}$.

Then, repeating arguments of  Part~\ref{pf-2.2-a}   we arrive to 
\begin{gather}
\lambda_{n', k'} > \lambda_{n,k}^0\qquad  \text{for\ \ } n' > n,\ k,\ k'.
\tag*{$\textup{(\ref{eqn-2.20})}_>$}\label{eqn-2.20->}\\
\intertext{Further, we conclude that}
\lambda_{n,k} \ge \mu'_{n,k}
\tag*{$\textup{(\ref{eqn-2.21})}_>$}\label{eqn-2.21->}\\
\intertext{which are eigenvalues of $\uppi_n^0 H_{V^0 - 2\varsigma U_+}\uppi_n^0$ and}
\sum_{k} \lambda_{n,k} \le \Tr \bigl[\uppi_n^0 H_{V^0- 2\varsigma U_+}\uppi_n^0\bigr]=
\Tr \bigl[\uppi_n^0 H_{V^0}\uppi_n^0\bigr] - 2\varsigma \Tr \bigl[\uppi_n^0 U_+\uppi_n^0\bigr]
\tag*{$\textup{(\ref{eqn-2.22})}_>$}\label{eqn-2.22->}\\
\intertext{where the first term equals $\sum_k \lambda_{n,k}^0$ and }
\Tr \bigl[ \uppi_n^0U_+\uppi_n^0\bigr]  \le Cr \| \langle x\rangle U_+\|_{\sL^1} n^{-3}
\tag*{$\textup{(\ref{eqn-2.23})}_>$}\label{eqn-2.23->}\\
\intertext{due to equality (\ref{eqn-2.24}) and (\ref{eqn-2.18}). We arrive to}
-\sum_{k} (\lambda_{n,k}-\lambda^0_n)\le C \varsigma r \| \langle x\rangle ^{-3/2+\nu}U_+   \|_{\sL^1} n^{-3}.
\tag*{$\textup{(\ref{eqn-2.11})}_>$}\label{eqn-2.11->}
\end{gather}
We leave details to the reader. 

Combining \ref{eqn-2.20-<} and \ref{eqn-2.20->}  we arrive to  Statement~\ref{prop-2.2-i} and combining 
\ref{eqn-2.11-<} and \ref{eqn-2.11->}  we arrive to   Statement~\ref{prop-2.2-ii}.
\end{enumerate}\vskip-1.6\baselineskip
\end{proof}

\begin{proposition}\label{prop-2.3}
Let conditions \textup{(\ref{eqn-2.1})} and  \textup{(\ref{eqn-2.9})} be fulfilled. Then for  $n\le C_0^{-1}\sqrt{r}$
 the  following estimates hold:
 \begin{align}
&|\lambda_{n,k}-\lambda_{n,k}^0|\label{eqn-2.28}\\
&\qquad\le   C\varsigma \bigl(\| \langle x\rangle^{-3/2}U\|_{\sL^1 (B(0,C_0n^2))} + \| \langle x\rangle^{-s}U\|_{\sL^1 (B(0,r)\setminus B(0,C_0n^2))}\bigr),\notag\\
\shortintertext{and}
&\sum_k  |\lambda_{n,k}-\lambda_{n,k}^0|\label{eqn-2.29}\\
&\ \qquad\le   C\varsigma \bigl(\| \langle x\rangle^{-1/2}U\|_{\sL^1 (B(0,C_0n^2))} + \| \langle x\rangle^{-s}U\|_{\sL^1 (B(0,r)\setminus B(0,C_0n^2))}\bigr) 
\notag\end{align}
with arbitrarily large exponent $s$.
\end{proposition}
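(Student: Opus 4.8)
We would prove Proposition~\ref{prop-2.3} by rerunning the variational ``detaching'' scheme from the proof of Proposition~\ref{prop-2.2}, now in the complementary range $n\le C_0^{-1}\sqrt r$, in which $\supp U$ protrudes past the outer turning point $\asymp n^2$ of the $n$-th cluster. Accordingly we would split $B(0,r)$ into the bulk $B(0,C_0n^2)$, controlled by the radial bounds (\ref{eqn-2.18}), and the exterior shell $B(0,r)\setminus B(0,C_0n^2)$, on which the radial eigenfunctions of $\cC_n^0$ are super-polynomially small.

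First I would record that in this range $r\ge C_0^2n^2$, so (\ref{eqn-2.9}) forces $\varsigma\le\epsilon_0 r^{-3/2}\le\epsilon_0 C_0^{-3}n^{-3}$; hence $\varsigma\|U\|_{\sL^\infty}\le\varsigma$ and, by (\ref{eqn-2.19}), $\varsigma\,\|\uppi_n^0|U|\uppi_n^0\|\le C\varsigma r^{3/2}n^{-3}\le C\epsilon_0 n^{-3}$ are both as small as we wish on the scale $n^{-3}$ of the gaps between the clusters $\cC_{n'}^0$; here $\uppi_n^0$ is the projector onto $\bigoplus_l\sH_{n,l}$, and $\uppi_n^0 H_{V^0}\uppi_n^0$ has exactly the eigenvalues $\lambda_{n,k}^0$ on $\Ran\uppi_n^0$, the $\Upsilon_{n,l,m}$ being genuine eigenfunctions of $H_{V^0}$. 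Consequently the detaching argument of Parts~\ref{pf-2.2-a} and~\ref{pf-2.2-b} of the proof of Proposition~\ref{prop-2.2} applies (here even more easily, since $\varsigma$ itself is $\lesssim\epsilon_0 n^{-3}$); it shows that the negative spectrum of $H_V$ still splits into clusters $\cC_n$ paired with $\cC_n^0$, and that $\lambda_{n,k}$ lies between the $k$-th eigenvalue of $\uppi_n^0 H_{V^0}\uppi_n^0-2\varsigma\,\uppi_n^0 U_+\uppi_n^0$ and that of $\uppi_n^0 H_{V^0}\uppi_n^0+2\varsigma\,\uppi_n^0 U_-\uppi_n^0$. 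Since $U_\pm\ge0$, each of these two perturbations moves every eigenvalue in one fixed direction; estimating the positive and negative parts of $\lambda_{n,k}-\lambda_{n,k}^0$ separately and using additivity of the trace, I would then obtain, with no perturbative error,
\[
|\lambda_{n,k}-\lambda_{n,k}^0|\le 2\varsigma\,\|\uppi_n^0|U|\uppi_n^0\|,\qquad
\sum_k|\lambda_{n,k}-\lambda_{n,k}^0|\le 2\varsigma\,\Tr\bigl[\uppi_n^0|U|\uppi_n^0\bigr].
\]

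It then remains to estimate the two right-hand sides. Expanding $u\in\Ran\uppi_n^0$ in the basis (\ref{eqn-2.15}) and applying Cauchy--Schwarz together with (\ref{eqn-2.24}) gives $|u(x)|^2\le\|u\|^2\cdot\tfrac1{4\pi}\sum_l(2l+1)|R_{n,l}(|x|)|^2$, so that both $\|\uppi_n^0|U|\uppi_n^0\|$ and $\Tr[\uppi_n^0|U|\uppi_n^0]$ are at most $\tfrac1{4\pi}\int|U(x)|\sum_l(2l+1)|R_{n,l}(|x|)|^2\,dx$ (the trace being in fact equal to it). On $B(0,C_0n^2)$, summing (\ref{eqn-2.18}) over $l\le n-1$ --- the allowed range $l\le C_0\langle x\rangle^{1/2}$ contributing $\lesssim\langle x\rangle\cdot\langle x\rangle^{-3/2}n^{-3}=\langle x\rangle^{-1/2}n^{-3}$ and the remaining $l$ even less once $s$ is large --- yields $\sum_l(2l+1)|R_{n,l}(|x|)|^2\le C\langle x\rangle^{-1/2}n^{-3}$, which on that ball is also $\le C\langle x\rangle^{-3/2}$ since $\langle x\rangle\le Cn^2$. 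On the shell $|x|\ge C_0n^2$ I would invoke the exponential decay of $R_{n,l}$ past the outer turning point, furnished by Appendices~\ref{sect-A} and~\ref{sect-B} (Propositions~\ref{prop-A.8} and~\ref{prop-B.5}); with $C_0$ taken large this gives $\sum_l(2l+1)|R_{n,l}(|x|)|^2\le C_s\langle x\rangle^{-s}$ for every $s$, the exponential smallness in $n$ absorbing the polynomial factor $n^{2s}$. Inserting the bulk bound $\langle x\rangle^{-3/2}$ into the norm estimate, the bulk bound $\langle x\rangle^{-1/2}n^{-3}\le\langle x\rangle^{-1/2}$ into the trace estimate, and the shell bound $\langle x\rangle^{-s}$ into both, produces (\ref{eqn-2.28}) and (\ref{eqn-2.29}) respectively.

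The step I expect to require real care is the reduction to the cluster: one must verify that in this range the perturbation is genuinely small on the scale $n^{-3}$ of the cluster gaps, so that no eigenvalue can migrate between clusters and the two-sided squeezing is legitimate --- which is precisely what (\ref{eqn-2.9}) secures once $r\ge C_0^2n^2$. The only ingredient not already present in the proof of Proposition~\ref{prop-2.2} is the exterior decay estimate, which replaces the crude volume bound $\int_{B(0,r)}\langle x\rangle^{-3/2}\,dx\asymp r^{3/2}$; the rest of the bookkeeping is identical, and I would leave it to the reader.
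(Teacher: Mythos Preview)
Your proposal is correct and follows essentially the same route as the paper's own proof, which consists of the single remark that one reruns the argument of Proposition~\ref{prop-2.2} while invoking the decay bound \textup{(\ref{eqn-2.30})}, $|R_{n,l}(r)|\le Cr^{-s}$ for $n\le C_0^{-1}\sqrt r$, on the exterior shell. Your write-up is precisely the ``easy details'' the paper leaves to the reader: the observation that \textup{(\ref{eqn-2.9})} in this regime forces $\varsigma\lesssim n^{-3}$ so the detaching is automatic, the two-sided sandwiching by $\uppi_n^0(H_{V^0}\mp 2\varsigma U_\pm)\uppi_n^0$, and the split of $\int|U|\sum_l(2l+1)|R_{n,l}|^2\,dx$ into bulk and shell. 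One small correction: for the relativistic exterior decay you want Proposition~\ref{prop-B.3} (estimate \textup{(\ref{eqn-B.13})}), not Proposition~\ref{prop-B.5}, which only covers the classically allowed region $r_*\le r\le r^*$.
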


\begin{proof}
Proof follows the proof of Proposition~\ref{prop-2.2} and takes in account that 
\begin{gather}
|R_{n,l} (r) |\le C r^{-s} \qquad \text{for\ \ } n\le C_0^{-1}\sqrt{r}.
\label{eqn-2.30}
\end{gather}
We leave easy details to the reader.
\end{proof}

\chapter{Estimates of projectors}
\label{sect-3}

In this section we first consider projectors associated with operators $H_{V^0}$ and $H_V$ and provide some estimate for them; then we combine these results with the results of Section~\ref{sect-2}. We assume that condition~(\ref{eqn-2.9}) is fulfilled. Then due to Proposition~\ref{prop-2.2}\ref{prop-2.2-i}  there is a cluster of eigenvalues  
$\cC^0_n=\{\lambda^0_{n,k}\}_{k=1,\ldots,\nu_n}$ and $\cC_n=\{\lambda_{n,k}\}_{k=1,\ldots,\nu_n}$
of operators $H^0\coloneqq H_{V^0}$ and $H\coloneqq H_{V}$ respectively. Let us denote by $\uppi^0_n$ and $\uppi_n$ the projectors to the corresponding spectral subspaces.  

\begin{proposition}\label{prop-3.1}
\begin{enumerate}[label=(\roman*), wide, labelindent=0pt]
\item\label{prop-3.1-i}
In the framework of Proposition~\ref{prop-2.2}  assume that\footnote{\label{foot-5} Obviously this is a stronger assumption than (\ref{eqn-2.9}).}
\begin{equation}
 \varsigma \bigl(r^{3/2}(1+|\log (rn^{-2})|)+  n^2\bigr) \le \epsilon.
\label{eqn-3.1}
\end{equation}
Then the following estimate holds:
\begin{multline}
\| \uppi_n - \uppi^0_n\| \le
C\varsigma \bigl(r^{3/2}(1+|\log (rn^{-2})|) + n^{1/2}r^{3/4}\bigr)\\
\times \Bigl[ 1+\bigl(\varsigma  \bigl( r^{3/2}  (1+|\log (rn^{-2})|) + n^{2} \bigr)\bigr)^{K-1}\varsigma n^3\Bigr]
 \label{eqn-3.2}
\end{multline}
with arbitrarily large exponent $K$.

\item\label{prop-3.1-ii}
On the other hand, in the framework of Proposition~\ref{prop-2.3} assume that 
$\supp(U)\subset B(0,r)\setminus B(0,r/2)$.
Then the following estimate holds
\begin{equation}
\| \uppi_n - \uppi^0_n\| \le C\varsigma r^{-s}
 \label{eqn-3.3}
\end{equation}
with arbitrarily large exponent $s$.
\end{enumerate}
\end{proposition}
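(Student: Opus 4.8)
The plan is to control $\uppi_n-\uppi^0_n$ via the resolvent difference integrated over a contour $\Gamma_n$ separating the cluster $\cC_n$ (equivalently $\cC^0_n$, since by Proposition~\ref{prop-2.2}\ref{prop-2.2-i} they share common borders inside $[\eta_n+\epsilon_1 n^{-3},\eta_{n+1}-\epsilon_1 n^{-3}]$) from the rest of the spectrum. I would take $\Gamma_n$ to be a circle (or rectangle) of radius $\asymp n^{-3}$ centered at the cluster, so that $\operatorname{dist}(\Gamma_n,\Spec H^0)\asymp n^{-3}$ and, by Proposition~\ref{prop-2.2}\ref{prop-2.2-i}, also $\operatorname{dist}(\Gamma_n,\Spec H)\asymp n^{-3}$; the contour has length $\asymp n^{-3}$. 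Then
\begin{equation*}
\uppi_n-\uppi^0_n=\frac{1}{2\pi i}\oint_{\Gamma_n}\bigl[(H-z)^{-1}-(H^0-z)^{-1}\bigr]\,dz
=-\frac{\varsigma}{2\pi i}\oint_{\Gamma_n}(H-z)^{-1}U(H^0-z)^{-1}\,dz,
\end{equation*}
using the second resolvent identity and $H-H^0=-\varsigma U$.

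The key estimate is a weighted bound on $U(H^0-z)^{-1}$ (and on $(H-z)^{-1}U$) for $z\in\Gamma_n$. Split the resolvent via the spectral projectors $\uppi^0_{n'}$: the ``near'' block $n'\in[n-C_1,n+C_1]$ contributes $\|U\uppi^0_{n'}(H^0-z)^{-1}\uppi^0_{n'}\|\le C\varsigma^{-1}\cdot\|U\uppi^0_{n'}\|\cdot\operatorname{dist}(z,\cC^0_{n'})^{-1}$, and here the bound $\|U\uppi^0_{n}\|\le C\varsigma r^{3/2}n^{-3}$ from \eqref{eqn-2.19} (applied with $U_\pm$) together with $\operatorname{dist}(z,\cC^0_{n'})^{-1}\le Cn^3$ on $\Gamma_n$ after integrating over the contour of length $\asymp n^{-3}$ yields the leading term $C\varsigma r^{3/2}$. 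The logarithmic factor $1+|\log(rn^{-2})|$ and the refined term $n^{1/2}r^{3/4}$ come from summing the contributions of the ``far'' levels $n'$ with $|n'-n|\ge C_1$: there $\operatorname{dist}(z,\cC^0_{n'})\asymp |\,n^{-2}-n'^{-2}|\asymp |n-n'|\,n^{-3}$ for $|n-n'|\le n$, and $\asymp n'^{-2}$ for $n'\gg n$, while the off-diagonal matrix elements $\|U\uppi^0_{n'}(H^0-z)^{-1}\uppi^0_{n}\|$ must be estimated using the decay of $R_{n',l}R_{n,l}$ from \eqref{eqn-2.18} on $B(0,r)$; the sum $\sum_{1\le |n-n'|\le n}|n-n'|^{-1}$ produces the logarithm, and balancing the two regimes for $r$ versus $n^2$ produces the $\min(r^{3/2},\text{\dots})$-type split reflected in the stated bound. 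This is the step I expect to be the main obstacle: the careful bookkeeping of weighted off-diagonal resolvent matrix elements between distinct clusters, uniformly in $n$, $k$, $\beta$, and the extraction of exactly the claimed combination of $r^{3/2}(1+|\log(rn^{-2})|)$ and $n^{1/2}r^{3/4}$.

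Finally, the bracketed factor $[1+(\varsigma(r^{3/2}(1+|\log|)+n^2))^{K-1}\varsigma n^3]$ with arbitrarily large $K$ arises from iterating the resolvent identity $K$ times: write $(H-z)^{-1}=(H^0-z)^{-1}+\varsigma(H^0-z)^{-1}U(H^0-z)^{-1}+\dots+\varsigma^{K-1}(H^0-z)^{-1}(U(H^0-z)^{-1})^{K-1}+\varsigma^{K}(H-z)^{-1}(U(H^0-z)^{-1})^{K}$; each intermediate application of $U(H^0-z)^{-1}$ on the diagonal block costs a factor $\varsigma(r^{3/2}(1+|\log(rn^{-2})|)+n^2)$ (which is $\le\epsilon<1$ by \eqref{eqn-3.1}, so the geometric series converges), while the last, non-optimized factor $(H-z)^{-1}$ on $\Gamma_n$ of length $n^{-3}$ costs the crude $\varsigma n^3$; the first $K-1$ terms assemble into the genuinely small leading estimate and the tail is the error bracket. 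For part \ref{prop-3.1-ii}, the same contour argument applies but now with $\supp U\subset B(0,r)\setminus B(0,r/2)$ and $n\le C_0^{-1}\sqrt r$, so that \eqref{eqn-2.30} gives $|R_{n,l}(t)|\le Ct^{-s}$ on the support of $U$; hence $\|U\uppi^0_{n'}\|\le C\varsigma r^{-s}$ for all relevant $n'$, and a single application of the second resolvent identity (no iteration needed) together with $\operatorname{dist}(\Gamma_n,\Spec H^0)$, $\operatorname{dist}(\Gamma_n,\Spec H)\gtrsim n^{-3}$ and contour length $\asymp n^{-3}$ immediately yields $\|\uppi_n-\uppi^0_n\|\le C\varsigma r^{-s}$, absorbing the $n^{\pm3}$ powers into the arbitrarily large $s$.
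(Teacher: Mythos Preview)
Your approach is essentially the paper's: Riesz contour, iterated resolvent identity, spectral decomposition of $(z-H^0)^{-1}$ into blocks $\uppi^0_{n'}$, and a crude bound on the single surviving $(z-H)^{-1}$ in the remainder term. Two points, however, are either missing or not quite right.

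First, the paper does not work directly with $\uppi_n-\uppi^0_n$ but invokes Lemma~\ref{lemma-3.2} to reduce to $(\uppi_n-\uppi^0_n)\uppi^0_n$; multiplying (\ref{eqn-3.5}) on the right by $\uppi^0_n$ pins the rightmost resolvent to the $n$-th cluster and is what makes the subsequent bookkeeping clean. Your contour computation implicitly picks out a $\uppi^0_n$ factor via residues, so your scheme is not wrong, but you should say so explicitly rather than leave it buried.

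Second, your derivation of the two pieces $r^{3/2}(1+|\log(rn^{-2})|)$ and $n^{1/2}r^{3/4}$ is not correct as written. The bound you quote from (\ref{eqn-2.19}) is a quadratic-form bound, giving $\|U\uppi^0_n\|\le C r^{3/4}n^{-3/2}$ (square root of what you wrote, and no $\varsigma$); the sandwiched bound $\|\uppi^0_q U\uppi^0_p\|\le Cr^{3/2}n^{-3}$ for $p,q\asymp n$ is a separate input. The paper then introduces a cutoff $P$ and splits each inserted identity as $\Pi_{n,P}+(I-\Pi_{n,P})$ with $\Pi_{n,P}=\sum_{|p-n|\le P}\uppi^0_p$: the $\Pi_{n,P}$ piece is summed via the sandwiched bound and $\|\uppi^0_p(z-H^0)^{-1}\|\le Cn^3(|p-n|+1)^{-1}$, producing $r^{3/2}\log P$; the $(I-\Pi_{n,P})$ piece uses $\|(I-\Pi_{n,P})(z-H^0)^{-1}\|\le Cn^3P^{-1}$ together with the one-sided bound $\|U\uppi^0_n\|\le Cr^{3/4}n^{-3/2}$, producing $n^{3/2}r^{3/4}P^{-1}$. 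Optimizing $P\asymp\min(n^{3/2}r^{-3/4},n)$ yields exactly the first factor in (\ref{eqn-3.2}); this is claim (\ref{eqn-3.9}) in the paper. Your ``balancing the two regimes'' sentence does not capture this mechanism and would not produce the term $n^{1/2}r^{3/4}$. For part~\ref{prop-3.1-ii} your argument is fine and matches the paper.
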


\begin{proof}
Due to Lemma~\ref{lemma-3.2} below it is sufficient to prove estimates (\ref{eqn-3.2}) and (\ref{eqn-3.3}) for operator 
$(\uppi_n-\uppi_n^0)\uppi_n^0$.

\begin{enumerate}[label=(\roman*), wide, labelindent=0pt]
\item\label{pf-3.1-i}
Note that 
\begin{equation}
\uppi_n =\frac{1}{2\pi i} \oint_{\gamma_n} (z- H)^{-1}\,dz\,,
\label{eqn-3.4}
\end{equation}
where $\gamma_n $ is the circle of radius $\asymp n^{-3}$ with with counter-clockwise orientation, 
contains inside cluster of the eigenvalues $\cC_n$  and outside all other clusters and $\dist (\gamma_n, \Spec (H))\asymp n^{-3}$. 
Similar formula holds for $\uppi_n^t$, corresponding to operator $H^t=H_{V^0 +t\varsigma U}$, and we can select a circle which serves to all  $|t|\le 1$. 

Then
\begin{multline}
\uppi_n -\uppi_n^0 =\frac{1}{2\pi i} \oint_{\gamma_n} \bigl[(z- H)^{-1}-(z- H^0)^{-1}\bigr]\,dz=\\
\begin{aligned}
 \sum_{1\le k\le K} 
 -\frac{\varsigma ^k}{2\pi i} &\oint_{\gamma_n} (z- H^0)^{-1}U\cdots U(z- H^0)^{-1}\,dz \\
-\frac{\varsigma ^{K+1}}{2\pi i} &\oint_{\gamma_n}  (z- H)^{-1}U (z- H^0)^{-1}U\cdots U(z- H^0)^{-1} \,dz\,,\qquad
\end{aligned}
\label{eqn-3.5}
\end{multline}
where each \emph{regular term} (with $k\le K$) contains $(k+1)$ factors $(z-H^0)^{-1}$ and $k$ factors $U$, while the \emph{remainder term}  contains $(K+1)$ factors $(z-H^0)^{-1}$, one factor $(z-H)^{-1}$ (not on the last position) and $(K+1)$ factors $U$.  ``Out of the box'' the  each factor $(z-H^0)^{-1}$ or $(z-H)^{-1}$ has an operator norm not exceeding $Cn^3$, so the operator norm of the whole term does not exceed $C(\varsigma n^3)^k$. To improve it, multiply by $\uppi_n^0$ from the right:
\begin{multline}
(\uppi_n -\uppi_n^0)\uppi_n^0  =
 \sum_{1\le k\le K} 
 -\frac{\varsigma ^k}{2\pi i} \oint_{\gamma_n} (z- H^0)^{-1}U\cdots U(z- H^0)^{-1}\uppi_n^0\,dz \\
-\frac{\varsigma ^{K+1}}{2\pi i} \oint_{\gamma_n}  (z- H)^{-1}U (z- H^0)^{-1}U\cdots U(z- H^0)^{-1}\uppi_n^0\,dz\,.
\label{eqn-3.6}
\end{multline}
Observe that $\| U \uppi_n^0\| \le Cr^{3/4}n^{-3/2}$   in  virtue of  (\ref{eqn-2.18}),
so the operator norm of $k$-th term does not exceed $C(\varsigma n^3)^k r^{3/4}n^{-3/2}$. 

To improve this estimate, let us replace   $j$-th copy of $(z-H^0)^{-1}$ (with $j=1,\ldots,k$) by
\begin{equation*}
\uppi^0_{p_j}(z-H^0)^{-1}=\uppi^0_{p _j}(z-H^0)^{-1}\uppi^0_{p _j}
\end{equation*}
 with $p_j\asymp n$. Then each factor $U$ will be sandwiched between $\uppi^0_{p_j}$ and $\uppi^0_{p_{j+1}}$ and $\| \uppi^0_q U \uppi_p^0\| \le Cr^{3/2}$ again in virtue of (\ref{eqn-2.18}). Therefore the operator norm of this new $k$-th term (with $k\le K$) does not exceed   
\begin{gather*}C(\varsigma r^{3/2})^k \prod _{1\le j\le k}  (|p_j-n|+1)^{-1}\\
\shortintertext{because}
\|\uppi^0_p (z-H^0)^{-1}\|\le Cn^3 (|p-n|+1)^{-1}.
\end{gather*}
If we take a sum of this expression  over $p_j\colon |p_j-n|\le P$ we get
\begin{equation}
C(\varsigma r^{3/2})^k(1+ \log (P)) ^k  
\label{eqn-3.7}
\end{equation}
(with $1\le P\le n/2$) and therefore if we insert instead of $\uppi_{p_j}$ their sums 
\begin{equation}
\Pi_{n,P}=\sum_{p\colon |p-n|\le P}\uppi_p^0
\label{eqn-3.8}
\end{equation}
we get that the operator norm would not exceed (\ref{eqn-3.7}) (for $P\ge 1$). 

On the other hand, if instead of $\Pi_{n,P}$ we insert $(I- \Pi_{n,P})$, then the corresponding factor  $\varsigma r^{3/2}(1+\log (P)))$ in the estimate should be replaced by  $\varsigma n^3 P^{-1}$ because $\|(I-\Pi_{n,P} )(z-H^0)^{-1}\|\le Cn^3 P^{-1}$. 

\begin{claim}\label{eqn-3.9}
For  the left-most copy, however of $(z-H^0)^{-1}$, since it  serves just one $U$ (on the right of it), we actually get  $\varsigma (n^{3/2}r^{3/4}  (1+\log(P)) + n^3 P^{-1}) $.
\end{claim}

Therefore the operator norm of the regular  term with $k=1$ in (\ref{eqn-3.6}) does not exceed
$C\varsigma  \bigl( r^{3/2} \log(P) + n^{3/2}r^{3/4} P^{-1}\bigr) $ and optimizing this expression by $P\colon P\le n/2$ we get 
the first factor in the right-hand expression of (\ref{eqn-3.2}), achieved as $P\asymp \min (n^{3/2}r^{-3/4},\, n)$.

Further, the operator norm of the regular  term with $k\ge 2$ in (\ref{eqn-3.6}) does not exceed
$C\varsigma  ^k \bigl( r^{3/2} \log(P) + n^{3/2}r^{3/4} P^{-1}\bigr)   \bigl( r^{3/2} \log(Q) + n^{3} Q^{-1}\bigr) ^{k-1}$
because we can take $\Pi_{n,P}$ for the left-most factor $(z-H^0)^{-1}$ and $\Pi_{n,Q}$ for all other factors $(z-H^0)^{-1}$  and we take 
$P$ the same as above and $Q\asymp \min (n^{3}r^{-3/2},\, n)$. Then we get  the first factor in the right-hand expression of (\ref{eqn-3.2}), multiplied by 
\begin{gather}
\bigl(\varsigma  \bigl( r^{3/2}  (1+|\log (rn^{-2})|) + n^{2} \bigr)\bigr)^{k-1},
\label{eqn-3.10}
\end{gather}
which due to condition (\ref{eqn-3.1}) is less than $1$.

The same procedure, applied to  the remainder term, results in the expression for $k=K$ multiplied by  $\varsigma n^3$, which means the first factor in (\ref{eqn-3.2}), multiplies by expression (\ref{eqn-3.10}) with $k=K$, and multiplied by $\varsigma n^3$. It completes the proof of Statement~\ref{prop-3.1-i}.

\item\label{pf-3.1-ii}
Proof of Statement~\ref{prop-3.1-ii} is trivial due to (properties of $R_{n,l}$)\ref{eqn-2.18}) and assumption $\supp(U)\subset B(0,r)\setminus B(0,r/2)$ with $n\le C_0^{-1}\sqrt{r}$.
\end{enumerate}
\vskip-1.7\baselineskip
\end{proof}

\begin{lemma}\label{lemma-3.2}
Let $\uppi $ and $\uppi^0$ be two orthogonal projectors such that
\begin{gather}
\|(I-\uppi) \uppi^0\|\le \varepsilon <1/2
\label{eqn-3.11}\\
\shortintertext{and}
\dim \Ran (\uppi)= \dim \Ran (\uppi^0)<\infty.
\label{eqn-3.12}
\intertext{Then for all $p\colon 1\le p\le \infty$}
\|\uppi-\uppi^0\|_p\le 2(1-\varepsilon)^{-1}\|(I-\uppi) \uppi^0\|_p.
\label{eqn-3.13}
\end{gather}
\end{lemma}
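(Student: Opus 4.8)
The plan is to compare the singular‑value lists of $A\coloneqq(I-\uppi)\uppi^0$ and $B\coloneqq(I-\uppi^0)\uppi$, show they coincide, and then conclude by the triangle inequality. The reduction is immediate: since $\uppi,\uppi^0$ are self‑adjoint, $B^{*}=\uppi(I-\uppi^0)=\uppi-\uppi\uppi^0$ while $A=\uppi^0-\uppi\uppi^0$, so $\uppi-\uppi^0=B^{*}-A$, and hence for every $p\in[1,\infty]$
$$\|\uppi-\uppi^0\|_p\le\|A\|_p+\|B^{*}\|_p=\|A\|_p+\|B\|_p,$$
using invariance of Schatten norms under adjunction. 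It therefore suffices to prove $\|A\|_p=\|B\|_p$; given this, $\|\uppi-\uppi^0\|_p\le 2\|A\|_p\le 2(1-\varepsilon)^{-1}\|A\|_p$ because $(1-\varepsilon)^{-1}\ge1$. (Note that the hypothesis $\varepsilon<1/2$, and indeed $\varepsilon<1$, is more than this argument uses; it is retained because it controls how close the ranges are when the lemma is applied.)

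For the identity $\|A\|_p=\|B\|_p$ I would compute the positive operators $A^{*}A=\uppi^0(I-\uppi)\uppi^0=\uppi^0-\uppi^0\uppi\uppi^0$, which is supported on $\Ran\uppi^0$, and $B^{*}B=\uppi(I-\uppi^0)\uppi=\uppi-\uppi\uppi^0\uppi$, supported on $\Ran\uppi$. Setting $Y\coloneqq\uppi^0\uppi$ we have $\uppi^0\uppi\uppi^0=YY^{*}$ and $\uppi\uppi^0\uppi=Y^{*}Y$, and $YY^{*}$ and $Y^{*}Y$ share the same non‑zero eigenvalues with the same multiplicities. The crucial point — and this is precisely where the hypothesis $\dim\Ran\uppi=\dim\Ran\uppi^0\eqqcolon d<\infty$ is used — is that the multiplicity of the eigenvalue $0$ also matches once $YY^{*}$ is viewed on the $d$‑dimensional space $\Ran\uppi^0$ and $Y^{*}Y$ on the $d$‑dimensional space $\Ran\uppi$: in each case it equals $d-\rank Y$. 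Hence $(\uppi^0\uppi\uppi^0)|_{\Ran\uppi^0}$ and $(\uppi\uppi^0\uppi)|_{\Ran\uppi}$ have identical spectra as multisets of $d$ numbers in $[0,1]$, so $A^{*}A|_{\Ran\uppi^0}$ and $B^{*}B|_{\Ran\uppi}$ do too, whence $A$ and $B$ have the same singular values and $\|A\|_p=\|B\|_p$ for all $p$.

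The only delicate step is the matching of the zero‑eigenvalue multiplicities; everything else is bookkeeping. This is also exactly the place where finiteness and equality of the ranks is indispensable — in infinite dimensions, or for projectors of unequal rank, $\|A\|_p$ and $\|B\|_p$ genuinely differ. An alternative route, also workable, is to use $\|A\|<1$ to see that $\uppi^0\uppi\uppi^0$ is boundedly invertible on $\Ran\uppi^0$, form from $(\uppi^0\uppi\uppi^0)^{-1/2}$ the standard partial isometry intertwining $\uppi$ and $\uppi^0$, and extract the Schatten bound from that construction; but the singular‑value comparison is shorter and yields the sharper constant $2$.
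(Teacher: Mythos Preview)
Your argument is correct. Both you and the paper reach the final step by the same triangle inequality
\[
\|\uppi-\uppi^0\|_p \le \|(I-\uppi)\uppi^0\|_p + \|(I-\uppi^0)\uppi\|_p,
\]
and the whole content lies in controlling the second term by the first. The paper does this \emph{constructively}: it uses $\varepsilon<1$ together with equality of ranks to invert $\uppi|_{\Ran\uppi^0}\colon \Ran\uppi^0\to\Ran\uppi$, and then pushes the Schatten norm through that inverse, picking up the factor $(1-\varepsilon)^{-1}$. You instead prove the exact equality $\|(I-\uppi^0)\uppi\|_p=\|(I-\uppi)\uppi^0\|_p$ by matching singular values of $\uppi^0\uppi\uppi^0$ and $\uppi\uppi^0\uppi$; the equal-rank hypothesis enters only to equalize the multiplicity of the zero eigenvalue, exactly as you say. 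Your route is shorter, yields the sharper constant $2$ in place of $2(1-\varepsilon)^{-1}$, and never uses the bound $\varepsilon<1/2$ (or even $\varepsilon<1$) --- which, as you note, is harmless since the lemma is only applied when that bound holds anyway. The alternative you sketch at the end (building the partial isometry from $(\uppi^0\uppi\uppi^0)^{-1/2}$) is essentially the paper's line of reasoning.
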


\begin{proof}
Due to assumption (\ref{eqn-3.11})  operator $\uppi:\Ran (\uppi^0)\to \Ran(\uppi)$  is injective and therefore due to (\ref{eqn-3.12}) it is invertible, and the norm of an inverse operator $\upvarphi :\Ran(\uppi) \to \Ran(\uppi^0)$ does not exceed $(1-\varepsilon)^{-1}$ and also
\begin{gather*}
\|(I- \upvarphi)\uppi\|_p \le (1-\epsilon)^{-1} \|\uppi-\uppi^0\|_p.
\end{gather*}
Then the same estimate holds for the corresponding norm of 
\begin{gather*}
\uppi^0(I- \upvarphi)\uppi =\uppi^0\uppi - \uppi^0 \upvarphi\uppi =\uppi^0\uppi -\uppi
\end{gather*}
and for adjoint  to it:   $\|\uppi \uppi^0 -\uppi\|_p\le (1-\epsilon)^{-1} \|\uppi-\uppi^0\|_p$.
Since 
\begin{gather*}
\| \uppi^0 -\uppi\|_p\le \|\uppi \uppi^0 -\uppi\|_p + \|\uppi \uppi^0 -\uppi^0\|_p
\end{gather*}
 we arrive to (\ref{eqn-3.18}).
\end{proof}

\begin{proposition}\label{prop-3.3}
\begin{enumerate}[label=(\roman*), wide, labelindent=0pt]
\item\label{prop-3.3-i}
In the framework of Proposition~\ref{prop-2.2}  the following estimates hold 
\begin{gather}
|\Tr [U\uppi^0_n]|\le Cr\|\langle x\rangle^{-3/2}\|_{\sL^1}n^{-3}
\label{eqn-3.14}
\end{gather}
and under assumption \textup{(\ref{eqn-3.1})} also
\begin{multline}
|\Tr [U(\uppi_n-\uppi^0_n)]|\\
\shoveright{\le 
C\varsigma r \|\langle x\rangle^{-3/2}\|_{\sL^1} \bigl( r^{3/2}n^{-3}(1+|\log (rn^{-2})|) +    n^{-1}\bigr)}\\[3pt]
+ C\varsigma r^{5/2} \bigl(\varsigma  \bigl( r^{3/2}  (1+|\log (rn^{-2})|) + n^{2} \bigr)\bigr)^{K-1}\varsigma n^3\Bigr].
\label{eqn-3.15}
\end{multline}
\item\label{prop-3.3-ii}
In the framework of Proposition~\ref{prop-3.1}\ref{prop-3.1-ii}   the following estimate holds
\begin{gather}
|\Tr [U\pi^0_n]|+ |\Tr [U\pi _n]|\le  C r^{-s}.
\label{eqn-3.16}
\end{gather}
\end{enumerate}
\end{proposition}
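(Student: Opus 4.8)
The plan is to prove Proposition~\ref{prop-3.3} in three short steps, treating the two statements in parallel with the machinery already developed in Sections~\ref{sect-2} and~\ref{sect-3}.

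\textbf{Step 1: the diagonal estimate $\Tr[U\uppi_n^0]$.} First I would expand $\Tr[U\uppi_n^0]$ in the basis $\Upsilon_{n,l,m}$ of the cluster subspace $\Ran(\uppi_n^0)=\oplus\sum_l\sH_{n,l}$. Writing $U=U_+-U_-$ one gets, exactly as in the chain $\textup{(\ref{eqn-2.23})}_<$ inside the proof of Proposition~\ref{prop-2.2},
\begin{equation*}
\Tr[U\uppi_n^0]=\frac{1}{4\pi}\sum_{l}(2l+1)\int U(x)\,|R_{n,l}(|x|)|^2\,dx,
\end{equation*}
using the addition formula (\ref{eqn-2.24}). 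Then the pointwise bound (\ref{eqn-2.18}) for $R_{n,l}$ gives $|R_{n,l}(|x|)|^2\le C\langle x\rangle^{-3/2}n^{-3}$ on $\{l\le C_0\langle x\rangle^{1/2}\}$ and faster decay in $l$ otherwise, and summing the geometric-type series in $l$ produces the factor $\langle x\rangle^{3/2}$ lost, so the $x$-integral is controlled by $\|\langle x\rangle^{-3/2}U\|_{\sL^1}$. Collecting the $r$-support of $U$ (which forces $l\lesssim r^{1/2}$) yields (\ref{eqn-3.14}). This is really the same computation as $\textup{(\ref{eqn-2.23})}_<$, so I would just cite it.

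\textbf{Step 2: the off-diagonal estimate $\Tr[U(\uppi_n-\uppi_n^0)]$.} Write $U(\uppi_n-\uppi_n^0)=U(\uppi_n-\uppi_n^0)\uppi_n^0+U(\uppi_n-\uppi_n^0)(I-\uppi_n^0)$; by Lemma~\ref{lemma-3.2} it suffices to control $\Tr$ of $U\cdot(\uppi_n-\uppi_n^0)\uppi_n^0$ together with its adjoint-type counterpart, and $\uppi_n-\uppi_n^0=(\uppi_n-\uppi_n^0)\uppi_n^0+\uppi_n^0(\uppi_n-\uppi_n^0)+(\text{higher order})$ by the standard identity $\uppi_n-\uppi_n^0=(\uppi_n-\uppi_n^0)^2\cdot(\cdots)$. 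The cleanest route is to use the resolvent expansion (\ref{eqn-3.6}) directly under the trace: $\Tr[U(\uppi_n-\uppi_n^0)\uppi_n^0]$ equals a sum of contour integrals of terms $U(z-H^0)^{-1}U\cdots U(z-H^0)^{-1}\uppi_n^0$. For the $k=1$ term one has $\Tr$ of a contour integral of $U(z-H^0)^{-1}U\uppi_n^0$; inserting projectors $\Pi_{n,P}$ and $I-\Pi_{n,P}$ as in the proof of Proposition~\ref{prop-3.1}\ref{prop-3.1-i}, the leftmost factor $U$ is now sandwiched (through the trace, cyclically) against $\uppi_n^0$ on one side and a $\Pi$ on the other, so each ``$\|U\uppi^0\|$'' of Proposition~\ref{prop-3.1} that contributed $r^{3/4}n^{-3/2}$ is upgraded here to a genuine \emph{trace} $|\Tr[U\uppi^0_{p}]|\le Cr\|\langle x\rangle^{-3/2}U\|_{\sL^1}n^{-3}$ from Step~1 on one end, while the remaining $U$'s still contribute operator-norm factors $\|\uppi^0_qU\uppi^0_p\|\le Cr^{3/2}$. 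Optimizing $P$ (resp.\ $P,Q$ for $k\ge2$) exactly as in Proposition~\ref{prop-3.1} converts the prefactor of Proposition~\ref{prop-3.1}\ref{prop-3.1-i} into the bracketed expression in (\ref{eqn-3.15}); the extra factor $r^{5/2}$ in the remainder line of (\ref{eqn-3.15}) comes from replacing the leftover $\|U\uppi_n^0\|$-type bound by the cruder $\|U\|\cdot\|\langle x\rangle^{3/2}\|_{\sL^\infty(B(0,r))}$ once the structure is exhausted, i.e.\ from the remainder term of (\ref{eqn-3.6}) carrying a free $(z-H)^{-1}$ with norm $\le Cn^3$. The bookkeeping of which factor gets the trace bound and which gets the operator-norm bound is the only delicate point.

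\textbf{Step 3: the far-zone case \ref{prop-3.3-ii}.} Under $\supp(U)\subset B(0,r)\setminus B(0,r/2)$ and $n\le C_0^{-1}\sqrt r$, estimate (\ref{eqn-2.30}) gives $|R_{n,l}(|x|)|\le Cr^{-s}$ on $\supp U$, so the expansion of Step~1 immediately yields $|\Tr[U\uppi_n^0]|\le Cr^{-s}$ (absorbing powers of $r$ from the $l$-sum and the volume $|B(0,r)|$ into the arbitrarily large $s$). For $\Tr[U\uppi_n]$, combine this with $\|\uppi_n-\uppi_n^0\|_1\le Cr^{-s}$ from Proposition~\ref{prop-3.1}\ref{prop-3.1-ii} (via Lemma~\ref{lemma-3.2}) and $|U|\le1$ on a set of measure $O(r^3)$: $|\Tr[U(\uppi_n-\uppi_n^0)]|\le\|U\|\,\|\uppi_n-\uppi_n^0\|_1\le Cr^{-s}$, proving (\ref{eqn-3.16}).

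The main obstacle, as noted, is Step~2: one must track, through the cyclic trace and the insertion of the $\Pi_{n,P}$ decompositions, exactly which one of the $k{+}1$ resolvents and $k$ copies of $U$ is allowed to ``spend'' the sharp trace bound $|\Tr[U\uppi^0_p]|\le Cr\|\langle x\rangle^{-3/2}U\|_{\sL^1}n^{-3}$ while the rest only get operator-norm control, and to check that this assignment is consistent with the optimization over $P$ and $Q$ that produced (\ref{eqn-3.2}). Everything else is a direct re-run of computations already carried out in Propositions~\ref{prop-2.2} and~\ref{prop-3.1}.
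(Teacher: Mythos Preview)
Your Steps~1 and~3 are essentially the paper's arguments and are fine (for Step~3, note that Proposition~\ref{prop-3.1}\ref{prop-3.1-ii} gives only the operator norm, but $\rank\uppi_n^0\le Cn^2\le Cr$ lets you pass to trace norm and absorb the polynomial loss into $s$).

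Step~2, however, has a real gap. The claim that an occurrence of $\|U\uppi_n^0\|$ can be ``upgraded to a genuine trace $|\Tr[U\uppi_p^0]|$'' is not a valid inequality: in a product under $\Tr$ you cannot replace an operator-norm factor by the scalar $\Tr[U\uppi_p^0]$. Likewise, the reduction via Lemma~\ref{lemma-3.2} does not do what you need here --- that lemma controls Schatten norms of $\uppi_n-\uppi_n^0$ in terms of those of $(I-\uppi_n)\uppi_n^0$, but it does not reduce $\Tr[U(\uppi_n-\uppi_n^0)]$ to $\Tr[U(\uppi_n-\uppi_n^0)\uppi_n^0]$.

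The paper's route avoids both issues. It works with the expansion (\ref{eqn-3.5}) of $\uppi_n-\uppi_n^0$ itself (not (\ref{eqn-3.6})), inserts cluster projectors $\uppi_{p_j}^0$ or $I-\Pi_{n,Q}$ around every copy of $(z-H^0)^{-1}$, and uses cyclicity of the trace so that the projector after the \emph{last} resolvent reappears before the \emph{first} $U$. The key observation is then residue-theoretic: in a regular term, if \emph{none} of the inserted projectors equals $\uppi_n^0$, the integrand is holomorphic inside $\gamma_n$ and the contour integral vanishes. Hence at least one factor $\uppi_n^0$ must occur, and since it sits on both sides of the same resolvent, \emph{two} copies of $\uppi_n^0$ are available. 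This is what lets one use the Hilbert--Schmidt bound
\[
\|U\uppi_n^0\|_2^2=\Tr[\uppi_n^0|U|^2\uppi_n^0]\le Cr\,\|\langle x\rangle^{-3/2}U\|_{\sL^1}\,n^{-3}
\]
(estimate (\ref{eqn-3.14}) applied to $|U|^2\le|U|$) on each of the two sandwiching factors, and $|\Tr[AB]|\le\|A\|_2\|B\|_2$ for the split. That is the correct mechanism converting the operator-norm bookkeeping of Proposition~\ref{prop-3.1} into the trace estimate (\ref{eqn-3.15}); your ``upgrade'' idea should be replaced by this holomorphy-plus-Hilbert--Schmidt argument.
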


\begin{proof}
We  actually proved (\ref{eqn-3.14}) (see \ref{eqn-2.23-<} and \ref{eqn-2.23->}).

To prove (\ref{eqn-3.15}) we apply the same approach as in the proof of (\ref{eqn-3.2}) but we insert factors   $\uppi_{m_j}^0$ or $(I-\Pi_{n,Q})$ after the last copy of $(z-H_0)^{-1}$ and, due to the trace, the same factor appears in front of the first copy of $U$. So again, each factor $U$ is sandwiched between two factors of of the type $\uppi_m^0$ or $(I-\Pi_{n,Q})$. 

Observe that in the regular terms one of the factors must be $\uppi_n^0$; otherwise the integrand is holomorphic inside of the contour  of integration $\gamma_n$ and an integral is $0$. And since there is one factor $\uppi_n^0$, the second appears automatically (on both sides of the same copy of $(z-H^0)^{-1}$. 

Therefore two estimate this term for two copies of $U\uppi_n^0$ and $\uppi_n^0 U$ we can use Hilbert-Schmidt norms. But
\begin{equation}
\| U\uppi_n^0\|_2^2 = \| \uppi_n^0U\uppi_n^0\|_2^2 =\Tr [\uppi_n^0 |U|^2 \uppi_n^0]\le C \| \langle x\rangle^{-3/2}U^2\|_{\sL^1}
\label{eqn-3.17}
\end{equation} 
and due to assumption $|U|\le 1$ we can replace there $U^2$ by $U$. So in comparison with the norm estimate each rreqular term acquires two factors $\| \langle x\rangle^{-3/2}U\|_{\sL^1}^{1/2}$ (thus, just one factor $\| \langle x\rangle^{-3/2}U\|_{\sL^1}$).
Then the absolute value of the first term does not exceed
\begin{gather*}
C\varsigma r \| \langle x\rangle^{-3/2}U^\|_{\sL^1} \bigl( r^{3} n^{-3} (1+\log (P)) +  r^{3/2} P^{-1}\bigr)\,,
\end{gather*}
and optimizing by $P\colon P\le n$ we arrive to the first term in (\ref{eqn-3.15}).

The  $k$-th regular term is estimated in the same way albeit with an extra factor (\ref{eqn-3.10}), which is less than $1$. 

While the remainder term could be  estimated a bit better than in (\ref{eqn-3.15}), we do not care, since under proper assumption it will be negligible anyway. The same applies to estimate (\ref{eqn-3.16}) as well.
\end{proof}

\begin{proposition}\label{prop-3.4}
The following formula holds
\begin{equation}
\Tr \bigl[H_V^- - H_{V^0}^-\bigr] = -\int _0^\varsigma \Tr \bigl[U \uptheta (-H_{V^t})\bigr]\,dt.
\label{eqn-3.18}
\end{equation}
with $V^t=V^0 + t U$.
\end{proposition}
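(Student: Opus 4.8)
The plan is to prove this by differentiating in a coupling parameter and applying a first-order perturbation (Feynman--Hellmann / Duhamel) identity for the sum of negative eigenvalues. Write $H_{V^t} = H_{V^0} + tU$ for $t \in [0,\varsigma]$ (here $U$ plays the role of the perturbation; note this $t$-scaling differs by a factor $\varsigma$ from the one in Section~\ref{sect-3}, which is harmless). By Proposition~\ref{prop-2.1}\ref{prop-2.1-ii} and Proposition~\ref{prop-2.2}, for all $t$ in this range the negative spectrum of $H_{V^t}$ is discrete, consists of clusters accumulating only at $0$, and the eigenvalues are summable against the measure that makes $\Tr[H_{V^t}^-]$ finite (recall $\N^{\W}(\tau)\asymp|\tau|^{-3}$, so $\sum_k|\lambda_{n,k}|\asymp n^{-2}\cdot n^2 \cdot n^{-3}$ is summable in $n$). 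Hence $t\mapsto \Tr[H_{V^t}^-] = -\sum_{\lambda_j(t)<0}\lambda_j(t)$ is a well-defined finite function, and the claimed identity is exactly
\[
\Tr[H_{V^\varsigma}^-] - \Tr[H_{V^0}^-] = \int_0^\varsigma \frac{d}{dt}\Tr[H_{V^t}^-]\,dt = -\int_0^\varsigma \Tr\bigl[U\,\uptheta(-H_{V^t})\bigr]\,dt .
\]

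The key computation is the derivative $\frac{d}{dt}\Tr[H_{V^t}^-] = -\Tr[U\,\uptheta(-H_{V^t})]$. First I would establish this at points $t$ where $0$ is not an eigenvalue of $H_{V^t}$ (which, by the cluster/gap structure of Proposition~\ref{prop-2.2}\ref{prop-2.2-i} and Remark~\ref{rem-1.1}, is the case for all but possibly finitely many $t$, and in fact — since the clusters sit in fixed gaps $[\eta_n+\epsilon_1 n^{-3},\eta_{n+1}-\epsilon_1 n^{-3}]$ away from $0$ — for \emph{all} $t\in[0,\varsigma]$ once $\epsilon_0$ is small). At such $t$ one may choose a fixed contour $\gamma$ in the complex plane enclosing the entire negative spectrum of $H_{V^s}$ for $s$ near $t$ and crossing the real axis in the spectral gap near $0$ and to the left of the lowest eigenvalue; then
\[
\Tr[H_{V^s}^-] = -\frac{1}{2\pi i}\oint_\gamma z\,(z-H_{V^s})^{-1}\,dz ,
\]
with the integrand trace-class uniformly in $s$ near $t$ (the resolvent is analytic in $s$ and, on $\gamma$, $(z-H_{V^s})^{-1}$ maps into the trace ideal with norm controlled via the decay of $R_{n,l}$ as in the estimates behind \eqref{eqn-2.18}, \eqref{eqn-3.17}). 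Differentiating under the integral and using $\partial_s(z-H_{V^s})^{-1} = (z-H_{V^s})^{-1}U(z-H_{V^s})^{-1}$ together with cyclicity of the trace and $\frac{d}{dz}(z-H)^{-1} = -(z-H)^{-2}$, an integration by parts in $z$ converts $-\frac{1}{2\pi i}\oint_\gamma z\,(z-H)^{-2}U(z-H)^{-1}\,dz$ into $-\frac{1}{2\pi i}\oint_\gamma (z-H)^{-1}U(z-H)^{-1}\,dz = -\Tr[U\,\uptheta(-H_{V^s})]$, since the latter contour integral of the resolvent-squared-free integrand is just $-U$ times the spectral projector onto the negative part. Continuity of $t\mapsto \Tr[U\,\uptheta(-H_{V^t})]$ (again from resolvent analyticity plus the uniform trace-norm bounds) then lets me integrate the derivative over $[0,\varsigma]$ to get the stated formula.

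The main obstacle is the trace-class / uniformity bookkeeping: one must check that $z\,(z-H_{V^t})^{-1}$ and $U(z-H_{V^t})^{-1}$ are genuinely trace-class on $\gamma$ with bounds uniform in $t\in[0,\varsigma]$ and in the (noncompact) contour near $z=0$, so that differentiation under the integral, the integration by parts in $z$, and passage to the limit along $\gamma$ are all legitimate. This is where Section~\ref{sect-2}'s eigenvalue-localization (clusters $\cC_n$ of width $O((\beta^2+\varsigma)n^{-3})$ separated by gaps $\asymp n^{-3}$, with $-\lambda_{n,k}\asymp n^{-2}$) and the eigenfunction decay estimate \eqref{eqn-2.18} do the real work: they give $\|\langle x\rangle^{-3/2}\|$-type Hilbert--Schmidt control on $U\uppi_n^0$ summable over clusters, and they guarantee $\dist(\gamma,\Spec(H_{V^t}))\gtrsim n^{-3}$ on the piece of $\gamma$ near the $n$-th cluster, exactly as exploited in \eqref{eqn-3.4}--\eqref{eqn-3.6} and \eqref{eqn-3.17}. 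With those bounds in hand, everything else is the standard Duhamel-for-$H^-$ argument; I expect the author's proof to either invoke this directly or to reduce \eqref{eqn-3.18} to the cluster-by-cluster identity $\frac{d}{dt}\sum_k\lambda_{n,k}(t) = \Tr[U\uppi_n(t)]$ (finite-dimensional Feynman--Hellmann in each gap) and then sum over $n$ using the summability just described.
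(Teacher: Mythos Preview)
Your approach is correct and is precisely what the paper has in mind: the author simply declares the proof ``trivial'' and gives no further detail for Proposition~\ref{prop-3.4} itself, while the contour-integral/Feynman--Hellmann manipulation you describe (cyclicity of the trace, $\oint_{\gamma_n}(z-H)^{-2}\,dz=0$, reducing to $\frac{d}{dt}\sum_k\lambda_{n,k}(t)=\Tr[U\uppi_n^t]$) is exactly the computation the author carries out cluster-by-cluster in the immediately following Remark~\ref{rem-3.5}\ref{rem-3.5-ii}. Your write-up is thus more detailed than the paper's, but not different in substance; the trace-class bookkeeping you flag is indeed handled by the eigenvalue localization of Section~\ref{sect-2} and the Hilbert--Schmidt bound \eqref{eqn-3.17}, just as you say.
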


\begin{proof}
The proof is trivial.\end{proof}

\begin{remark}\label{rem-3.5}
\begin{enumerate}[label=(\roman*), wide, labelindent=0pt]
\item\label{rem-3.5-i}
One can rewrite the right-hand expression of (\ref{eqn-3.18})  as
\begin{gather}
-\varsigma \Tr \bigl[ U \uptheta (-H_{V^0})\bigr] - 
\int _0^\varsigma \Tr \bigl[U \bigl( \uptheta (-H_{V^t})-\uptheta (-H_{V^0})\bigr)\bigr]\,dt
\label{eqn-3.19}\\
\intertext{with the last term equal}
\sum_{n\ge 1}
-\int _0^\varsigma \Tr \bigl[U ( \uppi_n^t-\uppi_n^0)\bigr]\,dt 
\label{eqn-3.20}
\end{gather}
with $\uppi_n^t$ associated with $H_{V^t}$.

\item\label{rem-3.5-ii}
Further, each term in this sum could be rewritten as
\begin{equation}
\sum_{1\le k \le \nu_n }  (\lambda _{n,k} - \lambda ^0_{n,k}). 
\label{eqn-3.21}
\end{equation}
\end{enumerate}
\end{remark}

Indeed, Statement \ref{rem-3.5-i} is trivial, and to prove Statement \ref{rem-3.5-ii} observe that expression (\ref{eqn-3.21}) is equal to
\begin{align*}
\frac{1}{2\pi i}\Tr \Bigl[ \oint_{\gamma_n} H _{V^t}  (z- H_{V^t } )^{-1}\, dz\Bigr]^{t=\varsigma}_{t=0}=&
\frac{1}{2\pi i}\Tr \Bigl[ \int_0^\varsigma \Bigl( \frac{\partial\ }{\partial t} 
\oint_{\gamma_n} H _{V^t}  (z- H_{V^t } )^{-1}\, dz\Bigr)\,dt\Bigr] \\
=&
\frac{1}{2\pi i} \Tr \Bigl[ \int_0^\varsigma \Bigl( 
-\oint_{\gamma_n} U   (z- H_{V ^t} )^{-1}\, dz     \\
+& \oint_{\gamma_n}  \underbracket {H_{V^t}(z- H_{V ^t} )^{-1}U  (z- H_{V ^t} )^{-1} }\,dz    \Bigr)\,dt \Bigr].
\end{align*}
Observe that because of the trace  we can rewrite the  selected expression  as $H_{V^t}(z- H_{V ^t} )^{-2}U $  and  that $\oint _{\gamma_n} (z- H_{V ^t} )^{-2}\, dz=0$. What remains is obviously equal to the term in (\ref{eqn-3.20}). 

\begin{proposition}\label{prop-3.6}
In the framework of Proposition~\ref{prop-3.1} assume that
\begin{gather}
r\le \varsigma ^{-2/3+\varkappa}
\label{eqn-3.22}
\intertext{with arbitrarily small exponent $\varkappa>0$. Then}
|\Tr \bigl[H_V^- - H_{V^0}^-\bigr] + \varsigma \Tr \bigl[ U \uptheta(-H_{V^0}))\bigr]|\le  
C\varsigma^{2-\varkappa}  r\|\langle x\rangle^{-3/2}U  \|_{\sL^1}.
\label{eqn-3.23}
\end{gather}
\end{proposition}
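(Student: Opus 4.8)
\textbf{Proof plan for Proposition~\ref{prop-3.6}.}

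The plan is to start from the exact identity of Proposition~\ref{prop-3.4}, rewritten as in Remark~\ref{rem-3.5}\ref{rem-3.5-i}: the quantity to be estimated equals
\begin{equation*}
\Bigl| \int_0^\varsigma \Tr \bigl[U\bigl(\uptheta(-H_{V^t})-\uptheta(-H_{V^0})\bigr)\bigr]\,dt\Bigr|
= \Bigl| \int_0^\varsigma \sum_{n\ge 1}\Tr\bigl[U(\uppi_n^t-\uppi_n^0)\bigr]\,dt\Bigr|,
\end{equation*}
where $\uppi_n^t$ is the spectral projector of $H_{V^t}$ onto the cluster $\cC_n^t$. So it suffices to bound $\sum_{n\ge 1}|\Tr[U(\uppi_n^t-\uppi_n^0)]|$ uniformly in $t\in[0,\varsigma]$ and then integrate in $t$, picking up the extra factor $\varsigma$. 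Since $t\le\varsigma$ and $r\le\varsigma^{-2/3+\varkappa}$ forces $\varsigma r^{3/2}\le\varsigma^{\varkappa}$ small and, more strongly, $\varsigma(r^{3/2}(1+|\log(rn^{-2})|)+n^2)\le\epsilon$ to hold for the relevant range of $n$, the hypothesis~(\ref{eqn-3.1}) of Proposition~\ref{prop-3.1}\ref{prop-3.1-i} is satisfied, and I may invoke estimate~(\ref{eqn-3.15}) of Proposition~\ref{prop-3.3}\ref{prop-3.3-i} (with $\varsigma$ replaced by $t\le\varsigma$) for each $n$.

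The core of the argument is then the summation in $n$. One must split the sum at $n\asymp\sqrt r$. For $n\le C_0^{-1}\sqrt r$ (i.e.\ $r\ge C_0 n^2$), the relevant bound is the "$r$ replaced by $n^2$" version, and — since $U$ is supported in $B(0,r)$ but the eigenfunctions $R_{n,l}$ decay like $r^{-s}$ there by~(\ref{eqn-2.30}) — the contribution is $O(\varsigma^2 r^{-s})$ and hence negligible; this is where Proposition~\ref{prop-3.3}\ref{prop-3.3-ii} (or the bound behind it) is used. For $n\ge C_0^{-1}\sqrt r$, the main term in~(\ref{eqn-3.15}) is
\begin{equation*}
C\varsigma r\|\langle x\rangle^{-3/2}U\|_{\sL^1}\bigl(r^{3/2}n^{-3}(1+|\log(rn^{-2})|)+n^{-1}\bigr).
\end{equation*}
Here lies the only genuine subtlety: the second term $n^{-1}$ is \emph{not} summable in $n$, so one cannot simply sum termwise over all $n\ge\sqrt r$. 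The resolution is that $|\Tr[U(\uppi_n^t-\uppi_n^0)]|$ must also be controlled by the crude bound coming from $\|\uppi_n-\uppi_n^0\|$ combined with the support of $U$ — equivalently, by re-examining~(\ref{eqn-3.6})—(\ref{eqn-3.17}): once $n$ is so large that $rn^{-2}\ll1$, the factor $\| \uppi_q^0 U\uppi_p^0\|$ and $\|U\uppi_n^0\|_2^2\le C\|\langle x\rangle^{-3/2}U\|_{\sL^1}$ estimates already contain decaying powers of $n$ (through the $\langle x\rangle^{-3/4}n^{-3/2}$ pointwise bound~(\ref{eqn-2.18}) restricted to $|x|\le r$), which upgrades the $n^{-1}$ term to something like $r^{3/2}n^{-3}$ times $\|\langle x\rangle^{-3/2}U\|_{\sL^1}$, or else one observes that the $n^{-1}$ term is dominated by the $r^{3/2}n^{-3}$ term precisely when $n\le C r^{3/4}$, while for $n\ge Cr^{3/4}$ the support restriction kills it. In either case, after summing, the $\log$ factors and the geometric-series tails contribute at most an extra $r^\varkappa$ (absorbed into $Z^\varkappa$-type losses, here $\varsigma^{-\varkappa}$), so
\begin{equation*}
\sum_{n\ge C_0^{-1}\sqrt r}\bigl|\Tr[U(\uppi_n^t-\uppi_n^0)]\bigr|\le C\varsigma^{1-\varkappa} r\|\langle x\rangle^{-3/2}U\|_{\sL^1}.
\end{equation*}

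The remainder-term contributions in~(\ref{eqn-3.15}), carrying the factor $(\varsigma(r^{3/2}(1+|\log(rn^{-2})|)+n^2))^{K-1}\varsigma n^3$, are dealt with by choosing $K$ large: under~(\ref{eqn-3.22}) the base of the power is $\le\epsilon$, and each summand is $\le C\varsigma^{K+1}r^{5/2}n^{3}\cdot(\text{something}\le\epsilon)^{K-1}$; summing the geometric-type series in $n$ (the $n^2$ inside the bracket tames the $n^3$ once $K$ is large enough, e.g.\ $K\ge 5$, because then $(\varsigma n^2)^{K-1}n^3$ is summable for $n\gtrsim\sqrt r$ given $\varsigma n^2\le\varsigma r\le\varsigma^{1/3}$) produces a bound $\le C\varsigma^{2+\varkappa'}r\|\langle x\rangle^{-3/2}U\|_{\sL^1}$, strictly better than claimed. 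Finally, integrating the uniform-in-$t$ bound over $t\in[0,\varsigma]$ yields the factor $\varsigma$, and collecting the main contribution gives~(\ref{eqn-3.23}) with the stated $\varsigma^{2-\varkappa}r\|\langle x\rangle^{-3/2}U\|_{\sL^1}$. The main obstacle, to reiterate, is the non-summable $n^{-1}$ term: one must argue carefully that it is either absorbed by the support condition on $U$ for large $n$ or dominated by the summable $n^{-3}$ term for moderate $n$, rather than summing~(\ref{eqn-3.15}) blindly.
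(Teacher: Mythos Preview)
Your setup is correct and matches the paper: rewrite the quantity via Proposition~\ref{prop-3.4} and Remark~\ref{rem-3.5}\ref{rem-3.5-i}, then control the sum over clusters. The split at $n\asymp\sqrt r$ and the use of (\ref{eqn-3.15}) for moderate~$n$ are also what the paper does. The genuine gap is in your treatment of large~$n$.

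You correctly identify the obstacle --- the $n^{-1}$ term in (\ref{eqn-3.15}) is not summable --- but your proposed workarounds do not go through. The claim that ``for $n\ge Cr^{3/4}$ the support restriction kills it'' is not substantiated: the pointwise bound (\ref{eqn-2.18}) on $R_{n,l}$ restricted to $\{|x|\le r\}$ is already what produced the Hilbert--Schmidt estimate (\ref{eqn-3.17}) behind (\ref{eqn-3.15}), and there is no further gain from the support of~$U$ once $n\gg\sqrt r$. Re-examining (\ref{eqn-3.6})--(\ref{eqn-3.17}) as you suggest does not by itself upgrade $n^{-1}$ to $n^{-3}$.

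The paper's resolution is different and cleaner: it does \emph{not} attempt to sum (\ref{eqn-3.15}) over all~$n$. Instead it introduces a cutoff $N$ (eventually $N\asymp\varsigma^{-(1-\varkappa)/2}$) and treats the two ranges by different mechanisms. For $n\le N$, it sums (\ref{eqn-3.15}) --- the $n^{-1}$ term over $\sqrt r\le n\le N$ contributes only $\log(N/\sqrt r)\asymp|\log(rN^{-2})|$, which is harmless. For $n>N$, it abandons the projector estimate entirely and instead uses Remark~\ref{rem-3.5}\ref{rem-3.5-ii} to rewrite the $n$-th term (already integrated in~$t$) as $\sum_k(\lambda_{n,k}-\lambda^0_{n,k})+\varsigma\Tr[U\uppi_n^0]$; then Proposition~\ref{prop-2.2}\ref{prop-2.2-ii} (estimate (\ref{eqn-2.11})) and (\ref{eqn-3.14}) each give the bound $C\varsigma r\|\langle x\rangle^{-3/2}U\|_{\sL^1}n^{-3}$, which sums over $n>N$ to $C\varsigma r\|\langle x\rangle^{-3/2}U\|_{\sL^1}N^{-2}$. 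Balancing the two pieces at $N=\varsigma^{-(1-\varkappa)/2}$ yields (\ref{eqn-3.23}). The key idea you are missing is precisely this switch, for large~$n$, from the projector-difference estimate to the \emph{eigenvalue}-difference estimate (\ref{eqn-2.11}), which delivers the summable $n^{-3}$ directly.
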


\begin{proof}
The left-hand expression in (\ref{eqn-3.22}) does not exceed
\begin{multline}
\sum_{n\le N} |\sum_{k\le \nu_n} \bigl(\lambda_{n,k}-\lambda^0_{n,k} \bigr)+\varsigma \Tr [U\uppi^0_n]\, |+\\
\sum _{n>  N} \Bigl( \sum_{k\le \nu_n} \underbracket{|\lambda_{n,k}-\lambda^0_{n,k} |} + |\varsigma \Tr [U\uppi^0_n]|\Bigr)\,,
\label{eqn-3.24}
\end{multline}
where $N$ satisfying (\ref{eqn-3.1}) (if we substitute it instead of $n$) we will chose later.

Then  the terms in the first sum due to  Proposition~\ref{prop-3.3} and Remark~\ref{rem-3.5} do not exceed  the right-hand expression of (\ref{eqn-3.15}), multiplied by $\varsigma$, and their sum does not exceed the first term in the expression below (we sum for $n\colon C_0^{-1}\sqrt{r} \le n \le N$ assuming that $N\ge C_0^{-1}\sqrt{r}$). 

Meanwhile, due to Proposition~\ref{prop-2.2} the terms in the second sum do not exceed 
$C \varsigma r\|\langle x\rangle^{-3/2}U  \|_{\sL^1}n^{-3}$ and their sum does not exceed the second term in the same expression below:
\begin{equation}
C \varsigma   r\|\langle x\rangle^{-3/2}U  \|_{\sL^1}\Bigl( \varsigma   (1+|\log (rN^{-2})|) +  N^{-2}\Bigr).
\label{eqn-3.25}
\end{equation}
We skipped the remainder term which is justified under assumption
\begin{equation}
\varsigma^{1-\varkappa} (r^{3/2} +N^2) \le 1
\label{eqn-3.26}
\end{equation}
with arbitrarily small exponent $\varkappa>0$; then the remainder term does not  exceed $\varsigma^s$. Condition (\ref{eqn-3.26})  is slightly more restrictive than (\ref{eqn-3.1}) and is fulfilled due to (\ref{eqn-3.22}) and the choice $N=\varsigma^{-(1-\varkappa)/2}$, which almost minimizes 
(\ref{eqn-3.25}) but also satisfies  (\ref{eqn-3.26}).\end{proof}

While Proposition~\ref{prop-3.6} will be completely sufficient for our purpose in Section~\ref{sect-4} for small distances from the nuclei, for larger distances we will need a bit more complicated setup. Namely, instead of $V^0$ unperturbed potential will be  $V^0+\varepsilon \Phi$, where $\Phi$ satisfies the same restriction (\ref{eqn-2.1})  as $U$.

First, however, we need to modify Proposition~\ref{prop-3.3}:

\begin{proposition}\label{prop-3.7}
In the framework of Proposition~\ref{prop-2.2} assume that 
\begin{gather}
\supp (\Phi)\subset B(0,r),\qquad |\Phi|\le 1.
\label{eqn-3.27}\\
\shortintertext{and}
(\varepsilon+\varsigma)^{1-\varkappa}(r^{3/2}+n^2) \le 1
\label{eqn-3.28}
\end{gather}
wirh arbitrarily small exponent $\varkappa>0$. Then the following estimate hold:
\begin{multline}
|\Tr \bigl[U (\uppi_{n}(\varepsilon,\varsigma) - \uppi_{n}^0)\bigr]|+ |\Tr \bigl[\Phi  (\uppi_{n}(\varepsilon,\varsigma) - \uppi_{n}^0)\bigr]| \\[3pt]
\le Cr \bigl(\varsigma \| \langle x\rangle^{-3/2}U\|_{\sL^1}   +\varepsilon r^{3/2}\bigr)\bigl( r^{3/2}n^{-3}(1+|\log (rn^{-2})|) +    n^{-2}\bigr)\,,
\label{eqn-3.29}
\end{multline}
where  $\uppi_{n}(\varepsilon,\varsigma)$ are  projectors corresponding  operator to $H_{V^0+\varsigma U+\varepsilon \Phi}$.
\end{proposition}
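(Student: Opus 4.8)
The plan is to adapt the resolvent-expansion argument used in the proof of Proposition~\ref{prop-3.3}\ref{prop-3.3-i} (which in turn follows the proof of estimate (\ref{eqn-3.2})), this time expanding around the perturbed operator $H_{V^0}$ but keeping track of two small perturbations $\varsigma U$ and $\varepsilon\Phi$ simultaneously. First I would write
\begin{gather*}
\uppi_n(\varepsilon,\varsigma) = \frac{1}{2\pi i}\oint_{\gamma_n} (z-H_{V^0+\varsigma U+\varepsilon\Phi})^{-1}\,dz
\end{gather*}
with the same contour $\gamma_n$ as in (\ref{eqn-3.4}), valid since by assumption (\ref{eqn-3.28}) the total perturbation $\varsigma U+\varepsilon\Phi$ is small enough (it is a stronger hypothesis than (\ref{eqn-3.1}) with $\varsigma$ replaced by $\varepsilon+\varsigma$), so that Proposition~\ref{prop-2.2}\ref{prop-2.2-i} still applies to $H_{V^0+\varsigma U+\varepsilon\Phi}$ and $\dist(\gamma_n,\Spec)\asymp n^{-3}$. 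Then I would expand $(z-H_{V^0+\varsigma U+\varepsilon\Phi})^{-1}$ in a Neumann-type series in the single perturbation operator $W\coloneqq \varsigma U+\varepsilon\Phi$ exactly as in (\ref{eqn-3.5}), truncating at order $K$ with a remainder term, and then — to bound $\Tr[U(\uppi_n(\varepsilon,\varsigma)-\uppi_n^0)]$ — multiply through and use the cyclicity of the trace so that one extra factor $U$ (coming from the trace against $U$) is attached, just as in the proof of (\ref{eqn-3.15}).

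The key point is the same sandwiching trick: each factor of $(z-H_{V^0})^{-1}$ gets replaced by $\sum_{p}\uppi_p^0(z-H_{V^0})^{-1}\uppi_p^0$ or by $\Pi_{n,P}$ / $(I-\Pi_{n,P})$ as in (\ref{eqn-3.8}), so that every occurrence of $U$ or $\Phi$ is sandwiched between two projectors $\uppi_q^0,\uppi_p^0$ with $p,q\asymp n$. Then I use the two basic bounds from (\ref{eqn-2.18}): $\|\uppi_q^0 U\uppi_p^0\|\le Cr^{3/2}$ and $\|\uppi_q^0\Phi\uppi_p^0\|\le Cr^{3/2}$, together with $\|U\uppi_n^0\|,\|\Phi\uppi_n^0\|\le Cr^{3/4}n^{-3/2}$, and — crucially, for the term that survives the contour integral, which must contain a factor $\uppi_n^0$ — Hilbert-Schmidt estimates as in (\ref{eqn-3.17}): $\|U\uppi_n^0\|_2^2\le C\|\langle x\rangle^{-3/2}U\|_{\sL^1}$, using $|U|\le 1$ to replace $U^2$ by $U$. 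Because in the first-order term the single perturbation is $W=\varsigma U+\varepsilon\Phi$, the "good" factor $\|\langle x\rangle^{-3/2}U\|_{\sL^1}$ (obtained when the leftmost $U$ pairs against a $U$ through an HS estimate) appears multiplied by $\varsigma$, while the $\varepsilon\Phi$ contribution is controlled only by $\varepsilon r^{3/4}\cdot r^{3/4}=\varepsilon r^{3/2}$ — this is exactly the combination $r(\varsigma\|\langle x\rangle^{-3/2}U\|_{\sL^1}+\varepsilon r^{3/2})$ appearing in (\ref{eqn-3.29}). The geometric sums over $p_j$ with $|p_j-n|\le P$ (resp. $Q$) produce the logarithmic factors $(1+|\log(rn^{-2})|)$ and the $n^{-2}$ from optimizing $P\asymp\min(n^{3/2}r^{-3/4},n)$ (resp. $Q\asymp\min(n^3 r^{-3/2},n)$), precisely as in (\ref{eqn-3.7})–(\ref{eqn-3.10}).

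The higher-order terms ($k\ge 2$) each acquire the extra factor $(\varepsilon+\varsigma)^{k-1}(r^{3/2}(1+|\log(rn^{-2})|)+n^2)^{k-1}$, which by hypothesis (\ref{eqn-3.28}) is $\le 1$, so they are dominated by the $k=1$ term; and the remainder term at order $K+1$ is negligible (bounded by $(\varepsilon+\varsigma)^s$ for $s$ large) under (\ref{eqn-3.28}), exactly as in Proposition~\ref{prop-3.6}. So the estimate reduces to the first-order term, which gives the right-hand side of (\ref{eqn-3.29}). The symmetric estimate for $\Tr[\Phi(\uppi_n(\varepsilon,\varsigma)-\uppi_n^0)]$ is obtained the same way, pairing the extra leftmost $\Phi$ via an HS estimate $\|\Phi\uppi_n^0\|_2^2\le C\|\langle x\rangle^{-3/2}\Phi\|_{\sL^1}\le Cr^{3/2}$ (here $|\Phi|\le 1$ and $\supp\Phi\subset B(0,r)$, so we just bound $\|\langle x\rangle^{-3/2}\Phi\|_{\sL^1}\le Cr^{3/2}$), which again lands inside $r(\varsigma\|\langle x\rangle^{-3/2}U\|_{\sL^1}+\varepsilon r^{3/2})$ up to adjusting constants — one has to be a little careful that the cross term, where the trace-$\Phi$ pairs against a $U$ from $W$, is also controlled, but $\|\Phi\uppi_n^0\|_2\|U\uppi_n^0\|_2\le Cr^{3/4}\cdot\|\langle x\rangle^{-3/2}U\|_{\sL^1}^{1/2}$ and with the remaining $\varsigma$ in front this is $\le C\varsigma r^{3/4}\|\langle x\rangle^{-3/2}U\|_{\sL^1}^{1/2}$, which under $\|\langle x\rangle^{-3/2}U\|_{\sL^1}\le Cr^{3/2}$ is absorbed. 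I expect the main obstacle to be exactly this bookkeeping of which small parameter ($\varsigma$ vs.\ $\varepsilon$) multiplies which norm in each of the many terms of the expansion, and verifying that every cross term genuinely fits inside the claimed combination $(\varsigma\|\langle x\rangle^{-3/2}U\|_{\sL^1}+\varepsilon r^{3/2})$; the analytic estimates themselves are all already in hand from Section~\ref{sect-2} and the proofs of Propositions~\ref{prop-3.1} and~\ref{prop-3.3}. I would then close by remarking that we leave the routine details to the reader, as is done for the analogous Proposition~\ref{prop-3.3}.
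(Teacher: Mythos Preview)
Your proposal is correct and follows precisely the approach the paper takes: the paper's own proof consists of the single sentence ``Proof repeats the proof of Proposition~\ref{prop-3.3}. We leave easy details to the reader.'' You have supplied exactly those details --- the resolvent expansion in $W=\varsigma U+\varepsilon\Phi$, the sandwiching by $\uppi_p^0$ and $\Pi_{n,P}$, the Hilbert--Schmidt estimate on the surviving $\uppi_n^0$ factor, and the control of higher-order and remainder terms via (\ref{eqn-3.28}) --- and correctly flagged the cross-term bookkeeping as the only place requiring care.
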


\begin{proof}
Proof repeats the proof of Proposition~\ref{prop-3.3}. We leave easy details to the reader. 
\end{proof}

Now we can modify Proposition~\ref{prop-3.6}:
\begin{proposition}\label{prop-3.8}
In the framework of Proposition~\ref{prop-2.2}  assume that
\begin{gather}
\varepsilon r^{3/2}\le \epsilon_0
\label{eqn-3.30}
\end{gather}
and \textup{(\ref{eqn-3.27})} is fulfilled. Then the following estimate holds:
\begin{multline}
|\Tr \bigl[ H_{V^0+\varepsilon \Phi}^- -   H_{V^0}^-    -   H_{V^0+\varsigma U+\varepsilon \Phi}^-  +   H_{V^0+\varsigma U}^-\bigr] | \\[3pt]
\le  C(\varsigma+\varepsilon)^{1-\varkappa}  r\bigl(\varsigma \|\langle x\rangle^{-3/2}U  \|_{\sL^1}+  \varepsilon r^{3/2}\bigr).
\label{eqn-3.31}
\end{multline}
\end{proposition}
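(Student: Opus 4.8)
The plan is to reduce \textup{(\ref{eqn-3.31})} to an estimate handled essentially as in the proof of Proposition~\ref{prop-3.6}, now with Proposition~\ref{prop-3.7} playing the role of Proposition~\ref{prop-3.3}. First I would apply Proposition~\ref{prop-3.4}, i.e. formula~\textup{(\ref{eqn-3.18})}, twice, in each case taking $\varsigma U$ as the perturbation switched on by the coupling parameter: once to the operators $H_{V^0+\varepsilon\Phi}$ and $H_{V^0+\varsigma U+\varepsilon\Phi}$, and once to $H_{V^0}$ and $H_{V^0+\varsigma U}$. Subtracting the two resulting identities, the left-hand side of~\textup{(\ref{eqn-3.31})} becomes
\[
\int_0^\varsigma \Tr\bigl[U\bigl(\uptheta(-H_{V^0+sU+\varepsilon\Phi})-\uptheta(-H_{V^0+sU})\bigr)\bigr]\,ds .
\]
Hence it suffices to bound the integrand, uniformly for $s\in[0,\varsigma]$, by $Cr\bigl(s\|\langle x\rangle^{-3/2}U\|_{\sL^1}+\varepsilon r^{3/2}\bigr)$; integrating in $s$ then yields $Cr\varsigma\bigl(\varsigma\|\langle x\rangle^{-3/2}U\|_{\sL^1}+\varepsilon r^{3/2}\bigr)$, and since $\varsigma\le\varsigma+\varepsilon\le(\varsigma+\varepsilon)^{1-\varkappa}$ this is the right-hand side of~\textup{(\ref{eqn-3.31})}.

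For the integrand I would decompose both spectral projectors into the clusters of Proposition~\ref{prop-2.2} --- which applies to $H_{V^0+sU+\varepsilon\Phi}$ since $sU+\varepsilon\Phi$ is supported in $B(0,r)$ with $(\varsigma+\varepsilon)r^{3/2}$ small --- and, after inserting the unperturbed projector $\uppi_n^0$ of $H_{V^0}$, write
\[
\uptheta(-H_{V^0+sU+\varepsilon\Phi})-\uptheta(-H_{V^0+sU})=\sum_{n\ge1}\bigl[(\uppi_n^{s,\varepsilon}-\uppi_n^0)-(\uppi_n^{s,0}-\uppi_n^0)\bigr],
\]
where $\uppi_n^{s,\varepsilon}$, $\uppi_n^{s,0}$ denote the $n$-th cluster projectors of $H_{V^0+sU+\varepsilon\Phi}$ and $H_{V^0+sU}$. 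Clusters with $n\le C_0^{-1}\sqrt r$ contribute $O(r^{-s'})$ for every $s'$ because of the decay~\textup{(\ref{eqn-2.30})} of the unperturbed radial functions on $\supp U\cup\supp\Phi\subset B(0,r)$, exactly as in Propositions~\ref{prop-2.3} and~\ref{prop-3.3}\ref{prop-3.3-ii}. On the range $C_0^{-1}\sqrt r\le n\le N$, with $N\asymp(\varsigma+\varepsilon)^{-(1-\varkappa)/2}$ chosen so that condition~\textup{(\ref{eqn-3.28})} holds, I apply Proposition~\ref{prop-3.7} to $\Tr[U(\uppi_n^{s,\varepsilon}-\uppi_n^0)]$ (with $\varsigma$ replaced by $s$) and, taking $\varepsilon=0$, to $\Tr[U(\uppi_n^{s,0}-\uppi_n^0)]$; both are bounded by $Cr\bigl(s\|\langle x\rangle^{-3/2}U\|_{\sL^1}+\varepsilon r^{3/2}\bigr)\bigl(r^{3/2}n^{-3}(1+|\log(rn^{-2})|)+n^{-2}\bigr)$, and summing the last factor over $n$ as in the proof of Proposition~\ref{prop-3.6} gives the bound required for the integrand on this range. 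For $n>N$ the cancellation is discarded and each cluster is estimated crudely by Proposition~\ref{prop-2.2} together with~\textup{(\ref{eqn-3.14})}, applied also to the (stable) clusters of $H_{V^0+sU+\varepsilon\Phi}$, by $Cr\bigl(\varsigma\|\langle x\rangle^{-3/2}U\|_{\sL^1}+\varepsilon r^{3/2}\bigr)n^{-3}$; the tail $\sum_{n>N}n^{-3}\asymp N^{-2}\asymp(\varsigma+\varepsilon)^{1-\varkappa}$ absorbs this contribution into the right-hand side of~\textup{(\ref{eqn-3.31})}.

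The one genuinely delicate point --- and, I expect, the main obstacle --- is the bookkeeping around the threshold $N$ and the remainder term of the resolvent expansion behind Proposition~\ref{prop-3.7}: one must pick $N$ so that~\textup{(\ref{eqn-3.28})} holds for $n\le N$ while the crude estimate on $n>N$ is already of the claimed order, and, using~\textup{(\ref{eqn-3.30})} and the smallness of $\varsigma r^{3/2}$ built into the framework of Proposition~\ref{prop-2.2}, verify that the discarded remainder terms are $O\bigl((\varsigma+\varepsilon)^{s'}\bigr)$ and hence negligible against $(\varsigma+\varepsilon)^{1-\varkappa}r\bigl(\varsigma\|\langle x\rangle^{-3/2}U\|_{\sL^1}+\varepsilon r^{3/2}\bigr)$. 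This is precisely the balancing already performed in the proof of Proposition~\ref{prop-3.6}; the remaining substitutions and series summations are routine and would be left to the reader in the same spirit.
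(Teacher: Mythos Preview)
Your proposal is correct and follows the same route as the paper: express the double difference as an integral in the coupling via \textup{(\ref{eqn-3.18})}, decompose into clusters, apply Proposition~\ref{prop-3.7} on the range $n\le N\asymp(\varsigma+\varepsilon)^{-(1-\varkappa)/2}$, and treat the tail $n>N$ with the eigenvalue-sum bounds of Proposition~\ref{prop-2.2}. One small clarification on the tail: you cannot bound the \emph{integrand} $\Tr[U(\uppi_n^{s,\varepsilon}-\uppi_n^0)]$ pointwise in $s$ from Proposition~\ref{prop-2.2} and \textup{(\ref{eqn-3.14})} alone---instead one integrates first and uses the identity of Remark~\ref{rem-3.5}\ref{rem-3.5-ii} to pass to the eigenvalue differences $\sum_k(\lambda_{n,k}(\varepsilon,\varsigma)-\lambda_{n,k}(\varepsilon,0))$, which \textup{(\ref{eqn-2.11})} controls (this is exactly the paper's \textup{(\ref{eqn-3.33})}--\textup{(\ref{eqn-3.34})}); your citation of Proposition~\ref{prop-2.2} and the bound you state already reflect this.
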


\begin{proof}
Observe that  the left-hand expression  of (\ref{eqn-3.31}) equals (without traces or absolute value)
\begin{multline}
 U \int_0^\varsigma \Bigl(  \uptheta ( H_{V^0+\varepsilon \Phi +t \varsigma U}) -  \uptheta ( H_{V^0  +tU})\Bigr)\,dt\\
\begin{aligned}
& = U \int_0^\varsigma \Bigl(  \uptheta ( H_{V^0+\varepsilon \Phi +t \varsigma U}) -  \uptheta ( H_{V^0  })  \Bigr)\,dt \\
& -U \int_0^\varsigma \Bigl(  \uptheta ( H_{V^0+\varepsilon \Phi}) -  \uptheta ( H_{V^0 })  \Bigr)\,dt\,,
\end{aligned}
 \label{eqn-3.32}
\end{multline}
and under assumption (\ref{eqn-3.28}) we can break the last expressions into sums of terms with $\uptheta (\ldots)$ replaced by the corresponding projectors.   We also can combine  the first with the second and the third with the fourth terms, rather than 
 the first with the third and the second with the fourth terms. 
 
 Then using the corresponding estimates in Proposition~\ref{prop-3.7} we can estimate the sum over $n\colon n\le N$ by
 the right-hand expression of (\ref{eqn-3.29}), multiplied by $\varsigma$ or by $\varepsilon$ (of our choice). Since we take 
 $N= (\varepsilon +\varsigma)^{-\frac{1}{2}-\varkappa}$ as dictated by (\ref{eqn-3.28}), it does not exceed the right-hand expression of (\ref{eqn-3.31}).
 
 Unfortunately, we are not that lucky with $n\ge N$: we have only estimates 
 \begin{gather}
 |\sum_{k} \bigl(\lambda_{n,k}(\varepsilon ,\varsigma) - \lambda_{n,k}(\varepsilon,0)\bigr)|\le 
 C r\bigl(\varsigma \|\langle x\rangle^{-3/2}U  \|_{\sL^1}+  \varepsilon r^{3/2}\bigr)n^{-3},
 \label{eqn-3.33}\\
 \shortintertext{and}
 |\sum_{k} \bigl(\lambda_{n,k}(\varepsilon ,\varsigma) - \lambda_{n,k}(0,\varsigma)\bigr)|\le 
 C r\bigl(\varsigma \|\langle x\rangle^{-3/2}U  \|_{\sL^1}+  \varepsilon r^{3/2}\bigr)n^{-3}
 \label{eqn-3.34}\\
 \intertext{which follow from estimate (\ref{eqn-2.11}): namely, we have the same estimate for }
 |\sum_{k} \bigl(\lambda_{n,k}(\varepsilon ,\varsigma) - \lambda_{n,k}^0\bigr)|.
 \notag
 \end{gather}
Then the sum over $n\colon n\ge N= (\varepsilon +\varsigma)^{-\frac{1}{2}-\varkappa}$ does not exceed the right-hand expression of (\ref{eqn-3.31}).
\end{proof}

\chapter{Electronic Density}
\label{sect-4}

\begin{proposition}\label{prop-4.1}
Under assumption \textup{(\ref{eqn-1.7})} the following estimate holds:
\begin{equation}
\varsigma \int U\rho_\Psi \,dx  \le 
\Tr \bigl[(H_{W+\nu}^-\bigr]   -  \Tr \bigl[H_{W+\varsigma U+\nu}^-\bigr] +C Z^{5/3-\delta}
\label{eqn-4.1}
\end{equation}
with $\delta=\delta(\sigma)>0$ as $\sigma>0$ and $\delta=0$ as $\sigma=0$ in 
\textup{(\ref{eqn-1.7})}.
\end{proposition}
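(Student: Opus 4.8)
I would follow the variational scheme of \cite{IaLS}: realize $\varsigma\int U\rho_\Psi$ as a difference of two many-body ground state energies and trap that difference between the matching mean-field expressions built on a single common reference density. Concretely, let $\mathsf{H}_N$ be the operator (\ref{eqn-1.1}) and let $\mathsf{H}_N^{\varsigma U}$ be obtained from it by replacing each $H_{V,x_j}$ with $H_{V+\varsigma U,x_j}$, i.e. by adding $-\varsigma\sum_j U(x_j)$; write $\E_N$, $\E_N^{\varsigma U}$ for their ground state energies and $\Psi$ for the ground state of $\mathsf{H}_N$. Since $\int U\rho_\Psi=\langle\Psi,\sum_j U(x_j)\Psi\rangle$ and $\Psi$ is an admissible trial state for $\mathsf{H}_N^{\varsigma U}$, one gets $\E_N^{\varsigma U}\le\langle\Psi,\mathsf{H}_N^{\varsigma U}\Psi\rangle=\E_N-\varsigma\int U\rho_\Psi$, i.e.
\begin{equation*}
\varsigma\int U\rho_\Psi\ \le\ \E_N-\E_N^{\varsigma U},
\end{equation*}
with no sign restriction on $U$; it then remains to bound $\E_N$ from above and $\E_N^{\varsigma U}$ from below using the \emph{same} auxiliary density so that the Coulomb constants cancel.

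For that I would fix the self-consistent mean-field density $\rho_0$ and its associated one-particle potential $W+\nu$ as introduced above (schematically, $W+\nu$ is $V$ diminished by the Hartree potential $\rho_0\ast|x|^{-1}$ of $\rho_0$ together with the molecular corrector, and $\rho_0$ is the density of $\uptheta(-H_{W+\nu})$). For the lower bound I would insert, on the ground state of $\mathsf{H}_N^{\varsigma U}$ (with density $\rho$), the standard electrostatic inequality $\sum_{j<k}|x_j-x_k|^{-1}\ge\sum_j(\rho_0\ast|x|^{-1})(x_j)-D(\rho_0,\rho_0)-C\int\rho^{4/3}$ with $D(f,f)=\tfrac12\iint f(x)f(y)|x-y|^{-1}\,dx\,dy$ (a Lieb--Oxford/Lieb--Yau-type bound), which rewrites $\mathsf{H}_N^{\varsigma U}$ as $\sum_j H_{W+\varsigma U+\nu,x_j}$ modulo the constant $-D(\rho_0,\rho_0)$ and the indirect term; the fermionic bound $\langle\Phi,\sum_j A_{x_j}\Phi\rangle\ge\Tr[A^-]$, valid for normalized antisymmetric $\Phi$ because $0\le\gamma_\Phi\le I$ and $\Tr\gamma_\Phi=N$, then yields $\E_N^{\varsigma U}\ge\Tr[H_{W+\varsigma U+\nu}^-]-D(\rho_0,\rho_0)-CZ^{5/3-\delta}$. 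For the upper bound I would take the Slater determinant $\Phi_0$ spanned by the eigenfunctions of $H_{W+\nu}$ below $0$ (here the neutrality of the mean field — that it binds essentially $N$ electrons — and the cluster bookkeeping of Section~\ref{sect-2} enter), evaluate $\langle\Phi_0,\mathsf{H}_N\Phi_0\rangle$, discard the nonpositive exchange term, and use $\rho_0=\rho_{\Phi_0}$ to identify the direct Coulomb term, obtaining $\E_N\le\Tr[H_{W+\nu}^-]-D(\rho_0,\rho_0)+CZ^{5/3-\delta}$. Subtracting the two bounds, the term $-D(\rho_0,\rho_0)$ cancels and (\ref{eqn-4.1}) follows.

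The errors I would have to control are: (i) the indirect term $\int\rho^{4/3}$, which is $\le CZ^{5/3}$ by $\|\rho\|_{\sL^1}\lesssim Z$, $\int\rho^{5/3}\lesssim Z^{7/3}$ (kinetic energy bound) and interpolation, applied to $\rho_\Psi$ and to $\rho_{\Psi^{\varsigma U}}$ — the latter stays comparable to the former precisely because $\varsigma U$ is a small potential supported in $B(0,r)$, which is where the hypotheses $|U|\le1$ and $0<\varsigma r\le\epsilon_0$ (and, relativistically, (\ref{eqn-1.8})--(\ref{eqn-1.9}), guaranteeing $H_{W+\nu}$ and $H_{W+\varsigma U+\nu}$ remain semibounded so that the traces and the fermionic estimate are legitimate) are used; (ii) the exchange term, negative, simply dropped; (iii) the $O(1)$ mismatch between $N$ and the number of negative eigenvalues of $H_{W+\nu}$ and of $H_{W+\varsigma U+\nu}$, which contributes only near-threshold eigenvalues; (iv) the molecular corrections carried by $\nu$ and, crucially, the gain from $Z^{5/3}$ to $Z^{5/3-\delta}$ for $\sigma>0$, which would come from screening (Teller-type) estimates in molecular Thomas--Fermi theory localizing the indirect-energy error to a neighbourhood of $\y_m$ when the nuclei are separated as in (\ref{eqn-1.7})--(\ref{eqn-1.10}). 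I expect item (iv) — extracting the $Z^{-\delta}$ improvement from the geometric separation — to be the genuine obstacle; the rest is the standard IaLS sandwich, and once (\ref{eqn-4.1}) is available one combines it with Proposition~\ref{prop-3.4} (to turn the right-hand side into $\int_0^\varsigma\Tr[U\uptheta(-H_{W+sU+\nu})]\,ds$) and the projector estimates of Section~\ref{sect-3} to reach Theorem~\ref{thm-1.3}.
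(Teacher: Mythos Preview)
Your approach is correct and is the same IaLS sandwich the paper uses, but the paper streamlines it in one respect worth noting. Instead of introducing the perturbed many-body operator $\mathsf{H}_N^{\varsigma U}$ and its ground state $\Psi^{\varsigma U}$, the paper runs both the upper and the lower estimate on the \emph{same} quantity $\langle \mathsf{H}_{N,Z}\Psi,\Psi\rangle$: the upper bound is the known ground-state energy asymptotics (\ref{eqn-4.2}), while for the lower bound one applies Lieb--Oxford directly to $\Psi$, then shifts the one-body potential from $V$ to $W+\varsigma U$ (which produces the term $\int \varsigma U\rho_\Psi$), and completes the square $\tfrac12\D(\rho_\Psi,\rho_\Psi)-\D(\rho^\TF,\rho_\Psi)=\tfrac12\D(\rho_\Psi-\rho^\TF,\rho_\Psi-\rho^\TF)-\tfrac12\D(\rho^\TF,\rho^\TF)\ge -\tfrac12\D(\rho^\TF,\rho^\TF)$. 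The reference density is thus $\rho^\TF$, not a self-consistent Hartree $\rho_0$, and the common constants $\nu N-\tfrac12\D(\rho^\TF,\rho^\TF)$ cancel between (\ref{eqn-4.2}) and (\ref{eqn-4.3}).

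What this buys over your version is precisely the elimination of your item~(i): since the Lieb--Oxford error is $C\int\rho_\Psi^{4/3}$ rather than $C\int\rho_{\Psi^{\varsigma U}}^{4/3}$, there is no need to argue that the perturbed density stays under control. Your concern~(iv) about the $Z^{-\delta}$ gain is handled the same way in the paper as you anticipate: it is not proved here but imported from the refined energy asymptotics in \cite{monsterbook} and \cite{ivrii:relativistic-1}.
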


\begin{proof}
We know the following upper estimate\footnote{\label{foot-6} See f.e. \cite{monsterbook}, Section~\ref{monsterbook-sect-25-4} and \cite{ivrii:relativistic-1} in non-relativistic and relativistic settings respectively.}
\begin{equation}
\blangle\sfH_{N,Z} \Psi,\,\Psi\brangle \le   \Tr \bigl[H_{W+\nu}^-\bigr] +\nu N - \frac{1}{2} \D(\rho^\TF,\,\rho^\TF)+ \Dirac + CZ^{5/3-\delta}\,,
\label{eqn-4.2}
\end{equation}
where  $\coloneqq W^\TF$ and $\rho^\TF$ are \emph{Thomas-Fermi potential} and
\emph{Thomas-Fermi density}\footnote{\label{foot-7} See f.e.  \cite{monsterbook}, Subsection~\ref{monsterbook-sect-25-1-3}, Volume V.}, $\nu$ is a chemical potential\footnote{\label{foot-8} See f.e.  \cite{monsterbook}, Subsection~\ref{monsterbook-sect-25-2-3}, Volume V. }
 and $\D(f,g)=\iint |x-y|^{-1}f(x)g(y)\,dxdy$.

On the other hand, similarly to the deduction of the lower estimate in \cite{monsterbook}, Subsection~\ref{monsterbook-sect-25-2-1}, for \emph{any potential} $W'=W+U$
\begin{multline}
\blangle\sfH_{N,Z} \Psi,\,\Psi\brangle\\
\begin{aligned}
\ge& \sum _n (H_{V,n} \Psi,\Psi) +\frac{1}{2}\D(\rho_\Psi,\rho_\Psi) -CZ^{5/3-\delta}\\
=&\sum _n (H_{W+U,n} \Psi,\Psi) + \int (W+U-V)\rho_\Psi \,dx +\frac{1}{2}\D(\rho_\Psi,\rho_\Psi) -CZ^{5/3-\delta}\\
=&\sum _n (H_{W+U,n} \Psi,\Psi) + \int U\rho_\Psi \,dx + \frac{1}{2}\D(\rho_\Psi-\rho^\TF,\rho_\Psi-\rho^\TF )\\
&\hskip220pt-\frac{1}{2}\D(\rho^\TF,\,\rho^\TF) -CZ^{5/3-\delta}\\
\end{aligned}\\
\ge \Tr \bigl[ H_{W+U+\nu }^-\bigr] + \nu N  + \int U\rho_\Psi \,dx + \frac{1}{2}\D(\rho_\Psi-\rho^\TF,\rho_\Psi-\rho^\TF )\\
-\frac{1}{2}\D(\rho^\TF,\,\rho^\TF) -CZ^{5/3-\delta}.
\label{eqn-4.3}
\end{multline}

Therefore combining (\ref{eqn-4.2}), (\ref{eqn-4.2}) we arrive to (\ref{eqn-4.1}).
\end{proof}

\begin{remark}\label{rem-4.2}
Replacing in (\ref{eqn-4.1}) $U$ by $-U$ and then multiplying by $-1$, we  estimate the left-hand expression 
of (\ref{eqn-4.1}) from below
\begin{equation}
\varsigma \int U\rho_\Psi \,dx  \ge 
\Tr \bigl[(H_{W-\varsigma U+\nu}^-\bigr]   -  \Tr \bigl[H_{W+\nu}^-\bigr] -C Z^{5/3-\delta}.
\label{eqn-4.4}
\end{equation}
\end{remark}

In this study we consider $\rho_\Psi$  on rather short distance from the nucleus $\y_m$ and want to replace $W^\TF$ by 
$V^0_m= Z_m |x-\y_m|^{-1}$\,\footnote{\label{foot-9} In  \cite{ivrii:el-den-2} we use microlocal approach to study $\rho_\Psi$ on the larger distances from all nuclei and do not do such replacement.}. 

Let $U$ be supported in $B(\y_m,a)$ with some fixed $m$ and satisfy there $|U|\le Z^2$. Further, let $\phi$ be a smooth $a$-admissible  function\footnote{\label{foot-10} I.e. $|D^\alpha \phi |\le C_\alpha b^{-|\alpha|}$ for all $\alpha$.},   $\phi(x)=1$ for $|x-\y_m|\le b$ and $\phi (x)=0$ for $|x-\y_m|\ge 2b$, $b\ge \min(Z^{-1+\delta'},\, 2a)$ with small $\delta'>0$.

Rescaling $x\mapsto (x-\y_m) Z_m$, $\tau \mapsto \tau Z_m^{-2}$ we find ourselves in the framework of Sections~\ref{sect-2} and~\ref{sect-3}. On the other hand,  due to \cite{monsterbook}, Sections~\ref{monsterbook-sect-25-4} and~\ref{monsterbook-sect-12-5} (Volume II)
\begin{equation}
|\Tr \bigl[[(1-\phi ) H_{W+\nu}^-\bigr]   -  \Tr \bigl[(1-\phi) H_{W+\varsigma U+\nu}^-\bigr]|\le C Z^{5/3-\delta}
\label{eqn-4.5}
\end{equation}
(with $\delta=\delta(\sigma)$) and we can insert $\phi$ into right-hand expression of (\ref{eqn-4.1}): 
\begin{equation}
\varsigma \int U\rho_\Psi \,dx  \le 
\Tr \bigl[(\phi H_{W+\nu}^-\bigr]   -  \Tr \bigl[\phi H_{W+\varsigma U+\nu}^-\bigr] +C Z^{5/3-\delta}.
\label{eqn-4.6}
\end{equation}
Indeed, $W+\varsigma U=W'$ is smooth $b$-admissible function  in the zone\linebreak $\{x\colon b \le |x-\y_m|\le 9b\}$.

Then since $|V^0_m-W|\le CZ^{4/3}$ in $2b$-vicinity of $\y_m$ and $|\nu|\le CZ^{4/3}$  we can replace 
$W+\nu$ by $V^0_m$  with an error not exceeding $CZ^{4/3} \times r^{3/2}$, where we prefer to use a scaled distance $r=Z_m a$ rather than $a$: 
\begin{equation}
\varsigma \int U\rho_\Psi \,dx  \le 
\Tr \bigl[(\phi H_{V ^0_m}^-\bigr]   -  \Tr \bigl[\phi H_{V^0_m+\varsigma U }^-\bigr] +C Z^{5/3-\delta}+ CZ^{4/3} r^{3/2}.
\label{eqn-4.7}
\end{equation}
Indeed, using arguments of \cite{monsterbook}, Sections~\ref{monsterbook-sect-25-4} and~\ref{monsterbook-sect-12-5} (Volume II), we can first replace $W$ by $W_m$, coinciding with $V^0_m$ on $B(\y_m ,3b)$ and with $W$ outside of $B(\y_m ,4b)$\,\footnote{\label{foot-11} This replacement brings an extra error--the last term in the right-hand expression of (\ref{eqn-4.7}). There should be $b$ rather than $a$ but $b\asymp a$ unless $b\asymp Z^{-1+\delta''}$ in which case the previous term is larger.} and then replace $W_m$ by $V^0_m$ using the same arguments and that $\supp (\phi)$ is ``smaller''. 

Note that the last term in the right-hand expression of (\ref{eqn-4.7}) is smaller than the previous one, and we can drop it, as 
$r\le Z^{2/9-2\delta/3}$, or 
\begin{gather}
a\le Z^{-7/9-2\delta/3}.
\label{eqn-4.8}
\end{gather}

Next, using rescaling $x\mapsto (x-\y_m)Z_m$, $U\mapsto \bar{U}= Z_m^{-2}U$ we find ourselves in the framework of Sections~\ref{sect-2} and~\ref{sect-3}  with $r= Z_m a$. Therefore, using decomposition of Remark~\ref{rem-3.5}\ref{rem-3.5-i} and Proposition~\ref{prop-3.6}, and Remark~\ref{rem-4.2},  we arrive to
\begin{gather}
|\int \bar{U}\bigl(\rho_\Psi -\rho_m)\,dx | 
\le  C\varsigma^{1-\varkappa}r \|\langle x\rangle^{-3/2} \bar{U}\|^*_{\sL^1} +C\varsigma^{-1} Z^{-1/3-\delta}
\label{eqn-4.9}
\end{gather}
with an arbitrarily small exponent $\varkappa>0$ and $\bar{U}=Z^{-2}U$, satisfies  (\ref{eqn-1.11}). Here and below 
$\| . \|^*_{\sL^1}$ is calculated \emph{after rescaling}; in comparison with $\| . \|_{\sL^1}$, calculated \emph{before rescaling}, it has an extra factor $Z^3$.

However we can do better than this.  Namely, observe that $W'_m\coloneqq W^\TF-V^0_m$ satisfies
\begin{equation}
|W'_m(x)- W'_m(\y_m)|\le CZ_m ^{3/2}|x-\y_m|^{1/2}
 \label{eqn-4.10}
\end{equation}
\emph{before rescaling}\footnote{\label{foot-12} It follows from $|\Delta W' |\le C Z_m^{3/2}|x-\y_m|^{-3/2}$ due to Thomas-Fermi equation.}
and therefore, if we replace $V_m^0$ by $V_m^0 + W'_m(0)-\nu $ in (\ref{eqn-4.7}), the last term in the right-hand expression is replaced by 
$C Z r^2$:
\begin{multline}
\varsigma \int U\rho_\Psi \,dx  \le 
\Tr \bigl[(\phi H_{V ^0_m+W'_m(0)-\nu }^-\bigr]   -  \Tr \bigl[\phi H_{V^0_m+W'_m(0)-\nu +\varsigma U }^-\bigr] \\
+C Z^{5/3-\delta}+ CZ  r^2.
\label{eqn-4.11}
\end{multline}
The last term in the right-hand expression of (\ref{eqn-4.11}) is smaller than the previous one, and we can drop it  as 
$r\le Z^{1/3-\delta/2}$, or 
\begin{equation}
a\le Z^{-2/3-\delta/2}.
\label{eqn-4.12}
\end{equation}

\begin{remark}\label{rem-4.3}
Therefore,   (\ref{eqn-4.9}) remains true under this assumption.  Indeed, adding a constant to $V^0$ we just shift all eigenvalues of $V^0$ and $V^0+\varsigma U$ and preserve all projectors $\uppi^0_n$. 
\end{remark}

To improve our results on the  distances $a\ge Z^{-2/3-\delta/2}$ we  will use Proposition~\ref{prop-3.8}. Due to (\ref{eqn-4.10}) 
$\varepsilon = Z^{-1/2}a^{1/2}$ and to satisfy $\varepsilon r^{3/2}\le Z^{-\varkappa}$ we  assume that
\begin{equation}
a \le Z^{-1/2-\varkappa}.
\label{eqn-4.13}
\end{equation}

Then due to Proposition~\ref{prop-3.8}  the right-hand expression of (\ref{eqn-4.9}) should be replaced by 
\begin{multline*}
C\varsigma^{-1} (\varsigma +\varepsilon)  \bigl(\varsigma r\|\langle x\rangle^{-3/2}\bar{U} \|^*_{\sL^1} + \varepsilon r ^{5/2}\bigr) Z^{\varkappa}
+ C\varsigma^{-1}Z^{-1/3-\delta}\\[3pt]
\asymp 
C\varsigma r\|\langle x\rangle^{-3/2}\bar{U} \|^*_{\sL^1} Z^{\varkappa} + 
C\varepsilon r^{5/2}Z^{\varkappa} +
C\varsigma^{-1}\bigl(\varepsilon^2 r^{5/2}Z^{\varkappa} +Z^{-1/3-\delta}\bigr);
\end{multline*}
as we will see $\varsigma \ge Z^{-s}$ with sufficiently large exponent $s$. We rewrite
the last expression as
\begin{multline}
C\Bigl(\varsigma Z^4a \|\langle Z(x-\y_m)\rangle^{-3/2}\bar{U} \|_{\sL^1} \\
+  Z^{2}a^{3} +
\varsigma^{-1}\bigl(Z^{3/2}a^{7/2}  +Z^{-1/3-\delta}\bigr)\Bigr)Z^\varkappa,
\label{eqn-4.14}
\end{multline}
which would lead to the same final estimate (\ref{eqn-1.12})--(\ref{eqn-1.14}).

Minimizing (\ref{eqn-4.14}) by $\varsigma\colon \varsigma \le  Z^{-\varkappa}r^{-3/2}=Z^{-3/2-\varkappa}a^{-3/2}$, we arrive to 
\begin{multline}
C\Bigl( \bigl(Z^4a \|\langle Z(x-\y_m)\rangle^{-3/2}\bar{U}\|_{\sL^1} \bigr)^{1/2} \bigl( Z^{3/2} a^{7/2}  + Z^{-1/3-\delta}\bigr)^{1/2}\\
+ Z^2a^3 + Z^{3/2}a^{3/2} \bigl( Z^{3/2} a^{7/2} + Z^{-1/3-\delta}\bigr)\Bigr) Z^\varkappa.
\label{eqn-4.15}
\end{multline}
Therefore writing  $U$ instead of $\bar{U}$, we have
\begin{multline}
|\int U\bigl(\rho_\Psi -\rho_m)\,dx | \le\\
C\Bigl( \bigl(Z^{3/2}a ^{3/2}\|\langle Z(x-\y_m)\rangle^{-3/2} U \|_{\sL^1} \bigr)^{1/2} \bigl( Z^{4} a^{3}  + Z^{13/6-\delta}a^{-1/2}\bigr)^{1/2}\\
+ Z^2a^3 +   Z^{3} a^{5} + Z^{7/6-\delta}a^{3/2}\bigr)\Bigr) Z^\varkappa.
\label{eqn-4.16}
\end{multline}
Since $Z^3a^5\le Z^2a^3$ we arrive to  estimate (\ref{eqn-1.12})--(\ref{eqn-1.14}), which concludes the proof of Theorem~\ref{thm-1.3}.

Observe that 
$\supp (U)\subset B(\y_m, a)\setminus B(\y_m, a/2)$ implies 
\begin{gather*}
Z^{3/2}a ^{3/2}\|\langle Z(x-\y_m)\rangle^{-3/2} U \|_{\sL^1} \asymp
\| U \|_{\sL^1}.
\end{gather*}

To prove  Corollary~\ref{cor-1.4}\ref{cor-1.4-i} we plug 
\begin{equation}
U(x)= \upchi_X(x) \sign \bigl(\rho_\Psi(x)  -\rho_m(x)\bigr)
\label{eqn-4.17}
\end{equation}
with characteristic function $\upchi_X$ of $X$; then the left-hand expression becomes $\|\rho_\Psi - \rho_m\|_{\sL^1(X)}$ and
$\|U\|_{\sL^1}= \mes(X)$.

To prove  Corollary~\ref{cor-1.4}\ref{cor-1.4-ii} we plug 
\begin{gather}
U(x)= \epsilon _0 \omega ^{1-p}\upchi_X(x) \sign \bigl(\rho_\Psi(x)  -\rho_m(x)\bigr)| \rho_\Psi(x)  -\rho_m(x)|^{p-1}\,,
\label{eqn-4.18}\\
\intertext{which satisfies (\ref{eqn-1.11}) due to assumptions $\rho_\Psi \le \omega $ in $X$ and estimate $\rho_m \le CZ^{3/2}a^{-3/2}\le C\omega  $  and since $\|U\|_{\sL^1}=\omega^{-1}J_{p-2}$, we get that}
J_{p}\le C \bigl((\omega^{-1} F)^{1/2} J_{p-1}^{1/2} +  G\bigr)
\intertext{with   $J_p\coloneqq \omega ^{1-p} \int_X |\rho_\Psi -\rho_m|^p\,dx$ and $J_0= \omega  \mes(X) $, and therefore}
J_{p}\le C\bigl( (\omega^{-1} F)^{1-2^{-p}} J_0^{2^{-p}} + (\omega^{-1}F)^{1-2^{1-p}} G^{2^{1-p}} +G\Bigr),
\notag
\end{gather}
which implies (\ref{eqn-1.18}).

\begin{appendices}
\chapter{Properties of eigenfunctions of Coulomb-Schr\"odinger operator}
\label{sect-A}

\section{General}
\label{sect-A.1}

Consider operator $H^0=-\Delta -r^{-1}$; $r=|x|^{-1}$. It is known that its eigenfunctions, corresponding to eigenvalues $-\frac{1}{4n^2}$, with norm $1$,  are
(in the spherical coordinates)
\begin{equation}
u_{n,l,m}(r,\varphi,\theta)=R_{n,l}(r)Y_{l,m}(\varphi,\theta) 
\label{eqn-A.1}
\end{equation}
where $Y_{l,m}(\varphi,\theta)$ are spherical functions with $m=-l,-l+1,\ldots, l-1,l$, $l=0,1,\ldots, n$ and 
\begin{equation}
R_{n,l}(r)= \sqrt{\frac{(n-l-1)!}{2n^4 (n+l)!}}  \bigl( \frac{r}{n} \bigr) ^le^{-r/2n}
L^{(2l+1)}_{n-l-1} \bigl(\frac{r}{n}\bigr)
\label{eqn-A.2}
\end{equation}
with associated Laguerre polynomials
\begin{align}
L_n^{(k)}(z) &= \frac{1}{n!}z^{-k} e^z \frac{d^n}{dz^n} \bigl(e^{-z} z^{n+k}\bigr)
         =\sum_{0\le j\le n} \frac{(n+k)!}{(n-j)!(k+j)!j! } (-z)^{j}.  \label{eqn-A.3} 
\end{align}
Then 
\begin{align}
L_{n-l-1}^{(2l+1)}(z) =\sum_{0\le j\le n-l-1} \frac{(n+l)!}{(n-l-1-j)! (2l+1+j)!j!} (-z)^j.
\label{eqn-A.4}
\end{align}

Then $v\coloneqq v_{n,l}=R_{n,l}(r) r$ satisfies 
\begin{equation}
-v '' + \frac{l(l+1)}{r^2}v -\frac{1}{r} v=\lambda_n v
\label{eqn-A.5}
\end{equation}
and is $(n-l)$-th eigenfunction and $\lambda_n$ is $(n-l)$-th eigenvalue of such operator na $\sL^2(\bR^+)$ and the associated variational form is
\begin{equation}
\int \bigl( v^2+ \frac{l(l+1)}{r^2}v^2 -\frac{1}{r} v^2\bigr)\,dr. 
\label{eqn-A.6}
\end{equation}

\section{Zeroes}
\label{sect-A.2}

\begin{proposition}
\label{prop-A-1}
$v_{n,l}$ has exactly $(n-l-1)$ zeroes
\begin{gather}
l(l+1) < r_* < r_1 < \ldots < r_{n-l-1} < r^*< 4n^2,
\label{eqn-A.7}\\
\intertext{where $r_*<r^*$ are two roots of}
W(r)\coloneqq \frac{1}{r}-\frac{l(l+1)}{r^2} =-\lambda_n 
\label{eqn-A.8}\\
\shortintertext{and}
r_{k} \asymp (l+k+1)^2.
\label{eqn-A.9}
\end{gather}
\end{proposition}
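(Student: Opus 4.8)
The plan is in four steps: count the zeroes from the explicit formula, confine them to the classically allowed interval by a convexity argument, read off the bracketing by $l(l+1)$ and $4n^2$ from the quadratic defining the turning points, and obtain the two-sided estimate (\ref{eqn-A.9}) by Sturm comparison.

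\emph{Counting.} By (\ref{eqn-A.2}), $v_{n,l}=R_{n,l}(r)\,r$ equals a nonzero constant times $r^{l+1}e^{-r/2n}L^{(2l+1)}_{n-l-1}(r/n)$; the associated Laguerre polynomial $L^{(2l+1)}_{n-l-1}$ has parameter $2l+1>-1$ and hence exactly $n-l-1$ simple positive zeroes, so $v_{n,l}$ has exactly $n-l-1$ zeroes in $(0,\infty)$, all simple. Equivalently, by (\ref{eqn-A.5})--(\ref{eqn-A.6}), $v_{n,l}$ is the $(n-l)$-th eigenfunction of $-d^2/dr^2+l(l+1)r^{-2}-r^{-1}$ on $\sL^2(0,\infty)$ with a Dirichlet condition at the origin, and Sturm oscillation theory gives the same count.

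\emph{Confinement and bracketing.} Rewrite (\ref{eqn-A.5}) as $v''=g\,v$ with $g(r)=l(l+1)r^{-2}-r^{-1}-\lambda_n=-\lambda_n-W(r)$, and note from (\ref{eqn-A.8}) that $g>0$ on the classically forbidden set $(0,r_*)\cup(r^*,\infty)$ and $g<0$ on $(r_*,r^*)$. On each forbidden interval $(v^2)''=2(v')^2+2g\,v^2\ge0$, so $v^2$ is convex there; since $v^2$ vanishes at $0$ and tends to $0$ at $+\infty$ and is nonnegative, it cannot vanish (nor tend to $0$) anywhere in $(0,r_*]$ or $[r^*,\infty)$. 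Hence all $n-l-1$ zeroes lie strictly between $r_*$ and $r^*$. Clearing denominators in (\ref{eqn-A.8}) shows $r_*,r^*$ are the roots of $r^2-4n^2r+4n^2l(l+1)=0$, so $r_*+r^*=4n^2$ and $r_*r^*=4n^2l(l+1)$; since $0<n^2-l(l+1)<n^2$ for $1\le l\le n-1$, the roots are real with $2n^2<r^*<4n^2$ and therefore $l(l+1)<r_*<2l(l+1)$ (for $l=0$ one has $r_*=0$, $r^*=4n^2$). In particular $r_*\asymp l^2$ and $r^*\asymp n^2$, and (\ref{eqn-A.7}) follows.

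\emph{Asymptotics — the hard part.} For (\ref{eqn-A.9}) I would apply Sturm comparison to $v''+p\,v=0$, $p(r)=r^{-1}-l(l+1)r^{-2}-(4n^2)^{-1}=\frac{(r-r_*)(r^*-r)}{4n^2r^2}>0$ on $(r_*,r^*)$. The elementary observation is that on the range $2r_*\le r\le n^2$ (when nonempty) one has $p(r)\asymp 1/r$, since there $r-r_*\ge r/2$ and $r^*-r\ge r^*/2\ge n^2$; comparing $v$ on each gap between consecutive zeroes with constant-coefficient solutions then forces $r_{k+1}-r_k\asymp\sqrt{r_k}$, i.e. $\sqrt{r_{k+1}}-\sqrt{r_k}\asymp 1$, so $\sqrt{r_k}$ grows linearly in $k$ on that range. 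It remains to anchor the ends: near $r_*$ all zeroes have size $\asymp r_*\asymp l^2$ and there are $O(\sqrt{r_*})=O(l)$ of them (from $\frac{1}{\pi}\int_{r_*}^{2r_*}\sqrt{p}\le C\sqrt{r_*}$), so there $(l+k+1)^2\asymp l^2\asymp r_k$; near $r^*$ the last zeroes have index $k\asymp n-l-1$ and size $\asymp r^*\asymp n^2\asymp(l+k+1)^2$; matching these to the bulk growth $\sqrt{r_k}\asymp l+k$ gives $r_k\asymp(l+k+1)^2$ for every $k$. The main obstacle is to keep the implicit constants uniform across all regimes — in particular $l\ll n$ versus $l\asymp n$, in the latter of which the bulk range is empty and $r_*\asymp r^*\asymp n^2$ so every zero sits near a turning point — and to control the degeneracy of $p$ at $r_*,r^*$; a cleaner rigorous substitute is to quote known uniform two-sided bounds for the zeroes of $L^{(\alpha)}_m$, which are $\asymp(\alpha+k)^2/(\alpha+m)$ and here give $r_k=n\,(\text{zero of }L^{(2l+1)}_{n-l-1})\asymp(l+1+k)^2$ since $\alpha+m=n+l\asymp n$, or equivalently to use the elementary WKB action $\frac1{2n}\int_{r_*}^{r}s^{-1}\sqrt{(s-r_*)(r^*-s)}\,ds$ together with a quantitative turning-point connection formula.
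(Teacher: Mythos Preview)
Your argument is correct and follows essentially the same Sturm-comparison route as the paper: count the zeroes (you via the Laguerre factor, the paper via ``variational methods''), confine them to $(r_*,r^*)$ from the sign of the effective potential, and then extract $r_{k+1}-r_k\asymp r_k^{1/2}$ in the bulk from $p(r)\asymp r^{-1}$.

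Two small points of comparison. First, for confinement the paper simply appeals to the equation and the boundary behaviour, while your convexity-of-$v^2$ argument is a clean explicit substitute and is fine. Second, for the endpoint anchoring the paper does something slightly more direct than your action-integral count near $r_*$: it uses the global bound $W(r)\le 1/\bigl(4l(l+1)\bigr)$, hence $p(r)\le 1/\bigl(4l(l+1)\bigr)$ everywhere, which by Sturm gives the uniform lower bound $r_{k+1}-r_k\ge 2\pi\sqrt{l(l+1)}$ and therefore $r_{k+1}^{1/2}-r_k^{1/2}\ge c^{-1}$ for \emph{all} $k$; combined with the bulk upper bound this yields $r_k\asymp(l+k+1)^2$ without having to count zeroes in the boundary layers. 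Your suggested shortcut via known uniform bounds for the zeroes of $L^{(\alpha)}_m$ is also valid but is not what the paper does.
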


\begin{proof}
Standard variational methods imply that $v_{n,l}(r)$ has exactly $(n-l-1)$ zeroes.

Further, equation (\ref{eqn-A.5}) and $v(0)=v(\infty)=0$ imply that all zeroes are simple and satisfy 
$W(r)> -\lambda_n$, which implies (\ref{eqn-A.7}).

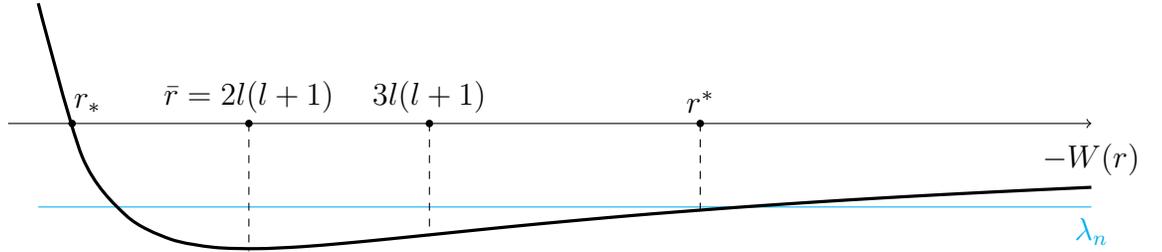
\begin{figure}[h]
\centering
\begin{tikzpicture}
\begin{scope}[xscale=.4, yscale=40]
\draw[thin,->] (4,0)--(40,0);
\draw[thin,cyan] (5,-1/36)--(40,-1/36) node[below] {$\lambda_n$};
\draw[very thick, domain=5:40, smooth] plot (\x, {6/\x^2-1/\x}) node[above] {$-W(r)$};
\end{scope}
\fill (2.45,0) circle (.05)  node[above] {$\quad r_*$};
\fill (4.8,0) circle (.05)  node[above] {$\bar{r}=2l(l+1)$};
\draw[thin, dashed] (4.8,0)--(4.8,-1.7);
\fill (7.2,0) circle (.05)  node[above] {$3l(l+1)$};
\draw[thin, dashed] (7.2,0)--(7.2,-1.5);
\fill (10.8,0) circle (.05)  node[above] {$r^*$};
\draw[thin, dashed] (10.8,0)--(10.8,-1.2);
\end{tikzpicture}
\caption{Marked points: left-bound for $\lambda_{l+1}$, minimum point, inflection point, right bound for $\lambda_{n}$. }
\end{figure}

Consider two points $r_*< r' < r''< r^*$ and observe that $W(r)\ge \min (W(r'),W(r''))$ on $(r',r'')$. Then $v$ must have a zero on $(r',r'')$ as long as $r''-r' > \pi /\sqrt{\min  (W(r'),W(r''))+\lambda_n}$. 

Therefore, if $cl(l+1) \le r\le c^{-1}n^2$, then in $2\pi r^{1/2}$-vicinity of $r$ must be zero of $v$. Thus the distance between two consecutive zeroes $r_k$ and $r_{k\pm 1}$ with  $cl(l+1)\le r_k \le  c^{-1}n^2$ is $O(r_k^{1/2})$. From this one can prove easily that then
\begin{gather*}
|r_k-r_{k\pm 1}|= \pi r_k^{1/2}\Bigl(1+ O\bigl(\frac{l(l+1)}{r_k}  +  \frac{r_k}{n^2}\bigr) \Bigr)
\intertext{and therefore}
|r_k^{1/2}-r_{k\pm 1}^{1/2}|= \frac{\pi}{2} \Bigl(1+ O\bigl(\frac{l(l+1)}{r_k}  +  \frac{r_k}{n^2}\bigr) \Bigr).
\end{gather*}
Further, there must be zeroes to the left of $cl(l+1)$ and to the right of $c^{-1}n^2$. 

Finally, observe that 
$W(r)\ge 1/4l(l+1)$ and therefor e $(r_{k+1}-r_k)\ge 2\pi \sqrt{l(l+1)}$ and therefore $(r_{k+1}^{1/2}-r_k^{1/2})\ge c^{-1}\pi$ for all $k$ with $r_k \le cl(l+1)$. This implies (\ref{eqn-A.9}). 
\end{proof}

\begin{remark}\label{rem-A.2}
Since $-\partial^2\ge \frac{1}{4r^2}$ in fact $r_k \ge (l+1/2)^2$; in particular, $r_k\ge \frac{1}{4}$ as $l=0$.
\end{remark}

Let us analyze $r_k$ more carefully. Due to monotonicity of $W(r)$ on $(0,\bar{r})$ and $(\bar{r},\infty)$ we conclude that 
\begin{align}
&r_k \ge \bar{r}\implies s_{k+1}\ge s_k, 
&& \frac{\pi}{\sqrt{W(r_k)+\lambda_n}}\le  s_k \le \frac{\pi}{\sqrt{W(r_{k+1})+\lambda_n}}
\label{eqn-A.10}\\
\shortintertext{and}
&r_k \le \bar{r}\implies s_{k-1}\ge s_k, 
&& \frac{\pi}{\sqrt{W(r_k)+\lambda_n}}\le  s_k \le \frac{\pi}{\sqrt{W(r_{k-1})+\lambda_n}}.
\label{eqn-A.11}
\end{align}

Consider $r_k$ close to $r^*$. In this case $W(r)\asymp (r^*-r)/r^{*\,2}$ and $s_k \asymp r^* /\sqrt{(r^*-r_k)}$ provided $W(r_{k+1}\asymp W(r_k)$; then $(r^*-r_k) \ge Cs_k$ as\linebreak  $(r^*-r_k)\ge r^*{2/3}$ and we arrive to Statement~\ref{prop-A.3-i} below;  Statement~\ref{prop-A.3-ii} is proven the same way\footnote{\label{foot-13} Assumption  $n-l\ge 3$ is needed to have at least $2$ zeroes.}. From the same arguments follows Statement~\ref{prop-A.3-iii}.

\begin{proposition}\label{prop-A.3}
Let $n-l \ge 3$, $n\ge c$. Then
\begin{enumerate}[label=(\roman*), wide, labelindent=0pt]
\item\label{prop-A.3-i} 
As $r_k\asymp r^*$ and $(r^*-r_k)\ge Cn^{4/3}$
\begin{equation}
s_k \asymp  r^* /\sqrt{(r^*-r_k)} \qquad\text{and}\qquad r^*-r_k \asymp r^{*\,2/3} (n-l-k)^{2/3}.
\label{eqn-A.12}
\end{equation}

\item\label{prop-A.3-ii} 
As $r_k\asymp r_*$ and $(r_k-r_*)\ge Cl^{4/3}$
\begin{equation}
s_k \asymp  r_* /\sqrt{(r_k-r_*)} \qquad\text{and}\qquad r_k-r_* \asymp r^{*\,2/3} k^{2/3}.
\label{eqn-A.13}
\end{equation}

\item\label{prop-A.3-iii}
There are no more than $C'$ zeroes in the zones $\{r\colon r\le r_*+ C r_*^{2/3}\}$ and
$\{r\colon r\ge r^*- Cr^{*\,2/3}\}$.
\end{enumerate}
\end{proposition}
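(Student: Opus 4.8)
\emph{Proof proposal.} This is a turning-point zero-counting argument for the Sturm--Liouville equation (\ref{eqn-A.5}), built on the two-sided gap bounds (\ref{eqn-A.10})--(\ref{eqn-A.11}): the spacing $s_k$ between consecutive zeros is comparable to $\pi/\sqrt{W(r_k)+\lambda_n}$ as long as $W$ does not vary appreciably over one period. The first step is to expand $W$ at the turning points. Since $W(r^*)=-\lambda_n$ and $W$ is decreasing on $(\bar r,\infty)$, for $r$ in a suitable left neighbourhood of $r^*$ one has $W(r)+\lambda_n=W(r)-W(r^*)\asymp|W'(r^*)|\,(r^*-r)$ with $|W'(r^*)|=(r^*)^{-3}\bigl(r^*-2l(l+1)\bigr)\asymp(r^*)^{-2}$ \emph{as long as} $r^*$ is bounded away from $\bar r=2l(l+1)$; since $r^*\asymp n^2$ this gives $W(r)+\lambda_n\asymp(r^*-r)/r^{*\,2}$, and symmetrically $W(r)+\lambda_n\asymp(r-r_*)/r_*^{\,2}$ near $r_*$ when $l\ge1$ (for $l=0$, $r_*=0$ and the zone near $r_*$ is empty, cf.\ Remark~\ref{rem-A.2}).

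Inserting this into (\ref{eqn-A.10}) yields $s_k\asymp r^*/\sqrt{r^*-r_k}$, \emph{provided} consecutive zeros near $r^*$ carry comparable values of $W$; this is automatic once $s_k\ll r^*-r_k$, i.e.\ once $r^*/\sqrt{r^*-r_k}\ll r^*-r_k$, i.e.\ $r^*-r_k\gg r^{*\,2/3}\asymp n^{4/3}$, which is exactly the hypothesis of \ref{prop-A.3-i}; this gives the first half of (\ref{eqn-A.12}), and the first half of (\ref{eqn-A.13}) near $r_*$ follows identically from (\ref{eqn-A.11}). For the counting statement: the number of zeros in $(r_k,r^*)$ is $n-l-1-k$, while by the gap estimate the density of zeros near $r$ is $s(r)^{-1}\asymp r^{*\,-1}\sqrt{r^*-r}$, so that number is also $\asymp r^{*\,-1}\int_{r_k}^{r^*}\sqrt{r^*-r}\,dr\asymp (r^*-r_k)^{3/2}/r^*$; hence $(r^*-r_k)^{3/2}\asymp r^*(n-l-k)$, the second half of (\ref{eqn-A.12}). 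Counting zeros in $(r_*,r_k)$, of which there are $k+O(1)$, gives the second half of (\ref{eqn-A.13}) (with turning-point scale $r_*^{\,2/3}$). The monotonicity of $s_k$ in (\ref{eqn-A.10})--(\ref{eqn-A.11}) sandwiches the Riemann sum between two integrals differing by one term, so the two-sided $\asymp$ survives.

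For \ref{prop-A.3-iii} only the crude one-sided bound $|W'(r)|\le C(r^*)^{-2}$ on $\{r\ge r^*-Cr^{*\,2/3}\}$ and $|W'(r)|\le Cr_*^{-2}$ on $\{r\le r_*+Cr_*^{2/3}\}$ is needed, and these hold throughout since $l(l+1)\le r_*\le r^*$: if two consecutive zeros $r_j<r_{j+1}$ both lie in $\{r\le r_*+Cr_*^{2/3}\}$, then $W(r_j)+\lambda_n\le C'r_*^{-4/3}$, so $s_j\ge c\,r_*^{2/3}$ by (\ref{eqn-A.11}); an interval of length $Cr_*^{2/3}$ thus contains $O(1)$ zeros. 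The zone near $r^*$ is handled the same way via (\ref{eqn-A.10}).

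The main obstacle is \emph{uniformity} in $(n,l)$. The linear expansion of the first step---and with it the condition ``$W(r_{k+1})\asymp W(r_k)$'' used to close the gap estimate---degenerates when the two turning points nearly coalesce (small $n-l$): there $W$ near $r^*$ is governed by its quadratic maximum at $\bar r$, with $W''(\bar r)=-\bar r^{-3}$, rather than by $W'(r^*)\asymp-(r^*)^{-2}$, and the effective local model is parabolic. That regime must be treated separately, either by a parabolic-turning-point (Airy) analysis or by checking that the relevant zeros do not fall in the zones described by (\ref{eqn-A.12})--(\ref{eqn-A.13}) in the ranges of $n,l$ actually used. Statement \ref{prop-A.3-iii} is immune, since it rests only on the crude lower bound for $s_k$.
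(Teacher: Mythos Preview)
Your argument matches the paper's approach: linearize $W$ at the turning points, feed that into the two-sided gap bounds (\ref{eqn-A.10})--(\ref{eqn-A.11}) to obtain $s_k\asymp r^*/\sqrt{r^*-r_k}$ once $(r^*-r_k)\gtrsim r^{*\,2/3}$, and then count zeros. The paper's proof is extremely terse---it never spells out the integral counting that gives the second halves of (\ref{eqn-A.12})--(\ref{eqn-A.13}), nor the explicit lower bound $s_j\gtrsim r_*^{2/3}$ behind \ref{prop-A.3-iii}---so your write-up is a genuine expansion of the same ideas rather than a different route.

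Your uniformity concern is also correct and worth keeping. The linearization $|W'(r^*)|\asymp (r^*)^{-2}$ relies on $r^*$ being bounded away from $\bar r=2l(l+1)$, and this fails as $l\to n$. The paper does not confront this inside the proof of Proposition~\ref{prop-A.3}; instead it tacitly uses the proposition only under the side condition $l\le(1-\epsilon')n$ (see Propositions~\ref{prop-A.6}--\ref{prop-A.8}) and treats the nearly-coalescing regime separately in Propositions~\ref{prop-A.9} and~\ref{prop-A.10}, where the turning-point scale $r^{*\,2/3}$ is replaced by the parabolic scale $r^*/(r^*-r_*)^{1/3}$. So your diagnosis of the obstacle and of the remedy---handle small $n-l$ by a separate argument---coincides with what the paper actually does, just in later propositions rather than here.
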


\section{Estimates}
\label{sect-A.3}

From(\ref{eqn-A.5}) it follows that
\begin{equation*}
\int_{r_k}^{r_{k+1}} (-2v'' v' +\frac{2l(l+1)}{r^2}vv' - \frac{2}{r}vv'-2\lambda_n vv')\,dr =0 
\end{equation*}
and then
\begin{equation}
-v'^2 (r_{k+1})+v'^2(r_k)  =  
\int_{r_k}^{r_{k+1}}\bigl( -\frac{2l(l+1)}{r^3}+ \frac{1}{r^2}\bigr)v^2\,dr,
\label{eqn-A.14}
\end{equation}
where $k=0,\ldots ,n-l-1$, and $r_0\coloneqq 0$, $r_{n-l}\coloneqq \infty$.

Then we conclude that
\begin{align}
&|v'(r_k)|> |v'(r_{k+1})|  &&\text{for \ } r_k \ge 2l(l+1)
\label{eqn-A.15}\\
\shortintertext{and}
&|v'(r_k)|< |v'(r_{k+1})| &&\text{for \ } r_{k+1} \le 2l(l+1).
\label{eqn-A.16}
\end{align}

Consider first $r_k\colon (1+\epsilon)r_*\le r_k\le (1-\epsilon)r^*$\,\footnote{\label{foot-14} Which is possible if and only if $l\le (1-\epsilon')n$.}. 

Then one can see easily that 
\begin{multline}
v  (r)= v'_k(r_k) \frac{s_k}{\pi}\sin \bigl(\frac{\pi (r-r_k)}{s_k}\bigr)\bigl(1+O(r_k^{-1/2})\bigr)\\
\text{for\ \ } r_k\le r\le r_{k+1}
\label{eqn-A.17}
\end{multline}
and we calculate the right-hand expression of (\ref{eqn-A.14}) arriving to
\begin{equation}
v'^2 (r_{k+1})=v'^2(r_k) \Bigl[1- 
\frac{1}{2} r_k^{-2}\frac{s_k^3}{\pi^2}\Bigl(1-\frac{2l(l+1)}{r_k}+  O(r_k^{-1/2})\Bigr)\Bigr].
\label{eqn-A.18}
\end{equation}
Then in virtue of (\ref{eqn-A.11})
\begin{multline}
v'^2 (r_{k+1})=\\
v'^2(r_k) \Bigl[1- 
\frac{1}{2} r_k^{-1}s_k\Bigl(1-\frac{2l(l+1)}{r_k})\Bigr)\Bigl(1 -\frac{l(l+1)}{r_k} +\lambda_n   \Bigr)^{-1}+
O(r_k^{-1})\Bigr].
\label{eqn-A.19}
\end{multline}
Then
\begin{multline}
v'^2 (r_{k+1})r_{k+1}^{1/2}=\\
v'^2(r_k)r_k^{1/2} \Bigl[1+ \frac{1}{2} r_k^{-1}s_k
\Bigl(\frac{l(l+1)}{r_k}+\lambda_n)\Bigr)\Bigl(1 -\frac{l(l+1)}{r_k} +\lambda_n   \Bigr)^{-1}
+ O(r_k^{-1})\Bigr]=\\
v'^2(r_k)r_k^{1/2}  (1+ \varepsilon_k)
\label{eqn-A.20}
\end{multline}
with $\varepsilon_k =O\bigl(\dfrac{l(l+1)}{r_k^{3/2}}+\dfrac {1}{r_k}\bigr)$. In virtue of (\ref{eqn-A.9}) 
$\sum \varepsilon _k \le C$ and therefore we arrive to 
\begin{equation}
v'^2 (r_{k})r_{k}^{1/2}\asymp v'^2 (r_{m})r_{m}^{1/2} .
\label{eqn-A.21}
\end{equation}

Since 
\begin{gather}
\max_{ r_k\le r\le r_{k+1}}  |v(r)| \asymp |v'(r_k) (r_{k+1}-r_k)
\label{eqn-A.22}\\
\shortintertext{we arrive to}
r_k ^{-1/4} \max_{ r_k\le r\le r_{k+1}}  |v(r)|  
\asymp r_m ^{-1/4} \max_{ r_m\le r\le r_{m+1}}  |v(r)|.
\label{eqn-A.23}
\end{gather}

Then  taking $m\asymp n$ and using (\ref{eqn-A.9}) we conclude that $p_k\asymp p_m k^{1/2}n^{-1/2}$ with $p_k$ the left-hand expression of  (\ref{eqn-A.21}). Since
\begin{equation}
\int _{r_k}^{r_{k+1}}v^2(r)\,dr \asymp  \max_{ r_k\le r\le r_{k+1}}  v^2(r) (r_{k+1}-r_k)
\label{eqn-A.24}
\end{equation}
we conclude that it is $\asymp p_n^2 k^2 n^{-1}$ and therefore their sum is $\asymp p_m^2 n^2$; since it should not exceed $1$, we conclude that $p_m \le Cn^{-1}$; applying $p_k\asymp p_m k^{1/2}n^{-1/2}$ again we arrive to

\begin{proposition}\label{prop-A.4}
Assume that
\begin{gather}
l\le (1-\epsilon')n.
\label{eqn-A.25}
\intertext{Then for $(1+\epsilon)r_*\le r_k \le  (1-\epsilon)r^*$} 
\max_{ r_k\le r\le r_{k+1}}  |v(r)| \le C' r^{1/4}n^{-3/2}.
\label{eqn-A.26}
\end{gather}
\end{proposition}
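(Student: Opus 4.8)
The plan is to organize the argument around the quantity $p_k\coloneqq v'(r_k)^2 r_k^{1/2}$ and to close it with the normalization $\int_0^\infty v_{n,l}^2\,dr=\|u_{n,l,m}\|^2=1$. Throughout I restrict to those indices $k$ with $(1+\epsilon)r_*\le r_k\le(1-\epsilon)r^*$; after shrinking $\epsilon$ slightly one also has $r_{k+1}\le(1-\epsilon)r^*$, and by Proposition~\ref{prop-A-1} and (\ref{eqn-A.9}) on this band $r_k\asymp(l+k+1)^2$, $s_k\coloneqq r_{k+1}-r_k\asymp r_k^{1/2}$ and $W(r_k)+\lambda_n\asymp r_k^{-1}$ (the band being nonempty precisely under (\ref{eqn-A.25})).

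First I would upgrade the single-arch picture (\ref{eqn-A.17}) to a quantitative statement: on $[r_k,r_{k+1}]$ the coefficient of $v''=-(W+\lambda_n)v$ has relative variation $O(s_k\,W'/W)=O(r_k^{-1/2})$, so comparison with the constant-coefficient Dirichlet problem $w''=-(\pi/s_k)^2 w$ on that interval gives $v(r)=v'(r_k)\tfrac{s_k}{\pi}\sin\!\bigl(\tfrac{\pi(r-r_k)}{s_k}\bigr)\bigl(1+O(r_k^{-1/2})\bigr)$, whence $M_k\coloneqq\max_{[r_k,r_{k+1}]}|v|\asymp|v'(r_k)|\,s_k\asymp|v'(r_k)|\,r_k^{1/2}$ and $\int_{r_k}^{r_{k+1}}v^2\,dr\asymp M_k^2 s_k\asymp v'(r_k)^2 s_k^3\asymp p_k r_k$. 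Next I would insert this into the exact identity (\ref{eqn-A.14}) to obtain the recursion (\ref{eqn-A.18})--(\ref{eqn-A.20}), namely $p_{k+1}=p_k(1+\varepsilon_k)$ with $\varepsilon_k=O\bigl(l(l+1)r_k^{-3/2}+r_k^{1/2}n^{-2}+r_k^{-1}\bigr)$. By (\ref{eqn-A.9}) each of these three series over the band is $O(1)$, uniformly in $n$ and $l$ (for instance $\sum_k l(l+1)r_k^{-3/2}\asymp l^2\sum_{j\gtrsim l}j^{-3}\asymp1$ and $\sum_k r_k^{1/2}n^{-2}\lesssim r^* n^{-2}\lesssim1$), so $\prod(1+\varepsilon_k)\asymp1$ and hence $p_k\asymp p_{k'}=:p$ for all $k,k'$ in the band.

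Finally I would feed these facts into the normalization. Summing $\int_{r_k}^{r_{k+1}}v^2\,dr\asymp p\,r_k$ over the band, and using that by (\ref{eqn-A.9}) and (\ref{eqn-A.25}) the band covers an index interval with $\sum r_k\asymp\bigl((1-\epsilon)r^*\bigr)^{3/2}-\bigl((1+\epsilon)r_*\bigr)^{3/2}\asymp n^3$ (here $l\le(1-\epsilon')n$ is what keeps the lower cutoff $r_*^{3/2}$ from absorbing the sum, since then the ratio $r^*/r_*$ stays above a fixed constant $>1$), the inequality $\sum_{k\ \mathrm{in\ band}}\int_{r_k}^{r_{k+1}}v^2\,dr\le\int_0^\infty v^2\,dr=1$ forces $p\le Cn^{-3}$. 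Then $M_k^2\asymp p\,r_k^{1/2}\le Cn^{-3}r_k^{1/2}$, i.e.\ $\max_{[r_k,r_{k+1}]}|v|\le Cn^{-3/2}r_k^{1/4}\le C'n^{-3/2}r^{1/4}$ for $r\in[r_k,r_{k+1}]$, which is (\ref{eqn-A.26}).

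The delicate point is the first step: the single-arch approximation has to be produced \emph{with} the error rate $1+O(r_k^{-1/2})$ and \emph{uniformly} over the whole band --- in particular near the turning points $r_*$ and $r^*$, where the comparison coefficient $W+\lambda_n$ degenerates and the implied constants pick up a dependence on $\epsilon$ --- because it is exactly this rate that makes $\sum_k\varepsilon_k$ convergent in the second step. Everything afterwards is bookkeeping with (\ref{eqn-A.9}) and the $\sL^2$-normalization.
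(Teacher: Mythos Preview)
Your proof is correct and follows essentially the same route as the paper's: both arguments track the quantity $p_k=v'(r_k)^2 r_k^{1/2}$ via the identity (\ref{eqn-A.14}) combined with the single-arch approximation (\ref{eqn-A.17}), show $\sum_k\varepsilon_k<\infty$ so that $p_k\asymp p$ across the band, and then use the $\sL^2$-normalization to force $p\le Cn^{-3}$. Your bookkeeping of $\varepsilon_k$ is in fact slightly more explicit than the paper's --- you retain the term $r_k^{1/2}n^{-2}$ (coming from $r_k\lambda_n$ in the numerator of (\ref{eqn-A.20})), which the paper's displayed $\varepsilon_k=O\bigl(l(l+1)r_k^{-3/2}+r_k^{-1}\bigr)$ drops, but since $\sum_k r_k^{1/2}n^{-2}=O(1)$ this makes no difference.
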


\begin{remark}\label{rem-A.5}
It follows from the arguments below that actually in (\ref{eqn-A.26}) there is ``$\asymp$'' sign.
\end{remark}

Consider now zone $\{r\colon \max ((1-\epsilon) r^*\le r \le r^*-Cr^{*\, 2/3}\}$, again, under assumption (\ref{eqn-A.26}).

Recall, that in  this zone \ref{eqn-A.12}) holds  and also
\begin{gather}
W(r)= (r^*-r)r^{*\,-2} \Bigl( 1 +O\bigl(\frac{r^*}{(r^*-r_k)^{3/2}}\bigr)\Bigr)
\label{eqn-A.27}\\
\shortintertext{and}
s_k = \frac{\pi r^* }{\sqrt{r^*-r_k}}\Bigl( 1 +O\bigl(\frac{r^*}{(r^*-r_k)^{3/2}}\bigr)\Bigr).
\label{eqn-A.28}\\
\shortintertext{Then for $r_k\le r\le r_{k+1}$}
v(r) = v'(r_k) \frac{s_k}{\pi} \sin \bigl(\frac{\pi (r-r_k)}{s_k}\bigr)\Bigl( 1 +O\bigl(\frac{r^*}{(r^*-r_k)^{3/2}}\bigr)\Bigr).
\label{eqn-A.29}
\end{gather}
Then (\ref{eqn-A.14}) implies that
\begin{equation*}
v'^2(r_{k+1}) -v'^2(r_k) = -\frac{1}{2} v'^2 (r_k) \frac{s_k^3}{\pi^2 r^{*\,2}} \Bigl( 1- \frac{2l(l+1)}{r^{*}}\Bigr) \Bigl( 1 +O\bigl(\frac{r^*}{(r^*-r_k)^{3/2}}\bigr)\Bigr),
\end{equation*}
where (\ref{eqn-A.25}) ensures that the first large parentheses are disjoint $0$.  Then plugging (\ref{eqn-A.28}) we conclude that
\begin{multline}
v'^2(r_{k+1}) =\\
v'^2(r_k)\Bigr[ 1 -\frac{1}{2}  \frac{s_k}{(r^*-r_k)} \Bigl( 1- \frac{2l(l+1)}{r^{*}}\Bigr) \Bigl( 1 +O\bigl(\frac{r^*}{(r^*-r_k)^{3/2}}\bigr)\Bigr)\Bigr].
\label{eqn-A.30}
\end{multline}
Then
\begin{multline}
v'^2(r_{k+1})(r^*-r_{k+1})^{-1/2} =\\
v'^2(r_k)(r^*-r_{k+1})^{-1/2}\Bigr[ 1 +  \frac{(r_{k+1} -r_{k})}{(r^*-r_k)}\frac{l(l+1)}{r^{*}} +\varepsilon_k\Bigr]
\label{eqn-A.31}
\end{multline}
with $\sum_k\varepsilon_k <\infty$.

Therefore
\begin{gather*}
v'^2(r_{k})(r^*-r_{k})^{-1/2+\sigma}\asymp v'^2(r_{m})(r^*-r_{m})^{-1/2+\sigma},
\shortintertext{and}
|v'(r_{k})| \asymp |v'(r_{m})| (r^*-r_{k})^{1/4-\sigma/} (r^*-r_{m})^{-1/4+\sigma/2},\\
\shortintertext{and finally}
\max_{ r_k<r<r_{k+1}} |v (r)| \asymp \max_{r_m<r<r_{m+1}} |v (r)| (r^*-r_{k})^{-1/4-\sigma/2} (r^*-r_{m})^{1/4+\sigma/2}.
\end{gather*}

Taking $m\asymp n$ such that $r_m\le (1-\epsilon) r^*$ we arrive in virtue of Proposition~\ref{prop-A.4} to

\begin{proposition}\label{prop-A.6}
Under assumption \textup{(\ref{eqn-A.25})} for $(1-\epsilon)r^*\le r <  r^* - Cr^{*\,2/3}$ 
\begin{equation}
|v (r)| \le C' (r^*-r )^{-1/4-\sigma/2} n^{-1/2+\sigma} ,\qquad \sigma =\frac{l(l+1)}{r^*}.
\label{eqn-A.32}
\end{equation}

\end{proposition}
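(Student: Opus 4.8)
\emph{Overall strategy and Step 1.} The plan is to iterate the one-step identity (\ref{eqn-A.31}) across the zone $(1-\epsilon)r^*\le r<r^*-Cr^{*\,2/3}$, anchoring the iteration at a zero $r_m$ with $m\asymp n$, $r_m\asymp r^*$ and $r_m\le(1-\epsilon)r^*$, where Proposition~\ref{prop-A.4} is available (such $r_m$ exists since consecutive zeroes there are $O(r^{*\,1/2})$ apart, much less than $\epsilon r^*$), and then to transfer the resulting control of $|v'(r_k)|$ to control of $\max_{[r_k,r_{k+1}]}|v|$. First, telescoping (\ref{eqn-A.31}) between $r_m$ and $r_k$ gives
\begin{equation*}
v'^2(r_k)(r^*-r_k)^{-1/2}=v'^2(r_m)(r^*-r_m)^{-1/2}\prod_{m\le j<k}\Bigl(1+\frac{r_{j+1}-r_j}{r^*-r_j}\,\frac{l(l+1)}{r^*}+\varepsilon_j\Bigr).
\end{equation*}
Since $\sum_j|\varepsilon_j|<\infty$ and $\sum_j(r_{j+1}-r_j)(r^*-r_j)^{-1}$ is a Riemann sum for $\int_{r_m}^{r_k}(r^*-r)^{-1}\,dr=\log\bigl((r^*-r_m)/(r^*-r_k)\bigr)$, the product is $\asymp\bigl((r^*-r_m)/(r^*-r_k)\bigr)^{\sigma}$ with $\sigma=l(l+1)/r^*$; hence $v'^2(r_k)(r^*-r_k)^{-1/2+\sigma}\asymp v'^2(r_m)(r^*-r_m)^{-1/2+\sigma}$, i.e. $|v'(r_k)|\asymp|v'(r_m)|(r^*-r_k)^{1/4-\sigma/2}(r^*-r_m)^{-1/4+\sigma/2}$.

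\emph{Steps 2--3: from $|v'|$ to $\max|v|$, and the anchor.} Next I would use (\ref{eqn-A.22}), $\max_{r_k\le r\le r_{k+1}}|v|\asymp|v'(r_k)|(r_{k+1}-r_k)$, together with (\ref{eqn-A.28}), $r_{k+1}-r_k=s_k\asymp\pi r^*/\sqrt{r^*-r_k}$ (both valid on the whole zone), to obtain
\begin{equation*}
\max_{r_k\le r\le r_{k+1}}|v|\asymp\Bigl(\max_{r_m\le r\le r_{m+1}}|v|\Bigr)(r^*-r_k)^{-1/4-\sigma/2}(r^*-r_m)^{1/4+\sigma/2},
\end{equation*}
the reference value being likewise $\max_{r_m\le r\le r_{m+1}}|v|\asymp|v'(r_m)|\,r^*(r^*-r_m)^{-1/2}$. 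Since $r_m\asymp r^*\asymp n^2$ (using $l\le(1-\epsilon')n$ for $r^*\asymp n^2$) and $r^*-r_m\asymp n^2$, Proposition~\ref{prop-A.4} gives $\max_{r_m\le r\le r_{m+1}}|v|\le C'r_m^{1/4}n^{-3/2}\asymp n^{-1}$; plugging this in,
\begin{equation*}
\max_{r_k\le r\le r_{k+1}}|v|\le C'n^{-1}(r^*-r_k)^{-1/4-\sigma/2}(n^2)^{1/4+\sigma/2}=C'(r^*-r_k)^{-1/4-\sigma/2}n^{-1/2+\sigma},
\end{equation*}
and taking the union over all admissible $k$ (and using $r^*-r\asymp r^*-r_k$ for $r\in[r_k,r_{k+1}]$) yields (\ref{eqn-A.32}).

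\emph{Main obstacle.} The delicate step is the telescoping in Step~1: I must ensure that iterating (\ref{eqn-A.31}) only causes a bounded multiplicative drift — which relies on the summability $\sum_j|\varepsilon_j|<\infty$ secured when (\ref{eqn-A.31}) was derived — and that the remaining systematic factor reproduces \emph{precisely} the power $(r^*-r)^{\sigma}$, which is the source of the exponent $\sigma=l(l+1)/r^*$ in the statement. A secondary technicality is that the sinusoidal description (\ref{eqn-A.29}), hence the comparison $\max_{[r_k,r_{k+1}]}|v|\asymp|v'(r_k)|s_k$, degrades as $r_k\to r^*$; the cutoff $r<r^*-Cr^{*\,2/3}$ with $C$ large is exactly what is needed to keep the error terms $O\bigl(r^*/(r^*-r_k)^{3/2}\bigr)$ in (\ref{eqn-A.27})--(\ref{eqn-A.29}) under control throughout the iteration.
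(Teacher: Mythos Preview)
Your proposal is correct and follows essentially the same route as the paper: telescope (\ref{eqn-A.31}) to obtain $v'^2(r_k)(r^*-r_k)^{-1/2+\sigma}\asymp v'^2(r_m)(r^*-r_m)^{-1/2+\sigma}$, convert to $\max_{[r_k,r_{k+1}]}|v|$ via (\ref{eqn-A.22}) and (\ref{eqn-A.28}), and anchor at $r_m\le(1-\epsilon)r^*$ with $m\asymp n$ using Proposition~\ref{prop-A.4}. Your Riemann-sum reading of the product and the explicit endpoint computation $n^{-1}\cdot(n^2)^{1/4+\sigma/2}=n^{-1/2+\sigma}$ merely spell out what the paper asserts directly.
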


Consider now zone $\{r\colon  r_*+Cr_*^{2/3}\le r \le (1+\epsilon)r_* \}$ provided (\ref{eqn-A.25}). 
The same arguments lead us to
\begin{equation*}
\max_{ r_k\le r\le r_{k+1}}  |v(r)|  \asymp 
(r_k-r_*) ^{-3/4+\sigma'/2} (r_m-r_*) ^{3/4-\sigma'/2} 
\max_{ r_m\le r\le r_{m+1}}  |v(r)|.
\end{equation*}
and taking $m\asymp l$ such that $r_m\le (1+\epsilon)r_*$ and therefore $r_m \lesssim n^{-3/2}l^{1/2}$, we arrive to

\begin{proposition}\label{prop-A.7}
Under assumption \textup{(\ref{eqn-A.25})} for $r_*+Cr_*^{2/3} \le r \le  (1+\epsilon)r_*$ 
\begin{equation}
|v (r)| \le C'   (r -r_*) ^{-3/4+\sigma'/2}  n^{-3/2} l ^{2-\sigma'}   ,\qquad \sigma' =\frac{l(l+1)}{r_*}.
\label{eqn-A.33}
\end{equation}
\end{proposition}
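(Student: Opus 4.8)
The plan is to run the argument of Proposition~\ref{prop-A.6} once more, now around the inner turning point $r_*$ instead of $r^*$. By Proposition~\ref{prop-A-1} one has $l(l+1)<r_*<\bar r=2l(l+1)$, so throughout the zone $\{r\colon r_*+Cr_*^{2/3}\le r\le(1+\epsilon)r_*\}$ the quantity $\sigma'=l(l+1)/r_*$ is comparable to $l(l+1)/r$; in contrast to the $r^*$ zone, the terms $l(l+1)/r$ in~\textup{(\ref{eqn-A.5})} and~\textup{(\ref{eqn-A.14})} are here of full size and must be carried with their coefficients — it is this that turns the exponent $-1/4$ of Proposition~\ref{prop-A.6} into $-3/4+\sigma'/2$ in~\textup{(\ref{eqn-A.33})}.

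First I would fix the local picture. On this zone $W$ is strictly increasing with $W(r)+\lambda_n=W(r)-W(r_*)\asymp W'(r_*)(r-r_*)$ and $W'(r_*)=(2l(l+1)-r_*)r_*^{-3}>0$, so Proposition~\ref{prop-A.3}\ref{prop-A.3-ii} (i.e.~\textup{(\ref{eqn-A.13})}) applies; by Proposition~\ref{prop-A.3}\ref{prop-A.3-iii} only $O(1)$ of the zeroes below $(1+\epsilon)r_*$ lie outside the zone. On each interval $[r_k,r_{k+1}]$ of the zone the solution of~\textup{(\ref{eqn-A.5})} has the half-arch form
\begin{equation*}
v(r)=v'(r_k)\frac{s_k}{\pi}\sin\Bigl(\frac{\pi(r-r_k)}{s_k}\Bigr)\Bigl(1+O\bigl(r_*(r_k-r_*)^{-3/2}\bigr)\Bigr),
\end{equation*}
the error being small precisely because of the cut-off $r_k-r_*\ge Cr_*^{2/3}$; this is the $r_*$-analogue of~\textup{(\ref{eqn-A.29})}. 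Feeding it into~\textup{(\ref{eqn-A.14})} and using $\int_{r_k}^{r_{k+1}}\sin^2\bigl(\pi(r-r_k)/s_k\bigr)\,dr=\tfrac12 s_k(1+o(1))$ together with the turning-point relation $-\lambda_n=W(r_*)=r_*^{-1}\bigl(1-l(l+1)r_*^{-1}\bigr)$ gives a recursion for $v'^2(r_k)$ of the same structure as~\textup{(\ref{eqn-A.30})}, except that the coefficients $1-2l(l+1)/r_*$ and $\bigl(1-l(l+1)/r_*+\lambda_n\bigr)^{-1}$ are now $O(1)$ quantities bounded away from $0$ by~\textup{(\ref{eqn-A.25})} and $l(l+1)<r_*<2l(l+1)$, rather than $1+o(1)$.

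Then, exactly as in the passage~\textup{(\ref{eqn-A.30})}$\to$\textup{(\ref{eqn-A.31})}, I would multiply $v'^2(r_{k+1})$ by the power of $(r_{k+1}-r_*)$ that cancels the leading part of the recursion, leaving $\prod_k(1+\varepsilon_k)$ with $\sum_k\varepsilon_k<\infty$ (summability from the power-of-$k$ spacing in~\textup{(\ref{eqn-A.13})}). Telescoping and converting the $|v'(r_k)|$-bound to a $\max|v|$-bound via~\textup{(\ref{eqn-A.22})} gives
\begin{equation*}
\max_{r_k\le r\le r_{k+1}}|v(r)|\asymp(r_k-r_*)^{-3/4+\sigma'/2}(r_m-r_*)^{3/4-\sigma'/2}\max_{r_m\le r\le r_{m+1}}|v(r)|.
\end{equation*}
I would anchor at $m\asymp l$ with $r_m$ at the top of the zone, where $r_m-r_*\asymp r_*\asymp l^2$ and Proposition~\ref{prop-A.4} applies, giving $\max_{r_m\le r\le r_{m+1}}|v|\le C'r_m^{1/4}n^{-3/2}\asymp l^{1/2}n^{-3/2}$; substituting $(r_m-r_*)^{3/4-\sigma'/2}\asymp l^{3/2-\sigma'}$ yields $C'(r_k-r_*)^{-3/4+\sigma'/2}l^{2-\sigma'}n^{-3/2}$, which is~\textup{(\ref{eqn-A.33})}.

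The main obstacle is the exponent bookkeeping. Near $r^*$ the $l(l+1)/r$ terms are lower order and only shift the power $-1/4$ by $-\sigma/2$; near $r_*$ they are of leading size, and one has to carry them through both the WKB evaluation of~\textup{(\ref{eqn-A.14})} and the telescoping in order to land on $-3/4+\sigma'/2$ rather than the naive $-1/4$. Keeping the WKB error $O\bigl(r_*(r_k-r_*)^{-3/2}\bigr)$ genuinely small — which is what forces the cut-off $r-r_*\ge Cr_*^{2/3}$ and makes the $\varepsilon_k$ summable — is the other place that needs care; the remaining computations are routine and, as elsewhere in this appendix, I would leave them to the reader.
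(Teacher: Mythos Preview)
Your proposal is correct and follows essentially the same route as the paper: run the Proposition~\ref{prop-A.6} machinery near the inner turning point $r_*$, telescope the $v'^2(r_k)$ recursion with the appropriate power of $(r_k-r_*)$ to obtain $\max_{[r_k,r_{k+1}]}|v|\asymp (r_k-r_*)^{-3/4+\sigma'/2}(r_m-r_*)^{3/4-\sigma'/2}\max_{[r_m,r_{m+1}]}|v|$, and anchor at $m\asymp l$ via Proposition~\ref{prop-A.4}. Your filling-in of the exponent bookkeeping and of the anchoring arithmetic ($l^{3/2-\sigma'}\cdot l^{1/2}n^{-3/2}=l^{2-\sigma'}n^{-3/2}$) is exactly what the paper's terse ``the same arguments lead us to\ldots'' is pointing at.
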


\begin{proposition}\label{prop-A.8}
Let  assumption \textup{(\ref{eqn-A.25})} be fulfilled. Then
\begin{enumerate}[label=(\roman*), wide, labelindent=0pt]
\item\label{prop-A.8-i}
The following estimates hold
\begin{align}
&|v(r)|\le C'   n^{-5/6-\sigma/3} &&\text{for}\ \ r\ge r^*- Cr^{*\,2/3}
\label{eqn-A.34}\\
\shortintertext{and}
&|v (r)| \le C'   n^{-3/2} l^{1-2\sigma'/3}  &&\text{for}\ \ r\le r_* + Cr_*{2/3}.
\label{eqn-A.35}
\end{align}

\item\label{prop-A.8-ii}
Furthermore, let  $ b\coloneqq C_s r^{*\,2/3}$. Then  
\begin{equation}
|v(r)|\le C' n^{-5/6-\sigma/3} \bigl(\frac{b}{r-r^*}\bigr)^s\qquad \text{for}\ \  r\ge r^*+b.
\label{eqn-A.36}
\end{equation}

\item\label{prop-A.8-iii}
On the other hand, let  $b\coloneqq C_s r_*^{2/3}$. Then  
\begin{equation}
|v(r)|\le C'  n^{-3/2} \bar{l}^{1-\sigma' /3} \bigl(\frac{b}{r_*-r}\bigr)^s\qquad \text{for}\ \  r\le r_*-b.
\label{eqn-A.37}
\end{equation}
\end{enumerate}
\end{proposition}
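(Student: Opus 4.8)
The plan is to derive all four bounds by continuing the WKB-type estimates of Propositions~\ref{prop-A.4}, \ref{prop-A.6} and~\ref{prop-A.7} into the two turning-point zones $\{r\ge r^*-Cr^{*\,2/3}\}$ and $\{r\le r_*+Cr_*^{2/3}\}$ --- where, after the natural rescaling, equation~(\ref{eqn-A.5}) degenerates to an Airy equation --- and then propagating the resulting bounds across the classically forbidden regions $\{r>r^*\}$ and $\{0<r<r_*\}$. Throughout one uses that under~(\ref{eqn-A.25}) the turning points obey $r^*\asymp n^2$ and $r_*\asymp l(l+1)$ (for $l\ge1$; for $l=0$ there is no inner turning point), that $\sigma=l(l+1)/r^*$ stays in $[0,1/2)$ bounded away from $1/2$ while $\sigma'=l(l+1)/r_*$ stays in $(1/2,1)$ bounded away from $1/2$, and that correspondingly $|W'(r^*)|r^{*\,2}=1-2\sigma\asymp1$ and $|W'(r_*)|r_*^{2}=2\sigma'-1\asymp1$.

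For Statement~\ref{prop-A.8-i} near the outer turning point I would substitute $r=r^*+r^{*\,2/3}\xi$. Since $W(r)+\lambda_n=-|W'(r^*)|(r-r^*)\bigl(1+O\bigl(r^*(r^*-r)^{-3/2}+(r-r^*)r^{*\,-1}\bigr)\bigr)$, equation~(\ref{eqn-A.5}) takes the form $-\tilde v''+c\,\xi\,\tilde v=(\text{small})\,\tilde v$ with $c=|W'(r^*)|r^{*\,2}\asymp1$, i.e.\ a perturbed Airy equation on $\{\xi\ge-C\}$; its solution decaying as $\xi\to+\infty$ is, to leading order, a multiple of $\operatorname{Ai}$, hence \emph{uniformly bounded}, and its scalar prefactor is fixed by matching at $\xi\asymp-1$, i.e.\ at $r^*-r\asymp r^{*\,2/3}$, to the bound~(\ref{eqn-A.32}) of Proposition~\ref{prop-A.6}; evaluating~(\ref{eqn-A.32}) at that scale produces exactly the prefactor on the right of~(\ref{eqn-A.34}). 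Since on $\{r>r^*\}$ the function $v$ has no zeroes and $v''=-(W+\lambda_n)v$ has the sign of $v$, the function $v$ is monotone there and $|v|$ stays below its value near $r^*$, which gives~(\ref{eqn-A.34}). Estimate~(\ref{eqn-A.35}) follows identically after the substitution $r=r_*+r_*^{2/3}\eta$ and matching to the bound~(\ref{eqn-A.33}) of Proposition~\ref{prop-A.7} at $r-r_*\asymp r_*^{2/3}$, using $r_*\asymp l(l+1)$.

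For Statements~\ref{prop-A.8-ii} and~\ref{prop-A.8-iii} I would run a barrier (Agmon-type) estimate in the forbidden regions, anchored at the edge $r=r^*+b$ (resp.\ $r=r_*-b$), where $|v|$ is already controlled by~(\ref{eqn-A.34}) (resp.~(\ref{eqn-A.35})). Since $W+\lambda_n<0$ there, $v$ has constant sign and is convex, and comparison with the supersolution $\exp\bigl(-\int_{r^*}^{r}\sqrt{-(W+\lambda_n)}\bigr)$ shows $v$ decays at least like this WKB exponential; near $r^*$ one has $-(W+\lambda_n)\asymp(r-r^*)r^{*\,-2}$, so the exponent is $\ge c\,(r-r^*)^{3/2}r^{*\,-1}$, and at $r-r^*=\lambda b=\lambda C_sr^{*\,2/3}$ it is $\ge c\,C_s^{3/2}\lambda^{3/2}$. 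As $\lambda^{3/2}$ eventually dominates $s\log\lambda$, choosing $C_s=C_s(s)$ large makes this exponent exceed $s\log\bigl((r-r^*)/b\bigr)$, while for $r\ge2r^*$ the explicit factor $e^{-r/2n}$ in $R_{n,l}$ gives the decay outright; together this yields the polynomial gain $\bigl(b/(r-r^*)\bigr)^s$ in~(\ref{eqn-A.36}), with prefactor the value at $r=r^*+b$ bounded by~(\ref{eqn-A.34}). Statement~\ref{prop-A.8-iii} is the mirror image on $\{r<r_*\}$, with the additional feature that $W(r)\to-\infty$ as $r\to0^+$, so the barrier only grows towards the origin (consistently with $v=O(r^{l+1})$ there), and with the prefactor now taken from~(\ref{eqn-A.35}). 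In this non-relativistic appendix one could alternatively read~(\ref{eqn-A.36})--(\ref{eqn-A.37}) off the explicit Laguerre representation~(\ref{eqn-A.2}), but the ODE argument is the one that transfers verbatim to the relativistic Appendix~\ref{sect-B}.

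The main obstacle is the matching step in Statement~\ref{prop-A.8-i}: the scale $r^*-r\asymp r^{*\,2/3}$ (resp.\ $r-r_*\asymp r_*^{2/3}$) is precisely where the WKB description underlying Proposition~\ref{prop-A.6} (resp.~\ref{prop-A.7}) breaks down and the Airy description takes over, so one must verify, \emph{uniformly in $n$, $l$ and $\sigma$}, that the $O(\cdot)$ error terms in~(\ref{eqn-A.27})--(\ref{eqn-A.29}) and in the Airy reduction are genuinely bounded on an interval of length a large fixed multiple of $r^{*\,2/3}$ about $r^*$ (and likewise about $r_*$); here the nondegeneracy $1-2\sigma\asymp1$, $2\sigma'-1\asymp1$ ensured by~(\ref{eqn-A.25}) is exactly what makes this possible. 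Once the amplitude at the edge of the oscillatory zone is correctly propagated into the transition zone, the rest --- bounding the $O(1)$ remaining lobes via Proposition~\ref{prop-A.3}\ref{prop-A.3-iii} and assembling the pieces --- is routine.
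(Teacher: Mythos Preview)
Your argument for Statement~\ref{prop-A.8-i} matches the paper's: evaluate (\ref{eqn-A.32}) (resp.~(\ref{eqn-A.33})) at the transition scale $r^*-r\asymp r^{*\,2/3}$ (resp.\ $r-r_*\asymp r_*^{2/3}$) and propagate across using the equation. The paper compresses this into a single sentence (``follow from (\ref{eqn-A.32}), (\ref{eqn-A.33}) and equation (\ref{eqn-A.5})''); you spell out the Airy rescaling, but the content is the same.

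For Statements~\ref{prop-A.8-ii}--\ref{prop-A.8-iii} your route is genuinely different. The paper does \emph{not} use a pointwise Agmon barrier; it runs an $L^2$ energy estimate: multiply (\ref{eqn-A.5}) by $\varphi v$ with a smooth cutoff $\varphi(r)=\phi((r-r^*)/a)$, integrate by parts to obtain
\[
\int\varphi\,v'^2\,dr+\int\bigl(-W(r)-\lambda_n\bigr)\varphi\,v^2\,dr=\tfrac{1}{2}\int\varphi''\,v^2\,dr,
\]
use the positivity of $-W-\lambda_n$ on $\{r>r^*\}$ to bound $\int_{r^*+a}^\infty v^2$ by $Ca^{-2}\int_{r^*+a/2}^{r^*+a}v^2$ (up to the scale factor), iterate over dyadic $a$ to get the polynomial gain in $L^2$, and then bootstrap to a pointwise bound by controlling $\int\varphi\,v'^2$ the same way. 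Your pointwise supersolution comparison is cleaner for a second-order ODE, but it hinges on the maximum principle. This bears on your closing remark: the argument that \emph{actually} transfers to the relativistic Appendix~\ref{sect-B} is the paper's energy/commutator method (cf.\ the proof of Proposition~\ref{prop-B.3}), not the pointwise barrier, since $T_\beta$ is nonlocal and admits no comparison principle.
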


\begin{proof}
\begin{enumerate}[label=(\roman*), wide, labelindent=0pt]
\item\label{pf-A.8-i}
Estimates (\ref{eqn-A.34}) and  (\ref{eqn-A.35}) for $r^*- Cr^{*\,2/3}\le r\le r^*+ Cr^{*\,2/3}$ and
$r_*- Cr_*^{2/3}\le r\le r_*+ Cr_*^{*2/3}$ follow from  estimates (\ref{eqn-A.32}) and (\ref{eqn-A.33}) and equation (\ref{eqn-A.5}).

\item\label{pf-A.8-ii}
Consider $\phi\in \sC^\infty$, $\phi =0$ on $(-\infty,\frac{1}{2})$ and $\phi =1$ on $(1,\infty)$ and 
$\varphi (r)=\phi ((r-r^*)/a)$. Then multiplying (\ref{eqn-A.5}) by $\varphi v$ and integrating by parts we get
\begin{gather}
\int \varphi (r)v'^2(r)\,dr + \int \bigl( W(r)-\lambda_n  \bigr)\varphi (r)v^2(r)\,dr =
\frac{1}{2}\int \varphi''(r)v^2(r)\,dr 
\notag\\
\intertext{and therefore}
\int_{r^*+a}^\infty   r^{-2} v^2(r)\,dr 
\le Ca^{-3} \int_{r^*+a/2}^{r^*+a}   v^2(r)\,dr
\notag\\
\intertext{which implies after iterations estimate }
\int _{r^*+a}^\infty v^2(r)\,dr \le C' a M^2 \bigl(\frac{b}{a}\bigr)^s, 
\label{eqn-A.38}
\end{gather}
for integral from $r^*+a$ to $r^*+2a$, where $M$ is the right-hand expression of (\ref{eqn-A.34}), which, in turn, implies  
(\ref{eqn-A.38}) in full measure.

Then the same proof implies that 
\begin{equation*}
\int _{r^*+a}^\infty v'^2(r)\,dr\le C'a^{-1}   M^2\bigl(\frac{b}{a}\bigr)^s
\end{equation*}
which combined with (\ref{eqn-A.38}) implies (\ref{eqn-A.36}).

\item\label{pf-A.8-iii}
Statement \ref{prop-A.8-iii} is proven in the same way.
\end{enumerate}\vskip-\baselineskip
\end{proof}

Consider now the case $l\ge (1-\epsilon n$. In this case both $r_*\approx r^*\approx 4n^2\approx r_*/4n^2$ and, $(r^*-r_*)\approx 4n^{3/2}\sqrt{2(n-l)}$,  $W(r)\approx (r-r_*)(r^*-r)r^{*\,-2}$, where $\approx$ means that the ration is close to $1$.

Further $C r^{*,2/3}$ should be  replaced by $Cr^*/(r^*-r_*)^{1/3}$ and we want $Cr^*/(r^*-r_*)^{1/3}\le (r^*-r_*)$ i.e. 
$(r^*-r_*) \ge Cr^{*\,3/4}$ which is equivalent
\begin{equation}
C_0\le (n-l)\le \epsilon n.
\label{eqn-A.39}
\end{equation}

\begin{proposition}\label{prop-A.9}
\begin{enumerate}[label=(\roman*), wide, labelindent=0pt]
\item\label{prop-A.9-i}
Let condition \textup{(\ref{eqn-A.39})} be fulfilled. Then 
\begin{multline}
|v(r)|\le \\ C'  
\left\{\begin{aligned}
& L ^{-1/4+\sigma/2}(r^*-r)^{-1/4-\sigma/2} &&  r^* -\epsilon L \le r\le r^* - Cr^*L ^{-1/3},\\
& L^{-1/2}                   && r_*+\epsilon L \le r\le r^*-\epsilon L,   \\
&  L^{1/4-\sigma'/2}  (r-r_*)^{-3/4+\sigma'/2} \qquad &&  r_* + Cr_* L ^{-1/3} \le r \le  r_* +\epsilon L
\end{aligned}\right.
\label{eqn-A.40}
\end{multline}
with $L=r^*-r_*$.

\item\label{prop-A.9-ii}
Further, 
\begin{equation}
|v(r)|\le C'   L^{-1/3+2\sigma/3}r^{*\,-1/4-\sigma/2} \left\{\begin{aligned}
&1  &&r\ge r^* - Cb,\\
&\bigl(\frac{b}{r-r^*}\bigr)^s \qquad && r\ge r^*+Cb
\end{aligned}\right.
\label{eqn-A.41}
\end{equation}
with $b=r^*L ^{-1/3}$.

\item\label{prop-A.9-iii}
Furthermore,
\begin{equation}
|v(r)|\le C'   L^{1/2-2\sigma'/3}r_*^{-3/4+\sigma'}\left\{\begin{aligned}
&1  &&r\le r^* + Cb,\\
&\bigl(\frac{b}{r_*-r}\bigr)^s \qquad&& r\le r_*-Cb
\end{aligned}\right.
\label{eqn-A.42}
\end{equation}
with $b=r_*L ^{-1/3}$.
\end{enumerate}
\end{proposition}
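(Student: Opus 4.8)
\textbf{Proof proposal for Proposition~\ref{prop-A.9}.}

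The plan is to run the WKB/zero-counting machinery of Propositions~\ref{prop-A.4}, \ref{prop-A.6} and~\ref{prop-A.7} once more, but now in the geometry of a \emph{narrow} well. First I would record the elementary facts of this regime. Writing $L=r^*-r_*$, one has $r_*\asymp r^*\asymp n^2$, $L\asymp n^{3/2}(n-l)^{1/2}$; since $W(r)+\lambda_n$ vanishes exactly at $r=r_*$ and $r=r^*$ and $r\asymp r^*$ on the well, $W(r)+\lambda_n\asymp(r-r_*)(r^*-r)r^{*\,-3}$ for $r_*\le r\le r^*$, while from $r^*-2l(l+1)=r^*L/(r_*+r^*)\asymp L$ one gets $|W'(r^*)|=|r^*-2l(l+1)|\,r^{*\,-3}\asymp Lr^{*\,-3}$ and likewise $|W'(r_*)|\asymp Lr^{*\,-3}$. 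Hence the turning-point (Airy) scale is $b\coloneqq|W'(r^*)|^{-1/3}\asymp r^*L^{-1/3}$, and this is where condition~\textup{(\ref{eqn-A.39})} enters: its upper bound $n-l\le\epsilon n$ keeps the well narrow ($L\ll r^*$), while its lower bound $n-l\ge C_0$ is equivalent to $b\le\epsilon L$ (i.e.\ $L\gg r^{*\,3/4}$), so that the two turning-point zones $\{|r-r^*|\lesssim b\}$ and $\{|r-r_*|\lesssim b\}$ are separated by a genuinely oscillatory bulk containing $\asymp n-l$ of the $n-l-1$ zeroes $r_k$, with half-periods $s_k=r_{k+1}-r_k\asymp r^{*\,3/2}L^{-1}$; the same lower bound on $n-l$ makes $W$ vary by a small relative amount over one period in the bulk, which legitimizes the sinusoidal approximation $v(r)\approx v'(r_k)\pi^{-1}s_k\sin\bigl(\pi(r-r_k)/s_k\bigr)$ on each $[r_k,r_{k+1}]$.

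In the bulk $r_*+\epsilon L\le r\le r^*-\epsilon L$ I would then repeat the argument of Proposition~\ref{prop-A.4}: inserting the sinusoidal form into identity~\textup{(\ref{eqn-A.14})}, that is into $v'^2(r_k)-v'^2(r_{k+1})=\int_{r_k}^{r_{k+1}}\bigl(-W'(r)\bigr)v^2\,dr$, yields a multiplicative recursion $v'^2(r_{k+1})(W(r_{k+1})+\lambda_n)^{-1/2}=v'^2(r_k)(W(r_k)+\lambda_n)^{-1/2}(1+\varepsilon_k)$ with $\sum_k|\varepsilon_k|<\infty$, exactly as in~\textup{(\ref{eqn-A.20})}--\textup{(\ref{eqn-A.21})}. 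So the adiabatic invariant $|v'(r_k)|(W(r_k)+\lambda_n)^{-1/4}$ stays $\asymp$ constant across the bulk, and normalising by $\sum_k\int_{r_k}^{r_{k+1}}v^2=1$ (as in~\textup{(\ref{eqn-A.22})}--\textup{(\ref{eqn-A.23})}) fixes that constant to be $\asymp r^{*\,-3/4}$, whence $|v(r)|\asymp (W(r)+\lambda_n)^{-1/4}r^{*\,-3/4}$; in the bulk this is $\asymp L^{-1/2}$, the middle line of~\textup{(\ref{eqn-A.40})}.

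Continuing the same recursion into $r^*-\epsilon L\le r\le r^*-Cb$ and $r_*+Cb\le r\le r_*+\epsilon L$ follows Propositions~\ref{prop-A.6} and~\ref{prop-A.7}: there $W(r)+\lambda_n\approx|W'(r^*)|(r^*-r)$ (resp.\ $|W'(r_*)|(r-r_*)$), and the recursion picks up the anomalous contribution of the centrifugal term through the factor $1-2l(l+1)/r^*$ (resp.\ $l(l+1)/r_*$). Since $l\asymp n$, the exponents $\sigma=l(l+1)/r^*$ and $\sigma'=l(l+1)/r_*$ are now of order one (in fact close to $1/2$) rather than small as in Section~\ref{sect-A.3}, and carrying them along as in the derivations of~\textup{(\ref{eqn-A.32})} and~\textup{(\ref{eqn-A.33})} produces the modified powers $(r^*-r)^{-1/4-\sigma/2}$ and $(r-r_*)^{-3/4+\sigma'/2}$; matching to the bulk value $L^{-1/2}$ at $r^*-r\asymp L$ (resp.\ $r-r_*\asymp L$) fixes the prefactors $L^{-1/4+\sigma/2}$ and $L^{1/4-\sigma'/2}$, giving the first and third lines of~\textup{(\ref{eqn-A.40})}. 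For Statements~\ref{prop-A.9-ii} and~\ref{prop-A.9-iii} one splits into two subcases. For $|r-r^*|\le Cb$ (resp.\ $|r-r_*|\le Cb$) the asserted bound is just~\textup{(\ref{eqn-A.40})} at the WKB-zone edge $r^*-r\asymp b$ (resp.\ $r-r_*\asymp b$), read off together with equation~\textup{(\ref{eqn-A.5})} as in Proposition~\ref{prop-A.8}\ref{prop-A.8-i}. For $r\ge r^*+Cb$ (resp.\ $r\le r_*-Cb$) one is in the classically forbidden region, where $-(W(r)+\lambda_n)\gtrsim Lr^{*\,-3}(r-r^*)$ near $r^*$, and the cut-off-and-integrate-by-parts iteration of Proposition~\ref{prop-A.8}\ref{prop-A.8-ii} converts this into the decaying factor $(b/(r-r^*))^s$ (resp.\ $(b/(r_*-r))^s$).

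The main obstacle is the bookkeeping in the third step. In the narrow well a whole turning-point zone sits at $r_k\approx r^*$, not just its outer edge as happened for $l\ll n$, so one cannot restart the recursion there; moreover the factor $1-2l(l+1)/r^*$ multiplying the leading term of the recursion is itself small, of size $\asymp L/r^*$, and must be tracked against the $O(1)$ piece $2l(l+1)/r^*$ — this interplay is precisely what produces the $\sigma$- and $\sigma'$-dependent powers. One also has to check that the crossover scales $\epsilon L$ and $b$, together with~\textup{(\ref{eqn-A.39})}, keep every sinusoidal and WKB approximation — and every summation $\sum_k\varepsilon_k<\infty$ — legitimate. Once the geometry is pinned down the telescoping is identical to~\textup{(\ref{eqn-A.20})}--\textup{(\ref{eqn-A.23})}, and the exponents in~\textup{(\ref{eqn-A.40})}--\textup{(\ref{eqn-A.42})} drop out of the matching.
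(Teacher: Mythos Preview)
Your proposal is correct and follows exactly the route the paper takes: the paper's proof of Proposition~\ref{prop-A.9} consists of two sentences, stating that Statement~\ref{prop-A.9-i} is proved in the same way as Propositions~\ref{prop-A.4}, \ref{prop-A.6} and~\ref{prop-A.7}, and that Statements~\ref{prop-A.9-ii} and~\ref{prop-A.9-iii} are proved in the same way as Proposition~\ref{prop-A.8}. You have simply written out what ``the same way'' means in the narrow-well geometry, correctly identifying the new scales $L$, $b=r^*L^{-1/3}$, the role of condition~\textup{(\ref{eqn-A.39})}, and the fact that $\sigma,\sigma'$ are now $O(1)$ rather than small.
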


\begin{proof}
Statement~\ref{prop-A.9-i} is proven in the same way as Propositions~\ref{prop-A.4},~\ref{prop-A.6} and~\ref{prop-A.7}.
Statements~\ref{prop-A.9-ii} and~\ref{prop-A.9-iii} are proven in the same way as Proposition~\ref{prop-A.8}.
\end{proof}

Finally, consider the remaining case $1\le n-l\le C$. In the same way 

\begin{proposition}\label{prop-A.10}
Let $1\le n-l\le C$. Then
\begin{equation}
|v(r)|\le C'  n^{-1/2} \left\{\begin{aligned}
&\bigl(\frac{n^2}{r}\bigr)^s \qquad && r\ge Cn^2,\\
&1                      && C^{-1}n^2\le r\le Cn^2,\\
&\bigl(\frac{\langle r\rangle }{n^2}\bigr)^s &&r\le C^{-1}n^2.
\end{aligned}\right.
\label{eqn-A.43}
\end{equation}
\end{proposition}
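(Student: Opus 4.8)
The plan is to exploit that when $n-l\le C$ the two turning points $r_*<r^*$ of the effective potential are essentially confluent. Writing $W(r)=r^{-1}-l(l+1)r^{-2}$, the turning points are the roots of $W(r)=-\lambda_n$, i.e.\ of $r^2-4n^2r+4n^2l(l+1)=0$, so $r_*r^*=4n^2l(l+1)$ and $r^*=2n^2+2n\sqrt{n^2-l(l+1)}$, while $n^2-l(l+1)=(n-l)(n+l)-l\le Cn$; hence $l(l+1)\le r_*\le r^*\le 4n^2$, and $l(l+1)\asymp n^2$ once $n$ is large (the finitely many remaining pairs give explicit eigenfunctions for which \textup{(\ref{eqn-A.43})} holds after enlarging $C'$). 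In particular the classically allowed region $(r_*,r^*)$ together with the transition layers about its endpoints lies inside $\{C_0^{-1}n^2\le r\le C_0n^2\}$, so \textup{(\ref{eqn-A.43})} splits into a uniform bound on this ``bulk'' and rapid decay in the two classically forbidden zones $r\ge C_0n^2$ and $r\le C_0^{-1}n^2$.

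For the bulk bound I would multiply \textup{(\ref{eqn-A.5})} by $v$ and integrate: since $\|v\|_{\sL^2}=1$ this gives $\int_0^\infty v'^2\,dr=\int_0^\infty W(r)v^2\,dr-|\lambda_n|\le\max_{r>0}W(r)=\tfrac14(l(l+1))^{-1}\asymp n^{-2}$, because $W$ attains its maximum at $r=2l(l+1)$. Then for every $r$, $v^2(r)=\int_0^r(v^2)'\,ds\le 2\|v\|_{\sL^2}\|v'\|_{\sL^2}\le(l(l+1))^{-1/2}$, hence $|v(r)|\le C'n^{-1/2}$ everywhere; this is the middle line of \textup{(\ref{eqn-A.43})}, and its validity on the whole half-line is what will seed the other two zones. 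Note this crude estimate suffices precisely because for $n-l=O(1)$ there are only $O(1)$ oscillations, so the node-by-node analysis underlying Propositions~\ref{prop-A.4},~\ref{prop-A.6} and~\ref{prop-A.7} would here be vacuous.

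For the decay I would rewrite \textup{(\ref{eqn-A.5})} as $v''=q(r)v$ with $q(r)=|\lambda_n|-W(r)=\tfrac1{4n^2}-\tfrac1r+\tfrac{l(l+1)}{r^2}$, and observe that $q(r)\ge\tfrac1{8n^2}$ for $r\ge C_0n^2$ and $q(r)\ge\tfrac12 l(l+1)r^{-2}\asymp n^2r^{-2}$ for $r\le C_0^{-1}n^2$ (for suitable $C_0$). Multiplying $v''=qv$ by $\varphi^2v$ with a cutoff $\varphi$ and integrating by parts gives, after one Cauchy--Schwarz step, the weighted inequality $\int\varphi^2qv^2\le\int\varphi'^2v^2$ --- exactly the mechanism of the proof of Proposition~\ref{prop-A.8}\ref{prop-A.8-ii}. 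Iterating this with dyadic cutoffs ($a_k=2^ka_0$ with $a_0\asymp n^2$ towards $+\infty$, and $a_k=2^{-k}a_0$ towards $0$) and invoking the lower bounds on $q$ yields $\sL^2$-mass tails $\int_a^\infty v^2\,dr$ (resp.\ $\int_0^a v^2\,dr$) decaying faster than any fixed power of $n^2/a$ (resp.\ of $a/n^2$); combining with $\int_a^\infty v'^2\,dr\le Ca^{-2}\int_{a/2}^\infty v^2\,dr$ (the same integration by parts) and $v^2(r)\le 2(\int_I v^2)^{1/2}(\int_I v'^2)^{1/2}$ on the appropriate half-line $I$, and absorbing what is left into the bulk bound $|v|\le C'n^{-1/2}$, produces the first and third lines of \textup{(\ref{eqn-A.43})}.

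I do not anticipate a real obstacle: the argument runs parallel to --- and is simpler than --- that of Propositions~\ref{prop-A.4}--\ref{prop-A.8}. The only care needed is bookkeeping: first, recording that $n-l=O(1)$ forces $l(l+1)\asymp n^2$, which is what makes $W_{\max}\asymp n^{-2}$ and supplies the advertised lower bounds on $q$ in the forbidden zones; and second, matching the iterated decay to the precise form of \textup{(\ref{eqn-A.43})} near $r=O(1)$ (where $\langle r\rangle\asymp1$) and for the finitely many small $n$, both handled by taking the number of iteration steps (equivalently, the intermediate exponent) large enough and enlarging $C'$.
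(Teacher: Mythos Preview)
Your proof is correct. The decay portion (first and third lines of \textup{(\ref{eqn-A.43})}) follows the paper's own mechanism from Proposition~\ref{prop-A.8} exactly, so there is nothing to add there.

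For the middle line, however, your route differs from what the paper indicates. The paper says only ``In the same way'', pointing back to the chain of Propositions~\ref{prop-A.4}--\ref{prop-A.9}, which track $|v'(r_k)|$ from one zero to the next via \textup{(\ref{eqn-A.14})} and the approximate sinusoid \textup{(\ref{eqn-A.17})}. You instead obtain a \emph{global} $\sL^\infty$ bound by the elementary energy estimate
\[
\int v'^2\,dr \;=\; \lambda_n + \int W(r)\,v^2\,dr \;\le\; \max_{r>0} W(r) \;=\; \frac{1}{4l(l+1)}\asymp n^{-2},
\]
followed by $\|v\|_\infty^2\le 2\|v\|_2\|v'\|_2$. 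This is genuinely simpler, and in fact better adapted to the regime $n-l\le C$: as you observe, there are at most $C-1$ zeroes, so the node-by-node machinery of Propositions~\ref{prop-A.4}--\ref{prop-A.7} becomes vacuous (the transition layers of width $\asymp r^*L^{-1/3}\asymp n^{3/2}$ already cover the classically allowed region of width $L\asymp n^{3/2}$). What the paper's approach would buy, if pushed through in this degenerate case, is the sharper constant implicit in the $L^{-1/2}$ bound of \textup{(\ref{eqn-A.40})}, i.e.\ $n^{-3/4}$ rather than $n^{-1/2}$; your argument trades this sharpness for a two-line proof, which is entirely appropriate since \textup{(\ref{eqn-A.43})} only asks for $n^{-1/2}$.

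One small remark: your Gagliardo--Nirenberg step actually yields $|v(r)|\le (l(l+1))^{-1/4}$, so for the finitely many pairs with $l=0$ (hence $n\le C$) the bound degenerates; you already flag this and correctly dispose of it by enlarging $C'$.
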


We would need the following

\begin{corollary}\label{cor-A.11}
As $r\le 2r^*$ 
\begin{equation}
|v(r)|\le C' r^{1/4} n^{-3/2}.
\label{eqn-A.44}
\end{equation}
\end{corollary}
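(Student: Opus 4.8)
The plan is to assemble, region by region in $r$ and case by case in $l$, the pointwise bounds already obtained in this appendix, checking in each piece that the right-hand side is $\le C'r^{1/4}n^{-3/2}$. Throughout I use that $r_*<r^*$ are the two roots of $r^2-4n^2r+4n^2l(l+1)=0$ (this is \textup{(\ref{eqn-A.8})} with $\lambda_n=-\frac{1}{4n^2}$), so $r_*+r^*=4n^2$, $r_*r^*=4n^2l(l+1)$ and $L\coloneqq r^*-r_*=4n\sqrt{n^2-l(l+1)}$; in particular $r^*\asymp n^2$ always, $r_*\asymp l^2$ when $l\ll n$, and the exponents appearing in Propositions~\ref{prop-A.6}--\ref{prop-A.9} satisfy $\sigma=l(l+1)/r^*$ and $\sigma'=l(l+1)/r_*=r^*/(4n^2)\in[1/2,1]$.

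First I would split $l$ into three overlapping cases which, on taking $\epsilon=\epsilon'$, exhaust all admissible $l$: (i)~$l\le(1-\epsilon')n$, where the two turning points are well separated and, for a suitable $\epsilon'$, $\sigma'\ge 3/4$; (ii)~$C_0\le n-l\le\epsilon n$, handled by Proposition~\ref{prop-A.9} with $L\asymp n^{3/2}(n-l)^{1/2}$ and $r_*,r^*\asymp n^2$; (iii)~$1\le n-l\le C$, handled by Proposition~\ref{prop-A.10}. Case~(iii) is immediate from \textup{(\ref{eqn-A.43})}: $|v|\le C'n^{-1/2}\asymp C'r^{1/4}n^{-3/2}$ on $C^{-1}n^2\le r\le Cn^2$, while the factors $(n^2/r)^s$ and $(\langle r\rangle/n^2)^s$ in the two tails absorb the bounded variation of $r^{1/4}$ over $r\le 2r^*$ --- which is the only purpose of the restriction $r\le 2r^*$. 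In cases~(i) and~(ii) I would cover $\{r\le 2r^*\}$ by five pieces: the inner tail $\{r\le r_*-b\}$, the inner Airy zone $\{|r-r_*|\le Cr_*^{2/3}\}$, the oscillatory middle zone, the outer Airy zone $\{|r-r^*|\le Cr^{*\,2/3}\}$, and the outer tail $\{r^*+b\le r\le 2r^*\}$.

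In case~(i) these five pieces are controlled, in order, by \textup{(\ref{eqn-A.37})}; by \textup{(\ref{eqn-A.35})}; by Propositions~\ref{prop-A.7},~\ref{prop-A.4} and~\ref{prop-A.6} (cutting the bulk at $r=2l(l+1)$, where by \textup{(\ref{eqn-A.15})}--\textup{(\ref{eqn-A.16})} the envelope of $|v'|$ peaks); by \textup{(\ref{eqn-A.34})}; and by \textup{(\ref{eqn-A.36})}. In the oscillatory bulk the bound is literally \textup{(\ref{eqn-A.26})}; in the two tails the decaying factors $(b/(r_*-r))^s$, $(b/(r-r^*))^s$ dominate the mild growth of $r^{1/4}$ and, in particular, kill $r^{1/4}$ near the origin; in the inner Airy zone one uses $r\asymp r_*$ (so $r^{1/4}\asymp r_*^{1/4}$) together with $\sigma'\ge 3/4$ to reduce \textup{(\ref{eqn-A.35})} to the desired form, the degenerate subcase $l=0$ (where $r_*=0$ and the inner pieces collapse) being handled near the origin by $|v(r)|\lesssim r\,|R_{n,0}(0)|\lesssim r\,n^{-3/2}$. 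In case~(ii) the same five pieces are controlled by the two cases of \textup{(\ref{eqn-A.42})}, the three lines of \textup{(\ref{eqn-A.40})}, and the two cases of \textup{(\ref{eqn-A.41})}; one substitutes $L\asymp n^{3/2}(n-l)^{1/2}$, $r_*,r^*\asymp n^2$ and the value of $\sigma$, using that $r\asymp n^2$ here so the target is $C'r^{1/4}n^{-3/2}\asymp C'n^{-1}$.

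The main obstacle will be the ranges where $v$ attains its largest amplitude: the outer Airy zone, and, when $n-l$ is small, the entire (then short) oscillatory interval. There $r\asymp r^*\asymp n^2$, so the target is $\asymp C'n^{-1}$, and everything hinges on checking that the Airy-scale bound \textup{(\ref{eqn-A.34})} --- and, in case~(ii), the bounds \textup{(\ref{eqn-A.41})}, \textup{(\ref{eqn-A.40})} --- fits under it once the exponents are written out in full using $\sigma'=r^*/(4n^2)$, $L=4n\sqrt{n^2-l(l+1)}$ and the available range of $\sigma$. This is the one nontrivial inequality to settle; the remaining regions are routine bookkeeping of the exponents already recorded in Propositions~\ref{prop-A.4}--\ref{prop-A.10}.
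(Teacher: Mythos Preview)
Your plan is precisely the paper's: its proof is the single line ``It follows from Propositions~\ref{prop-A.4}--\ref{prop-A.10}'', and you have simply unpacked that into a region-by-region and $l$-range-by-$l$-range verification. In that sense there is nothing to compare.

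But there is a genuine gap, and it is exactly the one you yourself flag. You identify the outer Airy strip as ``the one nontrivial inequality to settle'' and then do not settle it. If you carry the check through, it fails: for $l=0$ one has $\sigma=0$, so \textup{(\ref{eqn-A.34})} gives only $|v|\le C'n^{-5/6}$ on $\{|r-r^*|\lesssim r^{*\,2/3}\}$, whereas the target $C'r^{1/4}n^{-3/2}$ is $\asymp n^{-1}$ there since $r\asymp r^*\asymp n^2$. More generally $n^{-5/6-\sigma/3}\le Cn^{-1}$ would require $\sigma\ge 1/2$, i.e.\ $l(l+1)\ge r^*/2$, which is false whenever $l$ is small compared to $n$. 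This is not a bookkeeping slip but the familiar fact that the Airy maximum overshoots the WKB envelope by a factor $\asymp (r^*)^{1/12}\asymp n^{1/6}$. So Propositions~\ref{prop-A.4}--\ref{prop-A.10} as recorded do not yield the pointwise bound \textup{(\ref{eqn-A.44})} on that strip, and your proposal leaves this hole open rather than closing it. For the actual uses in Section~\ref{sect-2} (for instance \textup{(\ref{eqn-2.19})} and \ref{eqn-2.23-<}) only the integrated consequence $\int_0^r |v_{n,l}(t)|^2\,dt\le C r^{3/2}n^{-3}$ is needed, and that does survive, because the Airy strip has width $\asymp n^{4/3}$ and contributes only $O(n^{-1/3})$ to $\int|v|^2$; but that is a weaker statement than \textup{(\ref{eqn-A.44})}, and your write-up does not make this distinction.
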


\begin{proof}
It follows from Propositions~\ref{prop-A.4}--\ref{prop-A.10}.
\end{proof}

\chapter{Properties of eigenvalues and eigenfunctions of Coulomb-Schr\"odinger operator  in relativistic settings}
\label{sect-B}

Consider first the negative spectra of operators 
\begin{gather}
H^0_\beta\coloneqq H_{\beta, V^0}=T_\beta -\frac{1}{r}
\label{eqn-B.1}\\
\intertext{in $\sL^2(\bR^3,\bC)$  where $T_\beta$ is defined by (\ref{eqn-1.16}) and $r=|x|$. We assume that }
 0< \beta \le \frac{2}{\pi}
 \label{eqn-B.2}
 \end{gather}
 and compare it with non-relativistic operator operator $H^0\coloneqq - \Delta -\frac{1}{r}$, the eigenvalues  of which 
 $\lambda^0_n=-\frac{1}{4n^2}$  (of multiplicity $n^2$) is well-known.

Note that 

\begin{claim} \label{eqn-B.3}
Negative spectrum of  operator $H^0_\beta $ consists of eigenvalues $\mu_{n,l}^0\coloneqq \mu_{n,l}^0(\beta)$ ($l=0,1,\ldots, n-1$) of multiplicities $(2l+1)$ (some of those could coincide) which are 
eigenvalues of operator
\begin{gather}
K_l(\beta) = \sqrt{\beta^{-2}\bigl(-\partial_r^2 +\frac{l(l+1)}{r^2}\bigr)+\frac{1}{4}\beta^{-4}} -\frac{1}{2}\beta^{-2}-\frac{1}{r}
 \label{eqn-B.4}\\
 \intertext{in $\sL^2(\bR^+,\bC)$ and  eigenfunctions of $H^0_\beta $ are $R_{n,l}(r;\beta) Y_{l,m}(\phi, \theta)$ where}
R_{n,l}(r;\beta)= r^{-1} w_{n,l}(r;\beta),
 \label{eqn-B.5}
 \end{gather}
$w_{n,l}(r;\beta)$ are corresponding orthonormal eigenfunctions of $K_l(\beta)$  and $n=l+1, l+2,\ldots$.
\end{claim}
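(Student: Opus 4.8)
The statement is a separation-of-variables reduction in spherical coordinates, complicated only by the nonlocality of $T_\beta$. The plan is as follows. First I would fix the self-adjoint realization of $H^0_\beta$: writing $T_\beta = f(-\Delta)$ with $f(s) = \sqrt{\beta^{-2}s + \tfrac14\beta^{-4}} - \tfrac12\beta^{-2}$, a nonnegative operator-monotone function with $f(s) = s + O(s^2)$ near $0$, we see that $T_\beta$ is a nonnegative Fourier multiplier and that $|p| \le \beta T_\beta + \tfrac12\beta^{-1}$ as quadratic forms. Hence by Kato's inequality $|x|^{-1} \le \tfrac{\pi}{2}|p|$ the potential $-1/r$ is $T_\beta$-form-bounded with relative bound $\tfrac{\pi\beta}{2}$; for $0<\beta<2/\pi$ this is $<1$ and KLMN produces a semibounded self-adjoint $H^0_\beta$, while at the borderline $\beta=2/\pi$ one invokes the sharp Herbst / Lieb--Yau construction (the references already cited for Remark~\ref{rem-1.1}). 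In every case $T_\beta$, being a function of $-\Delta$, commutes with the natural unitary action of $O(3)$ on $\sL^2(\bR^3,\bC)$, and so does multiplication by $1/r$; therefore $H^0_\beta$ commutes with this action.

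Next I would carry out the angular decomposition. Expanding each $g\in\sL^2(\bR^3,\bC)$ in spherical harmonics gives the orthogonal sum $\sL^2(\bR^3,\bC)=\bigoplus_{l\ge 0}\bigoplus_{m=-l}^{l}\mathcal{R}_{l,m}$ with $\mathcal{R}_{l,m}=\{\,h(|x|)Y_{l,m}(x/|x|):h\in\sL^2(\bR^+,r^2\,dr)\,\}$, and each $\mathcal{R}_{l,m}$ is a reducing subspace for the $O(3)$-action, hence for $H^0_\beta$. The map $U_l\colon\sL^2(\bR^+,dr)\to\mathcal{R}_{l,m}$, $(U_lw)(x)=|x|^{-1}w(|x|)Y_{l,m}(x/|x|)$, is unitary. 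The heart of the argument is the identity $U_l^{*}H^0_\beta U_l=K_l(\beta)$: on smooth functions supported away from the origin one computes
\[
-\Delta\bigl(|x|^{-1}w(|x|)Y_{l,m}\bigr)=|x|^{-1}\bigl(-w''(|x|)+l(l+1)|x|^{-2}w(|x|)\bigr)Y_{l,m},
\]
so $U_l^{*}(-\Delta)U_l$ is the Friedrichs (Dirichlet-at-$0$) realization $A_l$ of $-\partial_r^2+l(l+1)r^{-2}$ on $\sL^2(\bR^+)$; since $U_l$ intertwines $-\Delta\big|_{\mathcal{R}_{l,m}}$ with $A_l$, the functional calculus gives $U_l^{*}f(-\Delta)U_l=f(A_l)$, which together with the unchanged multiplication by $1/r$ is exactly $K_l(\beta)$ of \textup{(\ref{eqn-B.4})}.

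Finally I would read off the spectrum and the labelling. Since $T_\beta$ has essential spectrum $[0,\infty)$ and $-1/r$ is relatively form-compact, $H^0_\beta$ and each $K_l(\beta)$ have essential spectrum $[0,\infty)$, so the negative spectrum of $K_l(\beta)$ is discrete and can accumulate only at $0$; listing its eigenvalues in increasing order as $\mu^0_{l+1,l}(\beta)\le\mu^0_{l+2,l}(\beta)\le\cdots<0$ with orthonormal eigenfunctions $w_{n,l}(\cdot;\beta)$, $n=l+1,l+2,\ldots$, and transporting back through the $U_l$ over the $(2l+1)$ values of $m$, each $\mu^0_{n,l}(\beta)$ becomes an eigenvalue of $H^0_\beta$ of multiplicity $2l+1$ with eigenfunctions $R_{n,l}(r;\beta)Y_{l,m}(\phi,\theta)$, $R_{n,l}=r^{-1}w_{n,l}$; the decomposition of the previous paragraph shows that these exhaust the negative spectrum.

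I expect the main obstacle to be the middle step: rigorously justifying that restricting the nonlocal operator $f(-\Delta)$ to the reducing subspace $\mathcal{R}_{l,m}$ yields precisely $f$ of the restricted Laplacian \emph{with the correct operator domain} --- in particular that the right self-adjoint extension at $r=0$ is the Friedrichs one --- and carrying this out uniformly down to $\beta=2/\pi$, where the form-boundedness used for self-adjointness degenerates to the critical case and the soft KLMN argument must be replaced by the sharp Herbst/Lieb--Yau analysis.
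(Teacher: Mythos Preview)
Your argument is correct and complete. The paper, however, does not prove this claim at all: it is introduced with ``Note that'' and treated as a standard fact about separation of variables for rotationally invariant pseudodifferential operators, with no further justification beyond the reference to the Herbst and Lieb--Yau papers already cited in Remark~\ref{rem-1.1} for the self-adjointness issue. So there is nothing to compare your approach against; you have supplied the details the paper omits.

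Your identification of the delicate point is also accurate: the passage from $f(-\Delta)\bigl|_{\mathcal{R}_{l,m}}$ to $f(A_l)$ via the functional calculus, and the handling of the critical coupling $\beta=2/\pi$, are exactly where care is required, and the paper simply takes these for granted.
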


Recall that $Y_{l,m}(\phi, \theta)$ are spherical harmonics. 

\begin{proposition}\label{prop-B.1}
Under assumption \textup{(\ref{eqn-B.2})} the following hold:
\begin{gather}
\mu_{n,l}(\beta) < \mu_{n,l}(\beta') \le \lambda_{n}^0 =-\frac{1}{4n^2} \qquad \text{for\ \ } \beta' <\beta , 
\label{eqn-B.6}\\
\mu^0_{n,l}(\beta) < \mu ^0_{n+1,l+1}(\beta).
\label{eqn-B.7}
\end{gather}
\end{proposition}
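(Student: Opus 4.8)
The plan is to deduce both inequalities from the radial reduction in~\textup{(\ref{eqn-B.3})} together with the min--max principle, using two monotonicity properties of the relativistic kinetic symbol. First I would record the elementary identity
\[
\tau_\beta(\xi)\coloneqq\Bigl(\beta^{-2}|\xi|^2+\tfrac14\beta^{-4}\Bigr)^{1/2}-\tfrac12\beta^{-2}
=\frac{2|\xi|^2}{1+\sqrt{1+4\beta^2|\xi|^2}}\,,
\]
from which it is immediate that for each fixed $\xi\neq0$ the map $\beta\mapsto\tau_\beta(\xi)$ is \emph{strictly decreasing} on $(0,\infty)$ and $\tau_\beta(\xi)<\tau_0(\xi)=|\xi|^2$. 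Reading $T_\beta$ as a Fourier multiplier, this gives the strict quadratic-form inequalities $\langle T_\beta u,u\rangle<\langle T_{\beta'}u,u\rangle$ and $\langle T_\beta u,u\rangle<\langle(-\Delta)u,u\rangle$ for every $u\neq0$ in the relevant form domain, whenever $0\le\beta'<\beta$; note that the form domains of the $T_\beta$ grow as $\beta$ increases, so the smaller one always serves as a common ground, and adding $-r^{-1}$, which under~\textup{(\ref{eqn-B.2})} leaves $H^0_\beta=T_\beta-r^{-1}$ self-adjoint and bounded below with discrete negative spectrum (as recalled above), preserves the strict inequality for $H^0_\beta$.

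For~\textup{(\ref{eqn-B.6})} I would work in the fixed angular-momentum sector $l$: by~\textup{(\ref{eqn-B.3})} and the cited semiboundedness/discreteness facts, the negative spectrum there is the sequence of eigenvalues $\mu^0_{l+1,l}(\beta)\le\mu^0_{l+2,l}(\beta)\le\cdots$ of $K_l(\beta)$, so that $\mu^0_{n,l}(\beta)$ is given by min--max as the minimum over $(n-l)$-dimensional subspaces $V$ of the form domain of $\max_{w\in V\setminus0}\langle K_l(\beta)w,w\rangle/\|w\|^2$. Taking as a competitor for the $\beta$-problem the span of the first $n-l$ eigenfunctions of $K_l(\beta')$ — admissible since the form domain of $K_l(\beta')$ is contained in that of $K_l(\beta)$ for $\beta'<\beta$, the case $\beta'=0$ included — and invoking the strict form inequality above (the maximum of the Rayleigh quotient over the finite-dimensional $V$ being attained) yields $\mu^0_{n,l}(\beta)<\mu^0_{n,l}(\beta')$. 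Specializing to $\beta'=0$ and recalling that the $(n-l)$-th eigenvalue of $K_l(0)$ equals $-\tfrac1{4n^2}=\lambda_n^0$ gives $\mu^0_{n,l}(\beta')\le\lambda_n^0$, which together with the previous inequality is~\textup{(\ref{eqn-B.6})}.

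For~\textup{(\ref{eqn-B.7})} I would set $A_l\coloneqq-\partial_r^2+l(l+1)r^{-2}$ on $\sL^2(\bR^+,\bC)$ with the Dirichlet condition at the origin; since $A_{l+1}=A_l+2(l+1)r^{-2}$, one has the strict form inequality $A_l<A_{l+1}$ on the form domain of $A_{l+1}$. The function $x\mapsto(\beta^{-2}x+\tfrac14\beta^{-4})^{1/2}$ is operator monotone on $[0,\infty)$ — an affine change of variable followed by the operator-monotone square root — hence $K_l(\beta)\le K_{l+1}(\beta)$ as forms on the form domain of $K_{l+1}(\beta)$, which is contained in that of $K_l(\beta)$. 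To make this strict on nonzero vectors I would argue that equality $\langle K_{l+1}(\beta)w,w\rangle=\langle K_l(\beta)w,w\rangle$ forces $w$ into the kernel of the nonnegative operator $(\beta^{-2}A_{l+1}+\tfrac14\beta^{-4})^{1/2}-(\beta^{-2}A_l+\tfrac14\beta^{-4})^{1/2}$, which via the integral representation $M^{1/2}=\pi^{-1}\int_0^\infty M(M+s)^{-1}s^{-1/2}\,ds$, differentiated along the segment joining the two operators, forces $A_{l+1}w=A_lw$, i.e. $r^{-2}w=0$ and $w=0$. Finally I would note the index match — $\mu^0_{n+1,l+1}(\beta)$ is the $\bigl((n+1)-(l+1)\bigr)=(n-l)$-th eigenvalue of $K_{l+1}(\beta)$ and $\mu^0_{n,l}(\beta)$ the $(n-l)$-th of $K_l(\beta)$ — and feed the span of the first $n-l$ eigenfunctions of $K_{l+1}(\beta)$ into the min--max for $K_l(\beta)$; the strict inequality $\langle K_l(\beta)w,w\rangle<\langle K_{l+1}(\beta)w,w\rangle$ then yields $\mu^0_{n,l}(\beta)<\mu^0_{n+1,l+1}(\beta)$.

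The routine parts are the symbol identity and the index bookkeeping. The points that will need care — and the main obstacle — are, first, that all quadratic-form inequalities have to be read on the correct \emph{nested} form domains, which is precisely what makes the eigenfunctions of the comparison operator legitimate test functions; and second, upgrading the non-strict inequalities coming from operator monotonicity to strict ones on nonzero vectors. The cleanest way I see to do the latter is to differentiate $u\mapsto\langle(\beta^{-2}A_l+2u(l+1)\beta^{-2}r^{-2}+\tfrac14\beta^{-4})^{1/2}w,w\rangle$ in $u\in[0,1]$ using the above integral representation and to check that the derivative is strictly positive whenever $w\neq0$ (using that the multiplication operator $r^{-2}$ has trivial kernel).
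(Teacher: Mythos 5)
Your proposal is correct and follows essentially the same route as the paper: the identity $T_l(\beta)=\bigl(\sqrt{4\beta^2\Lambda+1}+1\bigr)^{-1/2}2\Lambda\bigl(\sqrt{4\beta^2\Lambda+1}+1\bigr)^{-1/2}$, monotonicity of the kinetic form in $\beta$ and in $l$ (via operator monotonicity of the square root), and the variational principle with the index match $(n+1)-(l+1)=n-l$. You supply more detail on form domains and on upgrading the operator-monotonicity inequality to a strict one, which the paper leaves implicit (and in passing you get the direction of the $\beta$-monotonicity right where the paper's inline inequality has $\beta$ and $\beta'$ transposed).
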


\begin{proof}
Observe that 
\begin{multline}
T_{l}(\beta)\coloneqq \sqrt{\beta^{-2}\Lambda +\frac{1}{4}\beta^{-4}} -\frac{1}{2}\beta^{-2}\\[3pt]
=\Bigl(\sqrt{4\beta^{2}\Lambda +1} +1\Bigr)^{-\frac{1}{2}} 2\Lambda
\Bigl(\sqrt{4\beta^{2}\Lambda +1} +1\Bigr)^{-\frac{1}{2}}
 \label{eqn-B.8}
\end{multline}
with  $\Lambda\coloneqq \Lambda_{n,l} = -\partial_r^2-\frac{l(l+1)}{r^2}$. Then $T_l (\beta') < T_l(\beta)\le \Lambda$
(\ref{eqn-B.6})  follows from the variational principle, since $\lambda_n^0$ are eigenvalues of operator
\begin{equation}
K_l(0)\coloneqq  \Lambda -\frac{1}{r}.
 \label{eqn-B.9}
\end{equation}

Since $T_{l+1} > T_l$, and $(n-l)$ numbers eigenvalues, starting from the first, we have (\ref{eqn-B.7}).
\end{proof}

\begin{proposition}\label{prop-B.2}
Let $l\le \epsilon n$\,\footnote{\label{foot-15} We cover case $l\ge \epsilon n$ separately.}. Then the following estimate holds:
\begin{equation}
\mu_{n,l}^0 \ge \lambda_{n}^0 - Cn^{-3}.
 \label{eqn-B.10}
\end{equation}
\end{proposition}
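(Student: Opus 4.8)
\emph{Approach.} I would prove the lower bound by monotonicity in $\beta$ (Proposition~\ref{prop-B.1}) together with a Feynman--Hellmann argument, reducing everything to a bound on $\int_0^\infty|R_{n,l}(r;\beta)|^2\,dr$. Write $\Lambda=\Lambda_{n,l}=-\partial_r^2+l(l+1)r^{-2}\ge0$ and $T_l(\beta)=f_\beta(\Lambda)$ with $f_\beta(s)=\tfrac1{2\beta^2}\bigl(\sqrt{1+4\beta^2s}-1\bigr)$. A short computation gives the pointwise identities $s=f_\beta(s)+\beta^2f_\beta(s)^2$ and $\partial_\beta f_\beta(s)=-2\beta f_\beta(s)^2/(1+2\beta^2f_\beta(s))$, whence as operators $\Lambda=T_l(\beta)+\beta^2T_l(\beta)^2$ (so $K_l(\beta)=K_l(0)-\beta^2T_l(\beta)^2$) and
\[
-2\beta\,T_l(\beta)^2\ \le\ \frac{\partial T_l(\beta)}{\partial\beta}\ \le\ 0 .
\]
For $\beta$ in any interval $[\beta_0,2/\pi]$ with $\beta_0>0$ the family $K_l(\beta)=T_l(\beta)-r^{-1}$ on $\sL^2(\bR^+,\bC)$ is analytic of type~(A) with common domain $D(\sqrt\Lambda)$, and $\mu^0_{n,l}(\beta)$ is a simple eigenvalue of the radial operator (simple at $\beta=0$; at a possible crossing the one-sided bound below still holds). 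Feynman--Hellmann then gives, for $w=w_{n,l}(\cdot;\beta)$ with $\|w\|=1$,
\[
\frac{d}{d\beta}\mu^0_{n,l}(\beta)=\Bigl\langle\tfrac{\partial T_l(\beta)}{\partial\beta}w,\,w\Bigr\rangle\ \ge\ -2\beta\,\langle T_l(\beta)^2w,w\rangle\ =\ -2\beta\,\|T_l(\beta)w\|^2 ,
\]
and since $K_l(\beta)w=\mu^0_{n,l}(\beta)w$ means $T_l(\beta)w=(\mu^0_{n,l}(\beta)+r^{-1})w$ and $\mu^0_{n,l}(\beta)<0$,
\[
\|T_l(\beta)w\|^2=\int_0^\infty\bigl(\mu^0_{n,l}(\beta)+r^{-1}\bigr)^2|w(r)|^2\,dr\ \le\ \bigl(\mu^0_{n,l}(\beta)\bigr)^2+\langle r^{-2}\rangle_\beta ,
\]
where $\langle r^{-2}\rangle_\beta:=\int_0^\infty r^{-2}|w(r)|^2\,dr=\int_0^\infty|R_{n,l}(r;\beta)|^2\,dr$.

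\emph{The moment bound.} By Proposition~\ref{prop-B.1}, $|\mu^0_{n,l}(\beta)|\le\tfrac1{4n^2}$, so it remains to show $\langle r^{-2}\rangle_\beta\le Cn^{-3}$ for $l\le\epsilon n$, uniformly in $\beta$ (with $C$ allowed to depend on $\epsilon$, and at $l=0$ also on a lower bound for $\tfrac2\pi-\beta$, since $R_{n,0}(\cdot;\beta)$ is mildly singular at the origin). This is precisely what the relativistic radial eigenfunction estimates of this appendix provide — Proposition~\ref{prop-B.5}, the counterpart of Proposition~\ref{prop-A.8}\ref{prop-A.8-iii} and of (\ref{eqn-2.18}): $|R_{n,l}(r;\beta)|\le C\langle r\rangle^{-3/4}n^{-3/2}$ in the classically allowed zone $C_0^{-1}(l^2{+}1)\le r\le C_0 n^2$, with rapid decay in the classically forbidden zones; integrating,
\[
\int_0^\infty|R_{n,l}(r;\beta)|^2\,dr\ \lesssim\ n^{-3}\int_{c(l^2+1)}^{\infty}\langle r\rangle^{-3/2}\,dr\ +\ (\text{turning-point and forbidden contributions})\ \lesssim\ \frac{n^{-3}}{l+1}\ \le\ Cn^{-3}.
\]

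\emph{Conclusion and the main obstacle.} Integrating the Feynman--Hellmann inequality over $[\beta_0,\beta]$ and letting $\beta_0\downarrow0$ (using $\mu^0_{n,l}(\beta_0)\to\mu^0_{n,l}(0)=\lambda^0_n$ and the uniform bound just obtained),
\[
\lambda^0_n-\mu^0_{n,l}(\beta)\ =\ -\int_0^\beta\frac{d}{d\beta'}\mu^0_{n,l}(\beta')\,d\beta'\ \le\ \int_0^\beta 2\beta'\,\|T_l(\beta')w_{n,l}(\cdot;\beta')\|^2\,d\beta'\ \le\ C\beta^2n^{-3}\ \le\ Cn^{-3},
\]
since $\beta\le2/\pi$; together with $\mu^0_{n,l}(\beta)\le\lambda^0_n$ from Proposition~\ref{prop-B.1} this is (\ref{eqn-B.10}). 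The main obstacle is the uniform-in-$\beta$ estimate $\langle r^{-2}\rangle_\beta\le Cn^{-3}$: Hardy's inequality $\Lambda\ge l(l+1)r^{-2}$ combined with $\Lambda=T_l(\beta)+\beta^2T_l(\beta)^2$, Cauchy--Schwarz, and a short bootstrap gives only $\langle r^{-2}\rangle_\beta\lesssim(l+1)^{-4}$, which suffices for $l\gtrsim n^{3/4}$ but not for small $l$; obtaining the sharp power $n^{-3}$ (in particular at $l=0$) genuinely requires the turning-point asymptotics of the relativistic eigenfunctions, so one must make sure that analysis does not itself rely on the present proposition.
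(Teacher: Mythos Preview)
Your Feynman--Hellmann strategy is exactly the argument the paper uses---but for Proposition~\ref{propo-B.6}, not for Proposition~\ref{prop-B.2}. The logical order in the paper is: first Proposition~\ref{prop-B.2} is proved by a \emph{semiclassical eigenvalue-counting} argument (comparing the counting functions $\N_{l,\beta}(\lambda)$ and $\N_l(\lambda)$ to their Weyl expressions, each with $O(1)$ error, and noting that the Weyl expressions themselves differ by $O(1)$); then Proposition~\ref{prop-B.5} (the pointwise bound $|w_{n,l}|\le Cr^{1/4}n^{-3/2}$) is proved \emph{using} (\ref{eqn-B.10}); and only then does the paper run the Feynman--Hellmann integral (\ref{eqn-B.31}) to get the sharper estimate (\ref{eqn-B.28}).

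So the obstacle you flag in your last sentence is not hypothetical: in this paper the eigenfunction analysis \emph{does} rely on the present proposition. The proof of Proposition~\ref{prop-B.5} explicitly invokes (\ref{eqn-B.10}) to guarantee $|\mu^0_{n,l}|\le c_0 n^{-2}$ (needed so that the correction terms $W_s$ in (\ref{eqn-B.21}) are $O(r^{-2})$ and can be absorbed by a bounded gauge factor). Without (\ref{eqn-B.10}) you only know $\mu^0_{n,l}\le -\tfrac1{4n^2}$ from Proposition~\ref{prop-B.1}, with no a~priori lower bound, and the reduction to the non-relativistic ODE analysis of Appendix~\ref{sect-A} breaks down. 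Your moment bound $\langle r^{-2}\rangle_\beta\le Cn^{-3}$ therefore has no independent justification here, and the argument is circular as written. If you want to rescue the approach you would need either an independent proof of the moment bound (e.g.\ a bootstrap starting from a crude lower bound on $\mu^0_{n,l}$), or to revert to the paper's direct semiclassical comparison of counting functions, which needs no eigenfunction input at all.
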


\begin{proof}
Consider eigenvalue counting functions $N_{l,\beta} (\lambda)$ and $\N_{l} (\lambda)$ for $1$-dimensional operators $K_l(\beta)$ and $K_l(0)$ respectively, $\lambda<0$. Under assumption
\begin{equation}
l\le \epsilon |\lambda| ^{-1/2} 
 \label{eqn-B.11}
\end{equation}
the standard semiclassical methods show that
\begin{equation}
\N_{l,\beta} (\lambda) = \cN^\W _{l,\beta} (\lambda)+O(1),\qquad
\N_{l} (\lambda) = \cN^\W _{l} (\lambda)+O(1),
\label{eqn-B.12}
\end{equation}
where $\cN^\W_{l,\beta} (\lambda)$ and $\cN^\W_{l} (\lambda)$ are corresponding Weyl expressions.

Indeed, under assumption (\ref{eqn-B.11}) these operators are microhyperbolic (with effective semiclassical parameter $h\asymp r^{-1/2}$) and 
the remainder estimate is $C\int_{r_*}^{r^*} r^{-1}\,dr= O(\log (r^*/r_*))$  with $r^*\asymp |\lambda|^{-1}$,
$r_*\asymp (l+1)^{-2}$ and it is $O(1)$ as $l\asymp |\lambda|^{-1/2}$. 

Further,  considering propagation of singularities in the direction of increasing $r$, one can update the remainder estimate to 
\begin{equation*}
C\int_{r_*}^{r^*} \bigl( r^{-1+\delta}r^{*\,-\delta} + r^{-s}\bigr) \,dr=O(1)
\end{equation*}
with $\delta>0$.

 Finally, one can see easily that the contribution of the zone $\{r\colon r\le C_0\}$ to both $\N_{l,\beta}(\lambda)$ and $\cN^\W_{l,\beta}(\lambda)$ does not exceed $C$.

On the other hand, one can see easily that $ \cN^\W _{l,\beta} (\lambda)= \cN^\W _{l} (\lambda)+O(1)$ and since 
$\cN^\W _{l} (\lambda)\asymp |\lambda|^{-1/2}$ for $r_*\le \epsilon r^*$ we arrive to (\ref{eqn-B.10}) in this case. Here we calculate  $r^*$ exactly as in Appendix~\ref{sect-A} for non-relativistic operator $K_l(0)$. One can see easily that assumption (\ref{eqn-B.11}) is equivalent to $l\le \epsilon n$ (with different but still small constant $\epsilon>0$). We leave easy details to the reader.
\end{proof}

Recall that $w_{n,l}\coloneqq w_{n,l}( r;\beta)$ are eigenfunctions of $K_l$, $\|w_{n,l}\|=1$ where $\|.\|$ and $\blangle.,.\brangle$ are norm and an inner product in  $\sL^2 (\bR^+,\bC)$. 

\begin{proposition}\label{prop-B.3}
Under assumption \textup{(\ref{eqn-B.2})} the following estimates hold with arbitrarily large exponent $s$:
\begin{align}
&|w_{n,l}(r)|\le Cr^{-s} &&\text{for\ \ } r\ge C_0n^2
\label{eqn-B.13}\\
\shortintertext{and}
&|w_{n,l}(r)|\le C\3 w_{n,l}\3_{l;s}\coloneqq C \sum _{k\ge 0} t_k^{-s} \| \phi_{t_k} w_{n,l}\| &&\text{for\ \ } r\le C^{-1}_0l^2\,,
\label{eqn-B.14}\
\end{align}
where $\phi_t(r)= \phi(r/t)$ and $\phi \in \sC_0^\infty ([-1,1])$, $\phi(r)=1$ on $(-\frac{1}{2},\frac{1}{2})$, $t_k=2^kl$.
\end{proposition}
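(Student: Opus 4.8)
The plan is to carry over to the non-local operator $K_l(\beta)$ the cut-off-and-iterate argument used for $K_l(0)$ in Appendix~\ref{sect-A} (cf.\ the proof of Proposition~\ref{prop-A.8}\ref{prop-A.8-ii},\ref{prop-A.8-iii}). The point is that in both zones of interest $T_\beta$ is essentially the ordinary differential operator $\Lambda=-\partial_r^2+l(l+1)r^{-2}$, so the argument is basically the non-relativistic one; the genuinely new issue — controlling the non-local corrections $T_\beta-\Lambda$ and $[T_\beta,\psi]-[\Lambda,\psi]$ — is handled through a subordination representation, and is the main obstacle.

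Write $w=w_{n,l}(\,\cdot\,;\beta)$, $\mu=\mu^0_{n,l}(\beta)$, $m=\tfrac12\beta^{-2}$ and $T=T_l(\beta)=(\beta^{-2}\Lambda+m^2)^{1/2}-m$, so that $(T-r^{-1})w=\mu w$ on $\sL^2(\bR^+,\bC)$. Proposition~\ref{prop-B.1} gives $\mu\le-\tfrac1{4n^2}$, and since $\sqrt{\,\cdot\,}$ is operator monotone, $T\ge0$ and, more precisely, $T\ge f\bigl(l(l+1)r^{-2}\bigr)$ with $f(\rho)=(\beta^{-2}\rho+m^2)^{1/2}-m$; here $f(\rho)\asymp\rho$ for $\rho\lesssim\beta^{-2}$, so $T\approx\Lambda$ on functions localised where $\Lambda\lesssim\beta^{-2}$, and moreover $f\bigl(l(l+1)r^{-2}\bigr)-r^{-1}-\mu\gtrsim f\bigl(l(l+1)r^{-2}\bigr)$ on $\{0<r\le C_0^{-1}l^2\}$, while $-\mu-r^{-1}\ge\tfrac18 n^{-2}$ on $\{r\ge C_0n^2\}$ (for $C_0$ large). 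Finally, from $s^{1/2}=\pi^{-1}\int_0^\infty s(s+t)^{-1}t^{-1/2}dt$ with $s=\beta^{-2}\Lambda+m^2$ one obtains convergent representations of $T$ and, for any Lipschitz cut-off $\psi$, of $[T,\psi]$, as $t$-integrals of $\Lambda(\Lambda+\beta^2(m^2+t))^{-1}$ and of $(\Lambda+\beta^2(m^2+t))^{-1}[\Lambda,\psi](\Lambda+\beta^2(m^2+t))^{-1}$ respectively, where $[\Lambda,\psi]=-\psi''-2\psi'\partial_r$ is supported on $\supp\psi'$ and each resolvent of $\Lambda+\beta^2(m^2+t)$ has kernel $O\bigl(e^{-\sqrt{\beta^2(m^2+t)}\,|r-r'|}\bigr)$ with $\sqrt{\beta^2(m^2+t)}\ge\tfrac12\beta^{-1}$, hence far more concentrated than any shell of the two zones.

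The core step is the one-step estimate, parallel to Appendix~\ref{sect-A}. For a dyadic shell of the relevant forbidden zone, take cut-offs $\psi_0\prec\psi_1$ ($\psi_0\equiv1$ on $\{r\le s/2\}$, $\supp\psi_0\subset\{r\le s\}$ for the inner zone, symmetrically for the outer one); pairing $(T-r^{-1}-\mu)w=0$ with $\psi_0^2 w$, using $T\ge f(l(l+1)r^{-2})$ or $T\ge0$ and the lower bounds above, and estimating the commutator $\blangle [T,\psi_0]w,\psi_0w\brangle$ by the subordination formula — which, the resolvent kernels being so concentrated, contributes only the $\sL^2$-mass of $w$ on a slightly enlarged shell, plus a $t$-integrable remainder (an integration by parts trading $\psi_0'\partial_r w$ for $\psi_0'^2w^2$ makes the $t$-integral converge) — one arrives at
\begin{gather*}
\int_{\{r\le s/2\}}\!\! w^2\,dr\ \le\ \frac{C_\beta}{l^2}\int_{\{r\asymp s\}}\!\! w^2\,dr+(\text{negligible})\qquad(\text{inner zone}),\\
\int_{\{r\ge 2s\}}\!\! w^2\,dr\ \le\ C_\beta\Bigl(\tfrac ns\Bigr)^{\!2}\int_{\{r\asymp s\}}\!\! w^2\,dr+(\text{negligible})\qquad(\text{outer zone}),
\end{gather*}
so that the per-step gain is a fixed factor $<1$: $C_\beta l^{-2}$ throughout the inner zone (even smaller for $r\lesssim\beta l$, where one uses $T\ge f(l(l+1)r^{-2})$ rather than $T\approx\Lambda$), and $C_\beta(n/s)^2\le C_\beta(C_0n)^{-2}$ for $s\ge C_0n^2$. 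Iterating over the dyadic scales exactly as in the proof of Proposition~\ref{prop-A.8} improves the loss super-geometrically; in the outer zone this gives $\|\mathbf 1_{\{r\ge R\}}w\|\le C_s R^{-s}$ for every $s$, and in the inner zone, summing the dyadic gains together with the (rapidly decaying) non-local remainders re-expanded over the scales $t_k=2^kl$, it gives $\|\phi_{t}w\|$-type bounds assembling into $\3 w\3_{l;s}$; the pointwise estimates \textup{(\ref{eqn-B.13})}, \textup{(\ref{eqn-B.14})} then follow from the representation $w=(T+|\mu|)^{-1}(r^{-1}w)$ and the exponential decay of the kernel of $(T+|\mu|)^{-1}$ (for the finitely many $l$ with $l\lesssim s$ everything is trivial).

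The main obstacle is the commutator estimate just used, i.e.\ making the cut-off/iteration argument survive the non-locality of $T_\beta$: all the positivity bookkeeping and the summation of the dyadic gains are as in Appendix~\ref{sect-A}, but one must check that $[T,\psi_0]$, though not supported near $\supp\psi_0'$, contributes only through the $\sL^2$-mass of $w$ on a slightly enlarged shell — up to errors that the exponential decay of the kernels $(\Lambda+\beta^2(m^2+t))^{-1}$ renders negligible after the iteration — and that the resulting $t$-integral converges, which is forced by the integration by parts mentioned above; the only other point requiring attention is the uniformity of the constants in $n$ and $l$.
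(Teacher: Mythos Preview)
Your approach and the paper's are essentially the same cut--off--and--iterate scheme; the differences are in how two technical points are handled.

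\emph{Commutator/localisation.} The paper does not pair with $\psi_0^2 w$ and estimate the single commutator $\blangle[T,\psi_0]w,\psi_0 w\brangle$; instead it invokes directly the IMS-type identity
\[
\blangle(K_l(\beta)-\mu)\psi_t w,\psi_t w\brangle=-\tfrac12\blangle[[K_l(\beta),\psi_t],\psi_t]w,w\brangle,
\]
and simply \emph{asserts} the bound $|\blangle[[K_l(\beta),\psi_t],\psi_t]w,w\brangle|\le C t^{-2}\|\psi_{t'}w\|^2+Ct^{-2s}\|w\|^2$ (this is (\ref{eqn-B.15})). Your subordination formula is exactly one way to \emph{justify} that assertion --- and in fact your ``integration by parts trading $\psi_0'\partial_r w$ for $(\psi_0')^2w^2$'' is nothing but the passage from the single to the double commutator, since for self-adjoint $T$ and real $\psi_0$, $w$ one has $\blangle[T,\psi_0]w,\psi_0 w\brangle=-\tfrac12\blangle[[T,\psi_0],\psi_0]w,w\brangle$, and $[[\Lambda,\psi_0],\psi_0]=-2(\psi_0')^2$ is bounded. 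So the two routes coincide; the paper is terser, yours is more explicit about where the non-local tails come from and why they decay.

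\emph{$\sL^2\to\sL^\infty$ step.} Here the paper is considerably simpler than your resolvent representation $w=(T+|\mu|)^{-1}(r^{-1}w)$: once $\|\psi_t w\|\le Ct^{-s}$ is known, the paper bounds $\|(\beta^{-2}\Lambda+\beta^{-4})^{1/2}\psi_t w\|$ directly from the eigenvalue equation, hence $\|\partial_r(\psi_t w)\|\le Ct^{-s}$ (this is (\ref{eqn-B.17})), and the pointwise bound follows from one-dimensional Sobolev. Your route works but is more involved than needed.
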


\begin{proof}
\begin{enumerate}[label=(\alph*), wide, labelindent=0pt]
\item\label{pf-B.3-a}
Proof is standard: observe that for $w\coloneqq w_{n,l}$
\begin{multline}
|\blangle (K_{l}(\beta) -\mu_{n,l}^0)\psi _t w ,\, \psi_t w\brangle|\le C| \blangle [[K_{l}(\beta), \psi_t]v,\psi_t]w, w\brangle |\\[3pt]
\le C_0 t^{-2} \|\psi _{t'}v\|^2 + Ct^{-2s}\|w\|^2\,,
\label{eqn-B.15}
\end{multline}
where  $1-\psi \in \sC_0^\infty ([-1,1])$, $\psi(r)=0$ on $(-1-\varepsilon,1+\varepsilon)$, $t'= (1+2\varepsilon)^{-1}t$.

On the other hand, for $t \ge C_0  n^{-2}$  
\begin{equation}
\blangle (K_{l}(\beta) -\mu_{n,l}^0)\psi _t w ,\, \psi_t w\brangle \ge C n^{-2}\|\psi _t w \|^2 - Cn{-2s}\|w\|^2\,.
\label{eqn-B.16}
\end{equation}
Combining (\ref{eqn-B.15}) and  (\ref{eqn-B.16}) and iterating we arrive to $\|\psi _t w \|\le Ct^{-s}$. Then
\begin{multline*}
\| (\beta ^{-2}\Lambda + \beta^{-4})^{1/2}\psi _t w\| \le \beta^{-2}\|\phi_t w\|  + Ct^{-2s} \implies\\[3pt]
\blangle (\beta ^{-2}\Lambda + \beta^{-4}) \psi_t w,\, \psi_t w\brangle \le \beta^{-2}\|\psi_t w\|^2 + 2C\beta^{-2} l^{-2s}  \implies\\[3pt]
\blangle \Lambda  \psi_t w,\, \psi_t w\brangle\le Ct^{-2s}\,.
\end{multline*}
Then\begin{equation}
\|\partial  \psi_t  w\| + l\|r^{-1}\psi_t w\|  \le Ct^{-s}\,,
\label{eqn-B.17}
\end{equation}
which implies (\ref{eqn-B.13}).

\item\label{pf-B.3-b}
The proof of (\ref{eqn-B.3}) follows the same scheme albeit now 
$\psi \in \sC_0^\infty ([-1,1])$, $\psi(r)=0$ on $(-1+\varepsilon,1-\varepsilon)$, $t'= (1-2\varepsilon)^{-1}t$ and we select $t= C_0^{-1}l^2$. Then
(\ref{eqn-B.15}) is replaced by 
\begin{equation}
|\blangle (K_{l}(\beta) -\mu_{n,l}^0)\psi _t w ,\, \psi_t w\brangle| \le C_0 t^{-2} \|\psi _{t'}v\|^2 + C\3 w_{n,l}\3_{l;s} \,.
\label{eqn-B.18}
\end{equation}
Then (\ref{eqn-B.16}) is replaced by
\begin{equation}
\blangle (K_{l}(\beta) -\mu_{n,l}^0)\psi _t w ,\, \psi_t w\brangle \ge C l^2t^{-1}\|\psi _t w \|^2 - C\3 w_{n,l}\3_{l;s}\,,
\label{eqn-B.19}
\end{equation}
and combining (\ref{eqn-B.18}) and (\ref{eqn-B.19}) we arrive to $\|\psi _t w \|\le C\3 w_{n,l}\3_{l;s}$. Next two inequalities are preserved, but $t^{-2s}$ is replaced by $\3 w_{n,l}\3_{l;s}$ there.
\end{enumerate}
\vskip-\baselineskip
\end{proof}

\begin{proposition}\label{prop-B.4}
Under assumption \textup{(\ref{eqn-B.2})} the following estimate hold swith arbitrarily large exponent $s$:
\begin{gather}
\| P_{l,s} w_{n,l} \| \le C \3 w_{n,l}\3_{l;s}\,,
\label{eqn-B.20}\\
\shortintertext{with}
P_{l,s}\coloneqq   - \partial _r^2 + V_s(r,\lambda_n, l,\mu ^0_{n,l})+ W_s (r,\lambda_n, l,\mu ^0_{n,l}) \partial_r \,,
\label{eqn-B.21}\\
\shortintertext{and}
V_s (r,\lambda_n,l,\mu )\coloneqq \frac{l(l+1)}{r^2} -\frac{1}{r}-\mu  + \sum_{j,k \colon 2\le j+k\le s} V_{j,k}(r/r_*, \beta) r^{-j}\mu^k\,,
\label{eqn-B.22}\\
\shortintertext{and}
W_s (r,\lambda_n,l,\mu )\coloneqq  \sum_{j,k \colon 2\le j+k\le s} W_{j,k}(r/r_*,\beta) r^{-j}\mu^k\
\label{eqn-B.23}
\end{gather}
with coefficients which could be decomposed asymptotically in positive powers of $\beta$ and non-positive of $r/r_*$.
\end{proposition}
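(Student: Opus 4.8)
The proof will convert the first-order pseudodifferential eigenvalue equation for $K_l(\beta)$ into an approximate second-order ordinary differential equation. Write the eigenvalue equation $K_l(\beta)w_{n,l}=\mu^0_{n,l}w_{n,l}$ in the form $A^{1/2}w=gw$, where $w:=w_{n,l}$, $A:=\beta^{-2}\Lambda_{n,l}+\tfrac14\beta^{-4}$ with $\Lambda_{n,l}=-\partial_r^2+l(l+1)r^{-2}$ (cf.\ (\ref{eqn-B.8})), and $g:=\tfrac12\beta^{-2}+r^{-1}+\mu^0_{n,l}$. Applying $A^{1/2}$ once more and commuting it past the multiplication operator $g$ gives the exact identity $(A-g^2)w=\bigl[A^{1/2},g\bigr]w$; multiplying by $\beta^2$, its left-hand side is precisely $\bigl(-\partial_r^2+l(l+1)r^{-2}-r^{-1}-\mu^0_{n,l}-\beta^2(r^{-1}+\mu^0_{n,l})^2\bigr)w$, i.e.\ the principal part of $P_{l,s}w$ together with the $j+k=2$ contribution. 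Thus the proposition reduces to expanding $\beta^2\bigl[A^{1/2},g\bigr]w$ as (a first-order differential operator with coefficients of the stated shape) applied to $w$, plus a remainder of size $C\3w\3_{l;s}$.

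For the commutator I would use the representation $A^{1/2}=\pi^{-1}\int_0^\infty A(A+t)^{-1}t^{-1/2}\,dt$, which gives
\[\bigl[A^{1/2},g\bigr]=\frac{1}{\pi}\int_0^\infty t^{1/2}(A+t)^{-1}[A,g](A+t)^{-1}\,dt,\qquad [A,g]=\frac{2\beta^{-2}}{r^2}\Bigl(\partial_r-\frac1r\Bigr),\]
since $g'=-r^{-2}$, $g''=2r^{-3}$. I then iterate: inside each resolvent $(A+t)^{-1}$ I move the coefficient functions (powers of $r^{-1}$, together with the coefficients produced at earlier stages, and $\partial_r$) to the outside by further commutators, each commutation producing (i) a differential-operator term carrying one additional factor of order $r^{-2}$, where any residual $A^{1/2}$ is eliminated by the equation $A^{1/2}w=gw$ (this is what turns the expansion into a polynomial in $r^{-1}$ and $\mu^0_{n,l}$ with $\beta$-dependent coefficients, naturally organised in powers of $r_*/r$ since $l(l+1)r^{-2}\asymp r_*^{-1}$ governs the size of $A$ near $r\asymp r_*$), and (ii) a remainder with one more sandwiched factor $(A+t)^{-1}$. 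Carried to order $s$, this collects all differential contributions into $P_{l,s}$ of the asserted form, with $V_{j,k}(r/r_*,\beta)$, $W_{j,k}(r/r_*,\beta)$ emerging from these computations and admitting the stated asymptotic expansions in $\beta$ and $r_*/r$.

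The remainder estimate is the heart of the matter. After $s$ iterations the remainder is a uniformly convergent $t$-integral of an operator built from $s+1$ resolvents $(A+t)^{-1}$ interleaved with coefficients decaying like $r^{-2}$ (and $\partial_r$), applied to $w$. On the region $r\gtrsim l^2$ carrying the bulk of $w$, the operator $A$ is elliptic with the $r$-dependent effective semiclassical parameter $h\asymp r^{-1/2}$ already exploited in the proof of Proposition~\ref{prop-B.2}, so each resolvent, paired with its $r^{-2}$-type coefficient, contributes a gain; after $s$ steps the bulk part of the remainder is negligible, while the far tail $r\gtrsim n^2$ is already $O(r^{-s})$ by Proposition~\ref{prop-B.3}. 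The only surviving contributions come from $r\lesssim l^2$, where $w$ lives in its classically forbidden tail, and there they are controlled dyadically by $\|\phi_{t_k}w\|$ with $t_k=2^kl$, which sums to $C\3w\3_{l;s}$. The main obstacle I anticipate is the bookkeeping: making the ``ellipticity with $r$-dependent $h$'' gains uniform across the whole $t$-integral and across the $s$ nested resolvents, and tracking exactly how the decaying coefficients and the cut-offs separating the bulk from the small-$r$ tail interact with the nonlocal operators $(A+t)^{-1}$ — it is precisely this interaction that makes the defect $P_{l,s}w$ measured by $\3w\3_{l;s}$ rather than vanishing identically.
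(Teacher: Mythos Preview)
Your squaring-plus-resolvent approach is a legitimate alternative, but it is genuinely different from what the paper does, and the difference matters for exactly the ``bookkeeping'' difficulty you flag at the end.

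The paper proceeds by first establishing a microlocal frequency bound: on a spatial cutoff $\psi_t(r)w_{n,l}$ with $t\le C_0n^2$, the function is concentrated at frequencies $|D_r|\lesssim t^{-1/2+\delta}$ up to errors controlled by $\3w_{n,l}\3_{l;s}$ (this is (\ref{eqn-B.24}), proved by standard semiclassical methods). Once this is in hand, $\beta^2\Lambda$ is effectively a \emph{small} operator on the bulk of $w_{n,l}$, so one may Taylor-expand the square root $T_l=\sum_{k\ge0}c_k\beta^{2k}\Lambda^{k+1}$ directly, with the remainder after $s$ terms automatically of the required size. The iteration then replaces each factor $\Lambda$ hitting $w_{n,l}$ by $r^{-1}+\mu^0_{n,l}$ via the eigenvalue equation, and the weight-counting argument in the paper's proof tracks that each commutator $[\Lambda,r^{-1}]$ gains order $-5/2$, producing the polynomial structure (\ref{eqn-B.22})--(\ref{eqn-B.23}). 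This route keeps every step \emph{local} once (\ref{eqn-B.24}) is proved.

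Your route instead carries the nonlocality of $A^{1/2}$ through the resolvent representation and only tries to localize at the very end. The concern you raise---that the cutoffs $\phi_{t_k}$ do not commute well with $(A+t)^{-1}$, and that the gains from ``$r$-dependent ellipticity'' must be made uniform across an integral in $t$ and $s$ nested resolvents---is real and is precisely what the paper's ordering avoids. I do not see a hard obstruction, but the missing ingredient in your sketch is the analogue of (\ref{eqn-B.24}): without some a priori frequency localization on $w_{n,l}$ you cannot control how $(A+t)^{-1}$ spreads across the dyadic shells, and the final estimate in terms of $\3w_{n,l}\3_{l;s}$ will not follow from resolvent bounds alone. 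If you want to complete your argument, you will end up proving something equivalent to (\ref{eqn-B.24}) anyway---at which point the direct Taylor expansion is shorter than the resolvent calculus.
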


\begin{proof}
Due to Proposition~\ref{prop-B.3} we need to consider only $r\le C_0n^2$. Then semiclassical microlocal methods imply that 
\begin{equation}
\| (1-\varphi_\tau( D_r)) \psi_t(r) w_{n,l}\|\le C\3 w_{n,l}\3_{s,l}
\label{eqn-B.24}
\end{equation}
provided $\psi\in \sC_0^\infty ([\frac{1}{2},1)$, $\varphi \in \sC_0^\infty ([-1,1])$, $\varphi =1$ on $(-\frac{1}{2},\frac{1}{2})$ and $\tau= t^{-1/2+\delta}$ with arbitrarily small exponent $\delta>0$, $t\le C_0n^2$.

Then $T_l w_{n,l} $ could be approximated in this zone by the (asymptotic) $\sum_{k\ge 0} c_k \beta^k \Lambda^{k+1}w_{n,l}$ with $c_0=1$,
and it equals to 
\begin{multline*}
\sum_{k\ge 0} c_k  \beta^k  \Lambda ^{k}(\frac{1}{r}+\mu_{n,l}) w_{n,l}\\=
\sum_{k\ge 0} c_k   \beta^k   (\frac{1}{r}+\mu_{n,l})\Lambda^{k}w_{n,l} +
\sum_{k\ge 0} c_k   \beta^k[\Lambda^{k},  (\frac{1}{r}+\mu_{n,l})]w_{n,l}
\end{multline*}
Giving weights $-\frac{1}{2}$ to $\partial _r$,  $-1$ to $\Lambda$  and $r^{-1}$, and $0$ to $l(l+1)/r$, we see that $[\Lambda,r^{-1}]$ 
has weight of $-\frac{5}{2}$ and both $[\Lambda,\partial_r]$ and  $[r^{-1},\partial_r]$  have weight of $-2$.

In this process we replace $-\partial_r^2$ by $\Lambda -\frac{l(l+1)}{r^2}$, and when $\Lambda$ reaches $w_{n,l}$ we replace it by $-(\frac{1}{r}+\mu_{n,l})$.

As a result  we have 
\begin{gather}
\bigl(K_{l}  -\mu_{n,l}^0\bigr)w_{n,l}  \equiv \bigl(I+Q     \bigr)P_{l,s} 
\label{eqn-B.25}\\
\shortintertext{with}
 Q=\beta^2 \sum_{j,k\colon 1\le j+k\le s} d_{j,k}(\beta) r^{-j}\mu^k + \beta^2 \sum_{j,k\colon 1\le j+k\le s} d'_{j,k}(\beta) r^{-j}\mu^k\partial_r\,.
 \label{eqn-B.26}
\end{gather}

It implies the statement of the proposition.
\end{proof}

\begin{proposition}\label{prop-B.5}
Under assumption \textup{(\ref{eqn-B.2})} the following estimate hold swith arbitrarily large exponent $s$:
\begin{gather}
|v_{n,l}(x)|\le C r^{1/4} n^{-3/2} \qquad\text{for\ \ }  r_*\le r \le r^*
 \label{eqn-B.27}
 \end{gather}
with $r_*=\epsilon_0 (l+1)^2$ and  $r^*=C_0 n^2$.
\end{proposition}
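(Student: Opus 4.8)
The plan is to prove Proposition~\ref{prop-B.5} as the relativistic counterpart of Corollary~\ref{cor-A.11}, transporting the oscillation analysis of Appendix~\ref{sect-A} to $w\coloneqq w_{n,l}(\cdot\,;\beta)$ --- which is exactly $v_{n,l}$ by (\ref{eqn-B.5}) --- by means of Proposition~\ref{prop-B.4}. First, by Proposition~\ref{prop-B.3} the bound (\ref{eqn-B.27}) already holds, with extra decay, for $r\ge C_0n^2$ and for $r\le C_0^{-1}l^2$, so it suffices to prove it on $[r_*,r^*]$ with $r_*=\epsilon_0(l+1)^2$ and $r^*=C_0n^2$; the two turning-point layers of widths $\asymp r_*^{2/3}$ and $\asymp r^{*\,2/3}$ at its ends will, once the oscillatory zone is in hand, be handled exactly as in Proposition~\ref{prop-A.8}, using only the differential equation for $w$ together with the bounds from the oscillatory zone.

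By Proposition~\ref{prop-B.4}, $w$ solves $P_{l,s}w=\rho$ with $\|\rho\|\le C\3w\3_{l;s}$, an error that I would carry as negligible (see below). The operator $P_{l,s}=-\partial_r^2+V_s+W_s\partial_r$ agrees with the non-relativistic radial operator $-\partial_r^2+\tfrac{l(l+1)}{r^2}-\tfrac1r-\mu_{n,l}^0$ up to the correction terms of (\ref{eqn-B.22})--(\ref{eqn-B.23}); each of those carries a positive power of $\beta$ and only non-positive powers of $r/r_*$, hence on $[r_*,r^*]$ (where $r\ge r_*\ge\epsilon_0$ and $|\mu_{n,l}^0|\asymp n^{-2}$) it is $O(\beta^2)$ and decays in $r$, and moreover $\mu_{n,l}^0-\lambda_n=O(n^{-3})$ by Proposition~\ref{prop-B.2}. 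I would remove the first-order term by the substitution $\tilde w=\exp\bigl(-\tfrac12\int_{r_*}^{r}W_s(t)\,dt\bigr)w$; since $\int_{r_*}^{r^*}|W_s|\,dr\le C\beta^2$, the conjugating factor lies between two positive constants and it is enough to estimate $\tilde w$, which satisfies $-\tilde w''+\tilde V\tilde w=(\text{negligible})$ with $\tilde V=V_s+\tfrac14 W_s^2-\tfrac12 W_s'=\tfrac{l(l+1)}{r^2}-\tfrac1r-\lambda_n+P(r)$, where $P$ is again $O(\beta^2)$ and decays in $r$.

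Then I would run the Wronskian recursion of Propositions~\ref{prop-A.4},~\ref{prop-A.6} and~\ref{prop-A.7} for $\tilde w$ in place of $v_{n,l}$: the turning points of $\tilde V$ are $(1+O(\beta^2))$-close to $r_*$ and $r^*$, the zeros $r_k$ of $\tilde w$ obey the spacing estimates (\ref{eqn-A.9}) and (\ref{eqn-A.12})--(\ref{eqn-A.13}), and multiplying $-\tilde w''+\tilde V\tilde w\approx0$ by $2\tilde w'$ and integrating over $[r_k,r_{k+1}]$ reproduces identity (\ref{eqn-A.14}) up to the extra contribution $-\int_{r_k}^{r_{k+1}}P'(r)\tilde w^2\,dr$, which at $r_k\asymp(l+k+1)^2$ is $O(\beta^2 r_k^{-3/2})$ relative to $\tilde w'^2(r_k)r_k^{1/2}$ and hence summable in $k$, exactly like the $\varepsilon_k$ in (\ref{eqn-A.20}). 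Consequently $\tilde w'^2(r_k)r_k^{1/2}$ is $\asymp$-constant across $[r_*,r^*]$, the normalization $\sum_k\int_{r_k}^{r_{k+1}}\tilde w^2\asymp1$ forces its common value to be $\le Cn^{-1}$, and $\max_{[r_k,r_{k+1}]}|\tilde w|\asymp|\tilde w'(r_k)|(r_{k+1}-r_k)\le Cr_k^{1/4}n^{-3/2}$, which is (\ref{eqn-B.27}) for $\tilde w$, hence for $w$; the parts beyond the $(1\pm\epsilon)$-neighbourhoods of the turning points are treated as in Propositions~\ref{prop-A.6} and~\ref{prop-A.7}. The main obstacle is the bookkeeping for the error term $\rho$: one must verify, as in the proof of Proposition~\ref{prop-B.4}, that its microlocal concentration at frequencies away from the WKB frequency makes $\int_{r_k}^{r_{k+1}}\rho\,\tilde w'$ negligible against $\tilde w'^2(r_k)r_k^{1/2}$, so that the recursion is genuinely unperturbed --- granting this, the whole argument is a verbatim transcription of Appendix~\ref{sect-A}.
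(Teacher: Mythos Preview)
Your proposal is correct and follows essentially the same route as the paper: invoke Proposition~\ref{prop-B.4} to replace the pseudodifferential equation by an approximate second-order ODE, remove the first-order term $W_s\partial_r$ by a bounded multiplicative substitution $w=e^{\Phi}\tilde w$, and then rerun the oscillation/Wronskian analysis of Appendix~\ref{sect-A} on $\tilde w$. The paper's proof is terser---it simply lists which Propositions of Appendix~\ref{sect-A} carry over (including \ref{prop-A.9} and \ref{prop-A.10} for $l$ close to $n$, which you do not mention explicitly)---but the substance is the same. Your treatment of the error term $\rho$ via ``microlocal concentration away from the WKB frequency'' is a slightly different heuristic from the paper's, which argues more directly that the equation holds pointwise modulo $O(r^{-s})$ and that this could only spoil the recursion if $|w|+|w'|$ were itself $O(r^{-s'})$, a possibility ruled out a posteriori by the normalization; both are admittedly sketchy at the same spot.
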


\begin{proof}
We can get rid of the last term in expression (\ref{eqn-B.21}) by the substitution $w_{n,l}= e^{\Phi}\tilde{w}_{n,l}$ with bounded $\Phi$, because this term is $O(r^{-2}+\mu^2)=O(r^{-2})$ for $r\le r^*$  provided $|\mu |\le  c_0n^{-2}$, which is guaranteed by (\ref{eqn-B.10}) (and it is the only place where we need (\ref{eqn-B.2}). Then we can apply arguments of Appendix~\ref{sect-A} and derive (\ref{eqn-B.27}), uniform by $\beta \le c_0$. 

Namely, let us repeat with the minor modifications arguments of Appendix~\ref{sect-A}. Then

\begin{enumerate}[label=(\alph*), wide, labelindent=0pt]
\item\label{pf-B.5-a}
Note that we do not claim that $w_{n,l}(r)$ has exactly $(n- l -1)$ zeroes as $v_{n,l}(r)$ has\footnote{\label{foot-16} See Proposition~\ref{prop-A-1}.} .

\item\label{pf-B.5-b}
On the other hand, as $r\ge C$  (large enough constant) the distance between consecutive zeroes is at least $\epsilon ' r^{1/2}$  while for 
\begin{equation*}
C(l+1)^2 \le r \le (1-\epsilon)^{-1} \mu_{n,l}^{0\,-1}
\end{equation*}
this distance does not exceed $Cr^{1/2}$. More precisely, (\ref{eqn-A.10}) and (\ref{eqn-A.11}) hold\footnote{\label{foot-17} The latter, only as $l\ge C$.} and also Proposition~\ref{prop-A.3}, Statements~\ref{prop-A.3-i} and \ref{prop-A.3-ii}\footref{foot-17}  hold. 

\item\label{pf-B.5-c}
Also Proposition~\ref{prop-A.4} and Remark~\ref{rem-A.5} hold\footnote{\label{foot-18} Under extra assumption $r_k\ge C$.}.

\item\label{pf-2.5-d}
Also Propositions~\ref{prop-A.6},~\ref{prop-A.7},~\ref{prop-A.8} and hold~\ref{prop-A.9} \footnote{\label{foot-19} With all statements, referring to $r_*$ only as $r\ge C$.}.

\item\label{pf-2.5-e}
Also Propositions~\ref{prop-A.10} holds.
\end{enumerate}

\begin{note}
We need to remember that equation is fulfilled modulo $O(r^{-s})$ but it would matter only if $|w|+|w'|=O(r^{-s'})$ but then in the process we can prove that it is not the case.
\end{note}
\vskip-\baselineskip
\end{proof}

 \begin{proposition}\label{propo-B.6}
Under assumption \textup{(\ref{eqn-B.2})} the following estimate hold swith arbitrarily large exponent $s$:
\begin{gather}
|\mu_{n,l}^0(\beta) - \mu_{n,l}^0(\beta)|\le  C|\beta^2-\beta'^2| n^{-3}(l+1)^{-1};
 \label{eqn-B.28}\\
 \intertext{in particular}
\mu_{n,l}^0(\beta) \le \bar{\lambda}_{n} -  C\beta^2 n^{-3}(l+1)^{-1};
 \label{eqn-B.29}\\
\intertext{recall that  $K_l (\beta)<K_l(\beta')<K_l(0)$ for $0<\beta' <\beta$ and therefore}
 \mu_{n,l}^0(\beta) <\mu^0_{n,l} (\beta')< \bar{\lambda}_n
  \label{eqn-B.30}
  \end{gather}
 \end{proposition}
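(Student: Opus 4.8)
The plan is to differentiate the eigenvalue $\mu_{n,l}^0(\beta)$ in $t\coloneqq\beta^2$, apply the Feynman--Hellmann formula, and reduce everything to a two-sided estimate $\blangle\Lambda w_{n,l},h(t\Lambda)\Lambda w_{n,l}\brangle\asymp n^{-3}(l+1)^{-1}$ supplied by the pointwise bounds of Appendix~\ref{sect-B}. First, only $T_l$ in $K_l(\beta)=T_l(\beta)-r^{-1}$ depends on $\beta$, and from the form $T_l(\beta)=2\Lambda\bigl(\sqrt{1+4\beta^2\Lambda}+1\bigr)^{-1}$ of \textup{(\ref{eqn-B.8})}, with $\Lambda=\Lambda_{n,l}=-\partial_r^2+l(l+1)r^{-2}$ the radial operator of \textup{(\ref{eqn-B.9})}, a one-line computation gives the operator identities
\[
\Lambda=T_l(\beta)\bigl(1+tT_l(\beta)\bigr),\qquad
\frac{\partial T_l}{\partial t}=-\Lambda^2\,h(t\Lambda),\qquad
h(u)\coloneqq\frac{4}{\sqrt{1+4u}\,(\sqrt{1+4u}+1)^2},
\]
where $h$ is smooth, strictly decreasing, $h(0)=1$, $0<h(u)\le1$, and $h(u)\asymp u^{-3/2}$ as $u\to\infty$. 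Since $\partial_tT_l\le0$, the eigenvalue $\mu_{n,l}^0(\beta)$ is non-increasing in $t$, strictly so because $h>0$; this already yields \textup{(\ref{eqn-B.30})} and reproves the $\beta$-monotonicity in Proposition~\ref{prop-B.1}.

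Next, $\mu_{n,l}^0(\beta)$ is a (simple) eigenvalue of the half-line operator $K_l(\beta)$ depending real-analytically on $t$, with $\sL^2(\bR^+,\bC)$-normalized radial eigenfunction $w_{n,l}=w_{n,l}(\cdot\,;\beta)$, so by Feynman--Hellmann
\[
\frac{d}{dt}\mu_{n,l}^0(\beta)=\blangle w_{n,l},\tfrac{\partial T_l}{\partial t}w_{n,l}\brangle=-\blangle\Lambda w_{n,l},\,h(t\Lambda)\,\Lambda w_{n,l}\brangle\le0 .
\]
Granting $\blangle\Lambda w_{n,l},h(t\Lambda)\Lambda w_{n,l}\brangle\asymp n^{-3}(l+1)^{-1}$ uniformly for $0<\beta\le\tfrac2\pi$, integrating $\tfrac{d}{dt}\mu_{n,l}^0$ over $t\in[\beta'^2,\beta^2]$ gives \textup{(\ref{eqn-B.28})}, and letting $\beta'\downarrow0$ (where $\mu_{n,l}^0(0)=\bar\lambda_n$ is the known non-relativistic eigenvalue) gives \textup{(\ref{eqn-B.29})}; since the splitting just obtained is $\le Cn^{-3}\ll n^{-2}$, also $|\mu_{n,l}^0(\beta)|\asymp n^{-2}$, as used below.

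To prove this two-sided bound I would use the eigenequation $T_l(\beta)w_{n,l}=(r^{-1}+\mu_{n,l}^0)w_{n,l}$ together with $\Lambda=T_l(1+tT_l)$ (or, alternatively, Proposition~\ref{prop-B.4}) to write $\Lambda w_{n,l}=(r^{-1}+\mu_{n,l}^0)w_{n,l}+tT_l\bigl((r^{-1}+\mu_{n,l}^0)w_{n,l}\bigr)$, the second term being lower order, so that $\blangle\Lambda w_{n,l},h(t\Lambda)\Lambda w_{n,l}\brangle\asymp\blangle(r^{-1}+\mu_{n,l}^0)w_{n,l},h(t\Lambda)(r^{-1}+\mu_{n,l}^0)w_{n,l}\brangle$. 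By Proposition~\ref{prop-B.5} and \textup{(\ref{eqn-B.13})}, \textup{(\ref{eqn-B.14})} the function $w_{n,l}$ (hence $(r^{-1}+\mu_{n,l}^0)w_{n,l}$) concentrates on $r\asymp r_*\asymp(l+1)^2$, with $r^*\asymp n^2$, where the $\Lambda$-frequency is $\asymp r_*^{-1}$, so that $t\Lambda$ stays bounded there and $h(t\Lambda)\asymp1$; combined with $|\mu_{n,l}^0|\asymp n^{-2}\lesssim r_*^{-1}$ this reduces the estimate to
\[
\int_0^\infty r^{-2}|w_{n,l}(r)|^2\,dr\;\asymp\;n^{-3}\!\int_{r_*}^{r^*}\!r^{-3/2}\,dr\;\asymp\;n^{-3}r_*^{-1/2}\;\asymp\;n^{-3}(l+1)^{-1},
\]
where the pointwise upper bound $|w_{n,l}(r)|\le Cr^{1/4}n^{-3/2}$ of \textup{(\ref{eqn-B.27})} gives ``$\lesssim$'', the matching lower bound $|w_{n,l}(r)|\asymp r^{1/4}n^{-3/2}$ between consecutive zeros (Remark~\ref{rem-A.5}, carried over to the relativistic case in the proof of Proposition~\ref{prop-B.5}) gives ``$\gtrsim$'', and the rapid decay of $w_{n,l}$ off $[r_*,r^*]$ discards the remaining contributions.

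I expect the main obstacle to be making rigorous, for $l=0$ and $\beta$ close to $\tfrac2\pi$, the two ``morally clear'' steps above: that the $tT_l(\cdots)$ correction to $\Lambda w_{n,l}$ is truly lower order, and that $h(t\Lambda)$ is bounded below on the spectral bulk of $(r^{-1}+\mu_{n,l}^0)w_{n,l}$. In that regime $w_{n,0}$ is only H\"older continuous at the origin (behaving like $r^{\gamma(\beta)}$ with $\gamma(\beta)=1-O(\beta^2)\in(\tfrac12,1)$), so $\|\Lambda w_{n,0}\|^2$ can diverge and one must genuinely exploit the high-frequency decay $h(u)\asymp u^{-3/2}$; this requires tracking the precise near-origin behavior of $w_{n,0}$, $w_{n,0}'$, $w_{n,0}''$ from Appendix~\ref{sect-B} rather than working with $\Lambda w_{n,0}$ alone. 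A minor point is to justify that $\mu_{n,l}^0(\beta)$ remains a simple eigenvalue of $K_l(\beta)$ along $[0,\beta]$, or else to run the mean-value argument through the Lipschitz continuity of the ordered eigenvalues of the analytic family $t\mapsto K_l(\sqrt{t})$.
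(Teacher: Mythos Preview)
Your approach is the paper's approach: the entire proof in the paper is the Feynman--Hellmann identity
\[
\mu^0_{n,l}(\beta)-\mu^0_{n,l}(\beta')=\int_{\beta'^2}^{\beta^2}\blangle(\partial_{\gamma^2}K_l)\,w_{n,l},w_{n,l}\brangle\,d\gamma^2
\]
followed by the words ``and we use \textup{(\ref{eqn-B.27})}''. Your explicit formula $\partial_tT_l=-\Lambda^2h(t\Lambda)$ and the reduction to $\int r^{-2}|w_{n,l}|^2\,dr\asymp n^{-3}(l+1)^{-1}$ via the pointwise bound \textup{(\ref{eqn-B.27})} is precisely the content hidden behind that sentence.

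Two remarks. First, you are proving more than the paper does. You aim for a \emph{two-sided} bound on $\partial_t\mu^0_{n,l}$ because \textup{(\ref{eqn-B.29})} as written is a lower bound on the splitting $\bar\lambda_n-\mu^0_{n,l}(\beta)$. But Remark~\ref{rem-B.8}\ref{rem-B.8-ii} explicitly says the matching lower estimate \textup{(\ref{eqn-B.32})} is only ``probable''; so \textup{(\ref{eqn-B.29})} is almost certainly a typo for ``$\ge$'', which then follows from \textup{(\ref{eqn-B.28})} with $\beta'=0$ and needs no lower bound at all. Second, for the upper bound alone there is a cleaner route that sidesteps your honestly-flagged $l=0$ difficulty: from $tT_l^2+T_l=\Lambda$ one gets $\partial_tT_l=-T_l^2(1+4t\Lambda)^{-1/2}$, hence
\[
0\le -\partial_t\mu^0_{n,l}=\blangle T_l^2(1+4t\Lambda)^{-1/2}w_{n,l},w_{n,l}\brangle\le\|T_lw_{n,l}\|^2=\|(r^{-1}+\mu^0_{n,l})w_{n,l}\|^2,
\]
which is finite for every admissible $\beta$ (it is the $\sL^2$-norm of the right-hand side of the eigen\-equation) and is estimated by $Cn^{-3}(l+1)^{-1}$ directly from \textup{(\ref{eqn-B.27})} and the decay estimates \textup{(\ref{eqn-B.13})}--\textup{(\ref{eqn-B.14})}, without ever forming $\Lambda w_{n,l}$ or invoking the high-frequency decay of $h$.
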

 
 \begin{proof}
 \begin{gather}
\mu^0 _{n,l}(\beta) -\lambda^0_n = 
\int _{\beta'}^{\beta} \blangle (\partial_{\gamma^2} K_{n,l}(\gamma) w_{n,l},\, w_{n,l}\brangle \,d\gamma^2.
 \label{eqn-B.31}
\end{gather}
and we use (\ref{eqn-B.27}).
\end{proof}

\begin{corollary}\label{cor-B.7}
Consider $\mu_{n,l}^0$ as eigenvalues of the corresponding $3D$-operator. Then only two cases are possible:

\begin{enumerate}[label=(\roman*), wide, labelindent=0pt]
\item\label{cor-B.7-i}

If $\beta \le \epsilon_1$ with sufficiently small constant $\epsilon_1$, then all these eigenvalues $\mu^0_{n,l}$ are in the clusters $\cC^0_n$ of the width $\le C_0\beta^2 n^{-3}$, separated by gaps of the width $\asymp n^{-3}$. 

Each cluster contains exactly $n^2$ eigenvalues.

\item\label{cor-B.7-ii}
Otherwise, if $\beta> \epsilon_1$,  we can break $\mu^0_{n,l}$  into clusters $\cC^0_n$ of the width $\asymp n^{-3}$, separated by gaps of the width $\asymp n^{-3}$. In this case $\mu_{n,l}^0\in \cC_n$ for $l\ge C_0$  but for $l < C_0$ it could belong to $\cC^0_m$ with $n-C_0 \le m \le n$.

Each cluster $\cC^0_n$  contains  $\nu_n\in [n^2-C_1,\ n^2+C_1]$ eigenvalues.
\end{enumerate}
\end{corollary}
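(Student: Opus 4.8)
The plan is to reduce the whole statement to the sharp two-sided control of $\mu^0_{n,l}(\beta)$ already obtained in Propositions~\ref{prop-B.1} and~\ref{propo-B.6}, together with the elementary arithmetic of the numbers $\bar\lambda_n=-\frac{1}{4n^{2}}$. Taking $\beta'=0$ in (\ref{eqn-B.28}) and using $\mu^0_{n,l}(0)=\bar\lambda_n$ gives, for all $l=0,\ldots,n-1$,
\begin{gather*}
\bar\lambda_n-C\beta^{2}n^{-3}(l+1)^{-1}\le \mu^0_{n,l}(\beta)\le \bar\lambda_n ,
\end{gather*}
so every level-$n$ eigenvalue lies in $I_n:=[\bar\lambda_n-C\beta^{2}n^{-3},\bar\lambda_n]$, while $\bar\lambda_{n+1}-\bar\lambda_n=\frac{2n+1}{4n^{2}(n+1)^{2}}\asymp n^{-3}$; fix $c_0>0$ with $\bar\lambda_{n+1}-\bar\lambda_n\ge c_0 n^{-3}$ for all $n\ge 1$. (The regime $l\ge\epsilon n$, excluded from Proposition~\ref{prop-B.2}, is covered through (\ref{eqn-B.27}), so the bound above is uniform up to $l=n-1$.) Finally $\sum_{l=0}^{n-1}(2l+1)=n^{2}$ accounts for all level-$n$ eigenvalues with their multiplicities.

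For Statement~\ref{cor-B.7-i} I would fix $\epsilon_1$ so small that $C\epsilon_1^{2}<\frac{1}{2}c_0$. Then for $\beta\le\epsilon_1$ the intervals $I_n$ have length $\le C_0\beta^{2}n^{-3}$, are pairwise disjoint, and are separated by gaps $\ge\frac{1}{2}c_0 n^{-3}\asymp n^{-3}$. Declaring $\cC^0_n$ to be the set of eigenvalues in $I_n$, each contains precisely the $\mu^0_{n,l}(\beta)$, $l=0,\ldots,n-1$, with their multiplicities, hence exactly $n^{2}$ eigenvalues and no others; the finitely many small $n$ are trivial.

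For Statement~\ref{cor-B.7-ii}, with $\beta>\epsilon_1$, the $I_n$ may overlap, but only through their low-$l$ ends. I would fix an \emph{absolute} constant $C_0$ (permissible since $\beta\le 2/\pi$) large enough that $C\beta^{2}n^{-3}(l+1)^{-1}\le\frac{1}{8}c_0 n^{-3}$ for $l\ge C_0$ and $C\beta^{2}n^{-3}\le\bar\lambda_n-\bar\lambda_{n-C_0}$: then the ``bulk'' eigenvalues $\mu^0_{n,l}(\beta)$ with $l\ge C_0$ lie in $[\bar\lambda_n-\frac{1}{8}c_0 n^{-3},\bar\lambda_n]$, while the remaining $C_0^{2}$ (counted with multiplicity) ``low-$l$'' eigenvalues of level $n$ lie merely in $[\bar\lambda_{n-C_0},\bar\lambda_n]$. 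In the window $J_n:=(\bar\lambda_n+\frac{1}{8}c_0 n^{-3},\ \bar\lambda_{n+1}-\frac{1}{8}c_0(n+1)^{-3})$, of length $\asymp n^{-3}$, there are no bulk eigenvalues, and the only eigenvalues that can occur are low-$l$ ones from levels $m$ with $n+1\le m\le n+C_0$, i.e. at most $C_0^{3}$ of them; these cut $J_n$ into at most $C_0^{3}+1$ subintervals, the longest of which is eigenvalue-free and has length $\ge|J_n|/(C_0^{3}+1)\asymp n^{-3}$. Letting $b_n$ be its midpoint and $\cC^0_n$ the set of eigenvalues in $(b_{n-1},b_n]$, one checks $\cC^0_n\supseteq[\bar\lambda_n-\frac{1}{8}c_0 n^{-3},\bar\lambda_n]$ and $\cC^0_n\subseteq(\bar\lambda_{n-1},\bar\lambda_{n+1})$, so the clusters have width $\asymp n^{-3}$ and consecutive ones are separated by the eigenvalue-free interval around $b_n$, again of width $\asymp n^{-3}$; moreover $\mu^0_{n,l}(\beta)\in\cC^0_n$ for $l\ge C_0$, each $\mu^0_{n,l}(\beta)$ with $l<C_0$ (lying in $[\bar\lambda_{n-C_0},\bar\lambda_n]$) can be attached only to some $\cC^0_m$ with $n-C_0\le m\le n$, and a final count gives $\nu_n=(n^{2}-C_0^{2})+O(C_0^{3})\in[n^{2}-C_1,n^{2}+C_1]$ with an absolute $C_1$.

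The main obstacle is exactly the gap assertion in case (ii): unlike case (i), the spectrum genuinely interleaves near a given energy, so one cannot simply take the clusters to be the model intervals $I_n$ and must instead exhibit a true spectral gap inside each $J_n$. What makes this work is that $C_0$ can be chosen \emph{independently of} $n$, so only $O(1)$ stragglers can invade $J_n$ and a pigeonhole argument yields an eigenvalue-free subinterval of the optimal order $n^{-3}$; the remaining bookkeeping (which cluster a low-$l$ eigenvalue is assigned to, and $\nu_n=n^{2}+O(1)$) then follows from the two-sided bound $\mu^0_{n,l}(\beta)\in[\bar\lambda_{n-C_0},\bar\lambda_n]$ and $\sum_{l=0}^{n-1}(2l+1)=n^{2}$.
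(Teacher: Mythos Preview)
The paper states Corollary~\ref{cor-B.7} without proof, as an immediate consequence of Propositions~\ref{prop-B.1} and~\ref{propo-B.6}; your argument correctly supplies the details the paper leaves to the reader. The core input---taking $\beta'=0$ in~(\ref{eqn-B.28}) together with~(\ref{eqn-B.30}) to pin each $\mu^0_{n,l}(\beta)$ within $C\beta^{2}n^{-3}(l+1)^{-1}$ below $\bar\lambda_n$, and then comparing with $\bar\lambda_{n+1}-\bar\lambda_n\asymp n^{-3}$---is exactly what the word ``Corollary'' points to, and your treatment of case~\ref{cor-B.7-i} is the intended one-line deduction.

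For case~\ref{cor-B.7-ii} the paper offers no mechanism, and your pigeonhole construction is a legitimate and tidy way to manufacture the gaps: because $\beta\le 2/\pi$, an absolute $C_0$ confines all $l\ge C_0$ eigenvalues to $[\bar\lambda_n-\tfrac18 c_0 n^{-3},\bar\lambda_n]$, so each window $J_n$ can contain only the $O(C_0^{3})$ low-$l$ eigenvalues from levels $n+1,\dots,n+C_0$, and the longest eigenvalue-free subinterval of $J_n$ has length $\asymp n^{-3}$. The bookkeeping that follows (cluster membership of low-$l$ eigenvalues and the count $\nu_n=n^{2}+O(1)$) is correct. One cosmetic remark: your parenthetical about Proposition~\ref{prop-B.2} is unnecessary, since you never invoke it---the bound~(\ref{eqn-B.28}) already holds for all $l\le n-1$ without restriction.
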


\begin{remark}\label{rem-B.8}
\begin{enumerate}[label=(\roman*), wide, labelindent=0pt]

\item\label{rem-B.8-i}
We do not know what is the case as  (\ref{eqn-B.2}) ever occurs under assumption (\ref{eqn-B.2}) and it would be intersting to learn. However  it does not matter for our purposes: we simply change numeration of $\mu^0_{n,l}$ so that $\mu^0_{n,l}$ with
$n=l+1,\ldots, l+\nu_n$ belongs to $\cC^0_n$.

\item\label{rem-B.8-ii}
Probably, in addition to (\ref{eqn-B.28}) the opposite inequality also holds  
\begin{gather}
|\mu_{n,l}^0(\beta) - \mu_{n,l}^0(\beta)|\ge  \epsilon_0 |\beta^2-\beta'^2| n^{-3}(l+1)^{-1}.
 \label{eqn-B.32}
 \end{gather}

\item\label{rem-B.8-iii}
It would be interesting to derive an estimate for $\|w_{n,l}(\beta)-w_{n,l}(\beta')\|$  and a poinwise estimate for $|w_{n,l}(\beta)-w_{n,l}(\beta')|$.
\end{enumerate}
\end{remark}

\end{appendices}


\begin{thebibliography}{10}

\bibitem[Ba]{Bach}
V. Bach. \emph{Error bound for the Hartree-Fock energy of atoms and molecules\/}.
Commun. Math. Phys. 147:527--548 (1992).

\bibitem[GS]{GS}
G. M. Graf,  J. P Solovej. \emph{A correlation estimate with applications to quantum systems with Coulomb interactions\/}
Rev.~Math.~Phys., 6(5a):977--997 (1994).
Reprinted in The state of matter a volume dedicated to
E.~H.~Lieb, Advanced series in mathematical physics, 20,
M.~Aizenman and H.~Araki (Eds.), 142--166, World Scientific (1994).

\bibitem[HeL]{HeL}
O. J. Heilmann,  E. H. Lieb. \emph{The electron density near the nucleus of a
large atom\/}. Phys. Rev. A, 52(5):3628-3643 (1995).

\bibitem[IH]{Herbst}
I. W. Herbst. \emph{Spectral {T}heory of the operator
{$(p^2+m^2)^{1/2}-Ze^2/r$}}, Commun. Math. Phys. 53(3):285--294 (1977).

\bibitem[Ia]{Ia}
A.~Iantchenko.
\emph{The electron density in the intermediate scales}.
Communs Math. Physics, 
184:367--385(1997),

\bibitem[IaLS]{IaLS}
A.~Iantchenko, E. H. Lieb, H.~Siedentop.
\emph{Proof of the strong Scott conjecture for atomic and molecular cores related to Scott's correction\/}.
J. Reine Angew. Math. 472:177–195 (1996).

\bibitem[IaS]{IaS}
A.~Iantchenko, H.~Siedentop.
\emph{Asymptotic behavior of the one-particle density matrix of atoms at distances {$Z^ {-1}$} from the
nucleuss}.
{Math. Z.}, 236(4):787--796 (2001).
 

\bibitem[Ivr1]{monsterbook}
V. Ivrii. \emph{Microlocal Analysis, Sharp Spectral  Asymptotics and Applications}, Springer-Verlag, 2019.
\begin{enumerate}[label=-,leftmargin=*]
\item 
\emph{Volume II. Functional Methods and Eigenvalue Asymptotics}.
\item 
\emph{Volume V. Applications to Quantum Theory and Miscellaneous Problems}. 
\end{enumerate}
 see also \url{http://www.math.toronto.edu/ivrii/monsterbook.pdf}

\bibitem[Ivr2]{ivrii:relativistic-1}
V. Ivrii, \emph{Asymptotics of the Ground State Energy in the Relativistic Settings}.
Algebra i Analiz (Saint Petersburg Math. J.)\\
29 no. 3 (2018), 76--92, also in \cite{monsterbook}, Volume V, Springer-Verlag, 2019.


\bibitem[Ivr3]{ivrii:el-den}
V. Ivrii. \emph{Upper estimates for electronic density in heavy atoms and molecules\/}.
\href{https://arxiv.org/abs/1906.00611}{arXiv:1906.00611}



\bibitem[Ivr4]{ivrii:el-den-2}
V. Ivrii. \emph{Thomas-Fermi approximation to electronic density\/}.
\href{https://arxiv.org/abs/1911.03510}{arXiv:1911.03510}


\bibitem[LT]{LT}
E. H. Lieb, W. E. Thirring. \emph{Inequalities for the moments of the eigenvalues of the Schr\"odinger
Hamiltonian and their relation to Sobolev inequalities}, in Studies in Mathematical Physics (E. H. Lieb,
B. Simon, and A. S. Wightman, eds.), Princeton Univ. Press, Princeton, New Jersey, 1976, pp. 269--303.

\bibitem[LY]{Lieb-Yau}
E. H. Lieb,  H. T. Yau. \emph{The {S}tability and instability of relativistic matter\/}. Commun. Math. Phys. 118(2):  177--213 (1988).

 


\bibitem[S2]{S2}
H.~Siedentop. 
\emph{A proof of the strong Scott conjecture.\/}
In Quasiclassical Methods, The IMA Volumes in Mathematics and its Applications, Springer-Verlag, 1996, 
pp. 150--159.


\bibitem[SSS]{SSS}
J. P. Solovej, T. \O.  S{\o}rensen, W. L. Spitzer.
\emph{The relativistic Scott correction for atoms and molecules\/}.
 Comm. Pure Appl. Math., 63:39--118 (2010).

\end{thebibliography}
\end{document}